\numberwithin{equation}{section}
\newtheorem{theorem}{Theorem}[section]
\newtheorem{corollary}[theorem]{Corollary}
\newtheorem{conjecture}[theorem]{Conjecture}
\newtheorem{lemma}[theorem]{Lemma}
\newtheorem{proposition}[theorem]{Proposition}
\theoremstyle{definition}
\newtheorem{assumption}[theorem]{Assumption}
\newtheorem{definition}[theorem]{Definition}
\newtheorem{example}[theorem]{Example}
\newtheorem{notation}[theorem]{Notation}
\newtheorem{problem}[theorem]{Problem}
\newtheorem{remark}[theorem]{Remark}
\makeatletter\renewenvironment{proof}[1][\proofname] {\par\pushQED{\qed}\normalfont\topsep6\p@\@plus6\p@\relax\trivlist\item[\hskip\labelsep\bfseries#1\@addpunct{.}]\ignorespaces}{\popQED\endtrivlist}
\newcommand\al{\alpha}
\newcommand\be{\beta}
\newcommand\dd{\mathrm d}
\newcommand\De{\Delta}
\newcommand\de{\delta}
\newcommand\deq{\stackrel{\mathrm d}{=}}
\newcommand\eps{\varepsilon}
\newcommand\Ga{\Gamma}
\newcommand\ga{\gamma}
\newcommand\ka{\kappa}
\newcommand\La{\Lambda}
\newcommand\la{\lambda}
\newcommand\Si{\Sigma}
\newcommand\si{\sigma}
\renewcommand\d{~\mathrm d}
\renewcommand\phi{\varphi}
\renewcommand\rho{\varrho}
\newcommand\bs{\boldsymbol}
\newcommand\mbb{\mathbb}
\newcommand\mbf{\mathbf}
\newcommand\mc{\mathcal}
\newcommand\mf{\mathfrak}
\newcommand\mr{\mathrm}
\newcommand\ms{\mathscr}
\begin{document}

\title[Rigidity of Random Schr\"odinger Operators via Feynman-Kac Formulas]
{Spectral rigidity of random Schr\"odinger operators via Feynman-Kac formulas}
\author{Pierre Yves Gaudreau Lamarre}
\address{Princeton University, Princeton, NJ\space\space08544, USA}
\email{plamarre@princeton.edu}
\author{Promit Ghosal}
\address{Columbia University, New York City, NY\space\space10027, USA}
\email{pg2475@columbia.edu}
\author{Yuchen Liao}
\address{University of Michigan, Ann Arbor, MI\space\space48109, USA}
\email{yuchliao@umich.edu}
\subjclass[2010]{60G55, 47D08, 82B44}
\keywords{Random Schr\"odinger operators, Feynman-Kac formulas, number rigidity, eigenvalue point process,
self-intersection local time}
\thanks{P. Y. Gaudreau Lamarre was partially funded by a Gordon Y. S. Wu Fellowship.}
\maketitle

\begin{abstract}
We develop a technique for proving number rigidity (in the sense of Ghosh-Peres \cite{GP17})
of the spectrum of general random Schr{\"o}dinger operators (RSOs).
Our method makes use of Feynman-Kac formulas to estimate the variance
of exponential linear statistics of the spectrum in terms of self-intersection local times.

Inspired by recent results concerning Feynman-Kac formulas
for RSOs with multiplicative noise \cite{GaudreauLamarre,GaudreauLamarreShkolnikov,GorinShkolnikov}
by Gorin, Shkolnikov and the first-named author,
we use this method to prove number rigidity for a class of one-dimensional continuous RSOs of the form $-\frac12\De+V+\xi$,
where $V$ is a deterministic potential and $\xi$ is a stationary Gaussian noise.
Our results require only very mild assumptions on the domain on which the operator is defined,
the boundary conditions on that domain, the regularity of the potential $V$, and the singularity of the noise $\xi$.
\end{abstract}

\section{Introduction}

Let $I\subset\mbb R$ be an open interval (possibly unbounded), and let
$V:I\to \mathbb{R}$ be a deterministic potential.
Let $\xi:I\to\mbb R$ be a centered stationary Gaussian process with a covariance of the form
$\mbf E[\xi(x)\xi(y)]=\ga(x-y)$, where $\ga$ is an even function or Schwartz
distribution. (We refer to Section \ref{sec:Noise} for a formal definition.)
In this paper, we investigate the number rigidity of the eigenvalue point processes
of random Schr\"odinger operators (RSOs) of the form
\begin{align}\label{eq:Schr}
\hat{\mc H}_I  := -\tfrac{1}{2}\Delta + V+ \xi,
\end{align}
where $\hat{\mc H}_I$ acts on a subset of
functions $f:I\to\mbb R$ that satisfy some fixed boundary conditions
(if $I$ has a boundary).

\subsection{Random Schr\"odinger Operators}

The spectral theory of RSOs arises naturally in multiple problems in mathematical physics;
we refer to \cite{CarLac} for a general introduction to the subject. Looking more specifically
at one-dimensional continuous operators, RSOs of the form \eqref{eq:Schr} have found
applications in the study of random matrices and interacting particle systems, as well as stochastic partial
differential equations (SPDEs).

Indeed,
if $\xi=\xi_\be$ is a white noise with variance $1/\be$ for some $\be>0$
(see Example \ref{Example: White}) and $V(x)=x/2$,
then the spectrum of $\hat{\mc H}_{(0,\infty)}$
captures the asymptotic edge fluctuations of a large class of {\it $\be$-Ensembles}
\cite{BloemendalVirag,KrishnapurRiderVirag,RamirezRiderVirag} due to its connection
with the {\it Airy-$\be$ process} (see Section \ref{Section: Optimality}).
In another direction, the study of the solutions of SPDEs of the form
\begin{align}\label{eq:DiffEq}
\partial_tu=\tfrac12\Delta u - Vu - \xi u=-\hat{\mc H}_I u
\end{align}
is intimately connected to the spectral theory of $\hat{\mc H}_I $.
More specifically, the {\it localization} of $\hat{\mc H}_I $'s eigenfunctions
is expected to shed light on the geometry of {\it intermittent peaks} in \eqref{eq:DiffEq}
(e.g., \cite[Sections 2.2.3--2.2.4]{KonigBook} and references therein). We refer to
\cite{CarmonaMolchanov,Chen14,Smooth2,Smooth1} for a few examples
of papers where such ideas have been implemented when $\xi$ is a smooth, white, fractional,
or otherwise singular noise (see Examples \ref{Example: White}--\ref{Example: Bounded} for definitions of such noises).

\subsection{Spatial Conditioning and Number Rigidity}
\label{Sec: Spacial Conditioning}

Point processes are well-studied objects in probability \cite{DV08,Ka17},
due to their applications in many disciplines (e.g., \cite{ApplicationsPtProcess}).
One of the simplest point processes is the Poisson process, which is such that
the number of points in disjoint sets are independent. In contrast, for point processes
with strong correlations, the notion of {\it spatial conditioning} (i.e., the distribution of points
inside a bounded set conditional on the point configuration outside the set) is of interest. Pioneering work on this subject
includes the Dobrushin-Lanford-Ruelle (DLR) formalism (e.g.,
\cite[Sections 1.4-2.4]{Dereudre}).

In this paper, we are interested in a form of spatial conditioning
known as \emph{number rigidity} \cite{GP17}. A point process is said to be number rigid if for every
bounded set $A$, the configuration of points outside of $A$ determines the number
of points inside of $A$. We refer to \cite{AM80,HS13} for examples of early work on this kind of property.
In their seminal paper \cite{GP17} (see also \cite{G15}), Ghosh and Peres introduced (among other things) the notion
of number rigidity, and studied its occurrence in two classical point processes. 
Since then, number rigidity has been shown to have many interesting
applications in the theory of point processes (e.g., \cite{BPreprint,BQ17,BQ17_2,G15,SG16,PeresSly}),
and has developed into an active field of research.
We refer to \cite{BBNY16,BBNY17,Buf18,BDQ18,Chatterjee2019,GS19,GhoshKrishnapur,GP17,Olshan11}
for other notions related to number rigidity,
such as {\it higher order/linear rigidity}, {\it hyperuniformity}, {\it sub-extensivity}, {\it quasi-invariance/symmetry}, and {\it tolerance}.

A significant portion of the number rigidity literature is concerned with
uncovering sufficient conditions for certain point processes to be number rigid.
Though several methods have been found to prove number
rigidity (such as DLR equations \cite{DHLM}),
in most cases one proceeds by controlling the variance of carefully chosen sequences
of linear statistics (e.g.,  \cite{Buf16,BNQ18,G15,Ghosh17,GP17,RN18}). We recall that, given a point process $\mf X$ on a domain $\Si$,
the linear statistic associated with a test function $f:\Si\to\mbb R$ is defined as $\sum_{x\in\mf X}f(x).$
Following \cite[Theorem 6.1]{GP17},\footnote{We remark that, although \cite{GP17} pioneered the techniques of controlling the variance of linear the statistics for showing rigidity, this scheme is actually dated back to the works of Kolmogorov \cite{Kol41a,Kol41b} where he derived a sufficient condition for the \emph{linear rigidity} of any stationary sequence. We refer to \cite{BDQ18} and the references therein for more details on linear rigidity.}
if one concocts a sequence
$(f_n)_{n\in\mbb N}$ of test functions such that $f_n\to1$ uniformly on some set $A\subset\Si$
and the variance of $\sum_{x\in\mf X}f_n(x)$ vanishes, then it follows that the number of points
in $A$ is determined by $\mf X$'s configuration outside of $A$ (see Proposition \ref{Proposition: Variance Criterion}).

\subsection{Outline of Results and Method of Proof}
\label{sec:Outline}

To the best of our knowledge, the only RSO whose spectrum is
known to be number rigid is the operator
\[\hat{\mc H}_{(0,\infty)}=-\tfrac12\De+\tfrac x2+\xi_2\]
with a Dirichlet boundary condition at zero,
where $\xi_2$ is a white noise with variance $1/2$.
The proof of this \cite{Buf16} relies on the fact that the eigenvalues
of this operator generate the Airy-$2$ process, which is a determinantal
point process (see \eqref{Eq:Airy Kernel}).
In this context, our main motivation in this paper is to provide a unified framework
to study the number rigidity of the eigenvalues of general RSOs. As a first step in this
direction, we develop a new method of proving number rigidity for RSOs
by controlling the variance of exponential linear statistics
using Feynman-Kac formulas. Informally, our main result is as follows (we point to
Theorems \ref{Theorem: Main} and \ref{Theorem: Main Examples} for precise statements).

\begin{theorem}[Informal Statement]
\label{theorem: informal}
Suppose that $\hat{\mc H}_I$ acts on either the full space $I=\mbb R$,
the half-line $I=(0,\infty)$, or the bounded interval $I=(0,b)$, under some
general boundary conditions in the latter two cases (Assumption \ref{Assumption: Cases}).
Assume that the noise $\xi$ and the deterministic potential
$V$ satisfy mild technical conditions (Assumptions \ref{Assumption: Noise} and \ref{Assumption: Potential}).

On the one hand, when $I$ is unbounded, $\hat{\mc H}_I$'s spectrum is number rigid if
$V$ has sufficient growth at infinity (i.e., \eqref{Equation: General Growth Bounds}
and \eqref{eq:Vwhite}--\eqref{eq:VBound}).
On the other hand, if $I=(0,b)$, then $\hat{\mc H}_{(0,b)}$'s spectrum is always number rigid.
\end{theorem}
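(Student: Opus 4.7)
The plan is to apply the variance criterion of Proposition~\ref{Proposition: Variance Criterion}: to prove rigidity of the eigenvalue point process of $\hat{\mc H}_I$ inside a bounded Borel set $A\subset\mbb R$, it suffices to produce a sequence of test functions $(f_n)$ converging to $1$ uniformly on $A$ such that the variance of the linear statistic $\sum_k f_n(\la_k)$ tends to zero. My natural candidates are the \emph{exponential linear statistics} built from the heat trace,
\[Z_t:=\sum_k e^{-t\la_k}=\mathrm{Tr}\big(e^{-t\hat{\mc H}_I}\big),\qquad t>0.\]
Since $f_t(\la)=e^{-t\la}\to1$ uniformly on every bounded $A$ as $t\to0$, the whole problem reduces to establishing the quantitative bound $\mathrm{Var}(Z_t)\to0$ as $t\to0$.

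The heart of the argument is the Feynman-Kac representation of the trace furnished by \cite{GaudreauLamarre,GaudreauLamarreShkolnikov,GorinShkolnikov},
\[Z_t=\int_I p_t(x,x)\,\mbf E^{\mr{br}}_{x,x,t}\!\left[\exp\!\Big(-\!\int_0^tV(X_s)\dd s-\xi(\phi_X)\Big)\right]\dd x,\]
where $p_t(x,x)=(2\pi t)^{-1/2}$, the expectation is taken over a Brownian bridge $X$ of length $t$ tied down at $x$, and $\xi(\phi_X)=\int_0^t\xi(X_s)\dd s$ denotes the pairing of $\xi$ against the bridge's occupation measure. Since $\xi$ is a centered Gaussian with covariance $\ga$, I integrate it out explicitly: one bridge produces the weight $\exp\!\big(\tfrac12 L_t^{\mr{self}}(X)\big)$ with $L_t^{\mr{self}}(X)=\int_0^t\!\int_0^t\ga(X_s-X_u)\dd s\dd u$, while two independent bridges $(X,Y)$ additionally generate the cross factor $\exp\!\big(L_t^{\mr{mut}}(X,Y)\big)$ with $L_t^{\mr{mut}}(X,Y)=\int_0^t\!\int_0^t\ga(X_s-Y_u)\dd s\dd u$. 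Subtracting $\mbf E[Z_t]^2$ from $\mbf E[Z_t^2]$ then yields the clean identity
\[\mathrm{Var}(Z_t)=\iint_{I\times I}p_t(x,x)p_t(y,y)\,\mbf E_{X,Y}\!\left[W_t(X)W_t(Y)\big(e^{L_t^{\mr{mut}}(X,Y)}-1\big)\right]\dd x\dd y,\]
with $W_t(X):=\exp\!\big(-\!\int_0^tV(X_s)\dd s+\tfrac12 L_t^{\mr{self}}(X)\big)$, so the rigidity problem is now reduced to bounding this single double integral.

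Heuristically, $L_t^{\mr{mut}}(X,Y)$ is small unless the two bridges come close; since each bridge has spatial range of order $\sqrt t$ about its base point, the integrand concentrates near the diagonal $|x-y|\lesssim\sqrt t$, and standard Brownian scaling makes $e^{L_t^{\mr{mut}}}-1$ shrink as a positive power of $t$, which more than compensates the $p_t(x,x)p_t(y,y)=(2\pi t)^{-1}$ prefactor. On the bounded interval $I=(0,b)$ this gives a purely local estimate and suffices as soon as one has uniform control on the exponential moments of $L_t^{\mr{self}}(X)$, which is supplied by Assumption~\ref{Assumption: Noise} on $\xi$. On unbounded $I$ the same local bound applies near every base point, but to keep the $x,y$ integral finite at infinity one must additionally suppress the integrand as $|x|\to\infty$; this is exactly where the growth conditions \eqref{Equation: General Growth Bounds} and \eqref{eq:Vwhite}--\eqref{eq:VBound} enter, ensuring that $\exp\!\big(-\!\int V\big)$ dominates the self-intersection weight uniformly in $x$.

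The main obstacle will be the uniform control of exponential moments of $L_t^{\mr{self}}(X)$ when $\xi$ is singular (white, fractional, or rougher), a regime in which $L_t^{\mr{self}}$ must itself be renormalized and even the definition of $\xi(\phi_X)$ requires the refined Feynman-Kac formalism of \cite{GaudreauLamarre,GaudreauLamarreShkolnikov,GorinShkolnikov}. Matching these exponential-moment bounds with the decay of $e^{L_t^{\mr{mut}}}-1$, and then interfacing both with the growth of $V$ in the unbounded case, is the delicate technical step; once it is in place, transferring the variance decay back to number rigidity via Proposition~\ref{Proposition: Variance Criterion} is a routine check.
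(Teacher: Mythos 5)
Your proposal is essentially the paper's own strategy: pass from rigidity to vanishing variance of the heat trace via Proposition~\ref{Proposition: Variance Criterion}, use the Feynman-Kac formula to write the trace as an integral over bridges, integrate out $\xi$ to produce the self- and mutual-intersection weights, and arrive at exactly the variance formula of Lemma~\ref{Lemma: Variance Formula} (your $L^{\mathrm{self}}_t$, $L^{\mathrm{mut}}_t$, $W_t$ are the paper's $\mc C_t$, $\mc D_t$, $e^{\mc A_t+\mc B_t+\mc C_t}$, with $\mc B_t$ the boundary local times you omitted in {\bf Cases 2--3}). One small caveat: your heuristic that the integrand concentrates on $|x-y|\lesssim\sqrt t$ because $L^{\mathrm{mut}}_t$ vanishes when the bridges are far apart is only valid when $\ga$ has compact support (Lemma~\ref{Lemma: D Bound 1}); for general $\ga$ the paper instead relies on the \emph{uniform} small-$t$ decay $e^{\mc D_t}-1=O(t^{\mf d})$ (Lemma~\ref{Lemma: D Bound 2}) and lets the growth of $V$ alone control the $x,y$-integral, which is precisely why the two branches of \eqref{Equation: General Growth Bounds} impose different growth exponents on $V$.
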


Thus, one of the main advantages of the method developed in this paper
is that it applies under very general assumptions on the noise $\xi$,
the domain $I$, the boundary conditions on $I$, and
the regularity of the deterministic potential $V$.
However, in cases where the domain $I$ is unbounded, our method comes at the cost of
growth assumptions on $V$.

\begin{remark}
It is worth noting that our main result does {\it not} imply rigidity of the Airy-$\beta$ process for any $\be>0$,
since our growth condition in the case of white noise requires $V$ to be superlinear (see \eqref{eq:Vwhite}).
In fact, we prove that it is not possible to establish the rigidity of the Airy-$2$ process by using exponential
linear functionals (see Proposition \ref{Proposition: Optimality 2}). This suggests (at least for white noise) that,
while our growth conditions are not necessary for rigidity, they are the optimal conditions that can be obtained with
our semigroup method; see Section \ref{Section: Q of Optimality} for more details.
\end{remark}

The key steps in the proof of our main result are as follows.

\noindent{\bf (i)} We state general conditions (see
Assumptions \ref{Assumption: Noise}, \ref{Assumption: Cases},
and \ref{Assumption: Potential}; and Proposition \ref{prop: F-K})
under which exponential functionals $\mr e^{-t x}$ ($t>0$) of the spectrum of
$\hat{\mc H}_I$ admit a random Feynman-Kac representation.
This follows from a combination of
classical semigroup theory and the
work on Feynman-Kac formulas for RSOs with irregular Gaussian noise
\cite{GaudreauLamarre,GaudreauLamarreShkolnikov,GorinShkolnikov}
pioneered by Gorin and Shkolnikov.

\noindent{\bf (ii)} The Feynman-Kac formulas in {\bf(i)}
give an explicit representation of $\hat{\mc H}_I $'s semigroup in terms of elementary stochastic processes.
This allows to reformulate the vanishing of the variance of exponential linear statistics
in terms of a corresponding limit for the self-intersection local time of Brownian bridges on
$\mathbb{R}$, or reflected Brownian bridges on the half-line or bounded intervals (see
\eqref{Equation: Lp Local Time Scaling} and Theorem \ref{Theorem: Variance Bound}).

\noindent{\bf (iii)}
The main tool we use to control the Brownian bridge self-intersection local time consists of large deviations results
for the self-intersection local time of unconditioned Brownian motion on $\mathbb{R}$.
The latter has been studied extensively; we refer to \cite[Chapter~4]{Chen} and references therein for details.
To bridge the gap between the results on the self-intersection local time of Brownian bridges and the unconditioned Brownian motion,
we make use of couplings between reflected Brownian motions on different domains, and the absolute continuity of the midpoint of bridge processes with respect to
their unconditioned versions.

\noindent{\bf (iv)}
By combining {\bf(i)}--{\bf(iii)}, we obtain our main result (Theorem \ref{Theorem: Main}),
which consists of general sufficient conditions
(see \eqref{Equation: d Exponent} and \eqref{Equation: General Growth Bounds})
for the number rigidity of $\hat{\mc H}_I$'s spectrum in terms of Brownian self-intersection times
and the growth rate of $V$.
Then, in Theorem \ref{Theorem: Main Examples} we apply this result to
white, fractional, singular, and smooth noises.

\subsection{Comparison with Previous Results and Future Work}

Several techniques have been used thus far to control the variance of linear statistics
for the purpose of proving number rigidity. Prominent examples include
determinantal/Pfaffian or other integrable structure \cite{Buf16,BNQ18,G15,GL18,GP17}, translation
invariance and hyperuniformity \cite{Ghosh17}, and finite-dimensional approximations \cite{RN18}.
By using such methods, number rigidity has been established for
the zeroes of the {\it planar Gaussian Analytic Function}, the {\it Ginibre ensemble}, the {\it Sine-$\beta$ process} (for all $\be>0$),
the {\it Airy-$2$ process}, some {\it Bessel and Gamma point processes}, and more.
While some of the properties used in those papers
are present in some examples of the RSOs in \eqref{eq:Schr},
none of these results provide sufficient conditions that can be applied to
general RSOs.

In closing, we note that the work in this paper raises a number of interesting questions
for future research. Most notably, the number rigidity for the Airy-$\be$ process with $\be\neq2$ is still open.
As mentioned following the statement of Theorem \ref{theorem: informal}, we expect that
proving the rigidity of this point process (along with RSOs whose deterministic
potentials do not satisfy our growth conditions on $V$ in \eqref{Equation: General Growth Bounds}) will require
new insights; see Section \ref{Section: Q of Optimality} for more details.
We leave such questions to future papers.

In another direction, we note that the Feynman-Kac formula can be applied to a
much larger class of random operators than what we consider in this paper,
such as operators acting on $\mbb R^d$ for $d\geq2$
or discrete lattices. Although the technical details of the analysis carried
out in this paper rely crucially on objects that only exist in one dimension
(namely, continuous Brownian local time), we expect that a similar argument
to that outlined in steps {\bf (i)}--{\bf (iv)} of Section \ref{sec:Outline}
can be applied whenever the Feynman-Kac formula holds, and thus
the general methodology developed in this paper has the potential for substantial generalization.
Such extensions will be the subject of future works.

Finally, as mentioned in Section \ref{Sec: Spacial Conditioning}, for many point processes the
understanding of conditional distributions in spacial conditioning is more
sophisticated than number rigidity, such as tolerance in \cite{GP17} or
explicit conditional distributions in \cite{BPreprint,BQ17}. It would be
interesting to see if similar insights in the conditional configurations of
eigenvalues of general RSOs can be obtained. We leave this to future work.

\subsection*{Organization of Paper}
The rest of this paper is organized as follows. In Section~\ref{sec:SetUp},
we introduce the setup of the paper (including the Feynman-Kac formulas
at the heart of our method), state our main results, and discuss their
optimality.
Section~\ref{Section: Self-Intersection} contains estimates on the decay rate (for small time)
of self-intersection local times that are crucial in our method of proof.
In Section~\ref{Section: Variance Estimates}, we combine the estimates in
Section~\ref{Section: Self-Intersection} with our Feynman-Kac formulas to control the variance of exponential
linear statistics, thus proving our main results,
Theorems~\ref{Theorem: Main} and~\ref{Theorem: Main Examples}.
Section~\ref{Section: Optimality} demonstrates that the variance of exponential linear statistics
cannot be used to prove rigidity of the Airy-$2$ process.
Finally, Appendix \ref{Appendix: Some Stoch Analysis} provides an elementary estimate on stochastic analysis.   
 
\subsection*{Acknowledgments}

This work was initiated while the authors were in residence at the
{\it Centre international de rencontres math\'ematiques} (CIRM), in Marseille, France.
The organizers of the conference {\it Integrability and Randomness in Mathematical Physics and Geometry}
(April 2019) and the CIRM staff are gratefully acknowledged for fostering a productive research environment.

The authors thank Ivan Corwin, Vadim Gorin and Mykhaylo Shkolnikov
for insightful discussions and comments, and
Reda Chhaibi for helpful discussions. The authors thank an anonymous
referee for a careful reading of a previous version of this paper, and for many
detailed comments that helped substantially improve the presentation of the paper.

 \section{Setup and Main Results}\label{sec:SetUp}

This section is organized as follows.
In Section~\ref{sec:RigidityDef}, we give reminders for basic notions
regarding number rigidity.
In Section~\ref{sec:Noise}, we state our assumptions regarding the random
perturbation $\xi$ in \eqref{eq:Schr}, and we provide concrete examples
of noises that satisfy these assumptions.
In Section~\ref{sec:Operator}, we discuss the rigorous definition of
the operator $\hat{\mc H}_I$ and its eigenvalue point process.
In Section \ref{sec:FeynmanKac}, we introduce the Feynman-Kac
formulas with which we study exponential linear statistics of $\hat{\mc H}_I$'s
spectrum, including a statement that the linear statistics
in question are finite and well defined.
In Section~\ref{sec:Main}, we state our main results.
Finally, we discuss the optimality of our results and related open problems in Section \ref{Section: Q of Optimality}.

\subsection{Number Rigidity}\label{sec:RigidityDef}

Let $\La$ be a point process on $\mbb R$ (i.e., a random locally finite counting measure
on $\mbb R$). Given a Borel set $A\subset\mbb R$,
we let $\La(A)$ denote the number of points of $\La$ that are inside of $A$, that is,
\[\La(A):=\sum_{\la\in\La}\mbf 1_{\{\la\in A\}}.\]
More generally, for every function $f:\mbb R\to\mbb R$, we use
\[\La(f):=\sum_{\la\in\La}f(\la)\]
to denote the linear statistic associated with $f$.
For any Borel set $A\subset\mbb R$, we let $\ms F_\La(A):=\si\big\{\La(\bar A):\bar A\subset A\big\}$ denote
the $\si$-algebra generated by the configuration of points inside of $A$.

\begin{definition}[\cite{GP17}]
We say that $\La$ is {\bf number rigid} if $\La(A)$ is $\ms F_\La(\mbb R\setminus A)$-measurable
for every bounded Borel set $A\subset\mbb R$.
\end{definition}

We have the following simple sufficient condition for number rigidity:

\begin{proposition}[{\cite{GP17}}]
\label{Proposition: Variance Criterion}
Let $A\subset\mbb R$ be a bounded Borel set. Let $(f_n)_{n\in\mbb N}$ be a sequence of
functions satisfying the following conditions.
\begin{enumerate}
\item Almost surely, $|\La(f_n\mbf 1_{\bar A})|<\infty$ for every $n\in\mbb N$ and $\bar A\subset\mbb R$.
\item $|f_n-1|\to0$ as $n\to\infty$ uniformly on $A$.
\item $\mbf{Var}[\La(f_n)]\to0$ as $n\to\infty$.
\end{enumerate}
Then, $\La(A)$ is $\ms F_{\La}(\mbb R\setminus A)$-measurable.
\end{proposition}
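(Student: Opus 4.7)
The plan is to write $\La(f_n)$ as the sum of a piece supported on $A$ and a piece supported on $\mbb R\setminus A$, identify the almost-sure limit of each piece, and then rearrange so that $\La(A)$ emerges as the almost-sure limit of a sequence of $\ms F_{\La}(\mbb R\setminus A)$-measurable random variables. The deterministic mean $\mbf E[\La(f_n)]$ will serve as the bridge between the full statistic and its restriction to the complement of $A$.

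The first step is to use the local finiteness of $\La$ and the boundedness of $A$ to write $\La(f_n\mbf 1_A)$ as a finite sum over the a.s.\ finite set $\La\cap A$, and then to combine this with the uniform convergence in (2) to conclude
\[
\bigl|\La(f_n\mbf 1_A)-\La(A)\bigr|\le \|f_n-1\|_{L^\infty(A)}\,\La(A)\longrightarrow 0 \quad \text{almost surely}.
\]
The second step is to observe that conditions (1) and (3) together imply $\La(f_n)\in L^2$, so setting $c_n:=\mbf E[\La(f_n)]\in\mbb R$, the variance bound (3) gives $\La(f_n)-c_n\to 0$ in $L^2$; extracting a subsequence $(n_k)$ upgrades this to almost-sure convergence.

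Combining these two steps along the subsequence, condition (1) ensures that the decomposition
$\La(f_{n_k})=\La(f_{n_k}\mbf 1_A)+\La(f_{n_k}\mbf 1_{\mbb R\setminus A})$
is a well-defined identity of finite random variables, and we can rearrange it as
\[
c_{n_k}-\La(f_{n_k}\mbf 1_{\mbb R\setminus A}) \;=\; \La(f_{n_k}\mbf 1_A) \;-\; \bigl(\La(f_{n_k})-c_{n_k}\bigr)\;\xrightarrow[k\to\infty]{\text{a.s.}}\;\La(A).
\]
The left-hand side is $\ms F_{\La}(\mbb R\setminus A)$-measurable for each $k$ (the $c_{n_k}$ are deterministic and the random term only involves points of $\La$ outside $A$), so its almost-sure limit $\La(A)$ agrees a.s.\ with an $\ms F_{\La}(\mbb R\setminus A)$-measurable random variable, which is the desired conclusion. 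The only genuine subtlety here is ensuring that the decomposition does not produce an ill-defined expression of the form $\infty-\infty$; this is precisely the role of condition (1). Everything else is a routine interplay between uniform, $L^2$, and almost-sure convergence.
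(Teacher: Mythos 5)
Your proof is correct and follows essentially the same route as the paper's: the same three-piece decomposition of $\La(f_n)$ into its restriction to $A$, its restriction to $\mbb R\setminus A$, and the centering constant $\mbf E[\La(f_n)]$; the same subsequence extraction from the vanishing variance; and the same pointwise bound $\|f_n-1\|_{L^\infty(A)}\La(A)$ on the $A$-supported piece. The only difference is cosmetic (you solve the identity for the complement-supported term rather than writing $\La(A)$ as a three-term sum), and you are slightly more explicit than the paper about why condition (1) rules out an $\infty-\infty$ ambiguity.
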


Though Proposition \ref{Proposition: Variance Criterion}
is by now standard in the rigidity literature (e.g.,
\cite[Theorem 6.1]{GP17}), we nevertheless provide
its short proof for the reader's convenience:

\begin{proof}[Proof of Proposition \ref{Proposition: Variance Criterion}]
For every $n$, we can write
\[\La(A)=
\underbrace{\La(f_n)-\mbf E\big[\La(f_n)\big]}_{E_1^{(n)}}+
\underbrace{\La\big((1-f_n)\mbf 1_{A}\big)}_{E_2^{(n)}}
-\underbrace{\big(\La\big(f_n\mbf 1_{\mbb R\setminus A}\big)-\mbf E\big[\La(f_n)\big]\big)}_{E_3^{(n)}}.\]
Since the variance of $\La(f_n)$ vanishes, we can choose a sparse enough subsequence $(n_k)_{k\in\mbb N}$
along which $E_1^{(n_k)}\to0$ almost surely as $k\to\infty$. Next, we note that
\[|E_2^{(n_k)}|\leq\La(A)\left(\sup_{x\in A}|f_{n_k}(x)-1|\right),\]
which vanishes almost surely as $k\to\infty$ because $\La$ is locally finite and $A$ is bounded.
In particular, $E_3^{(n_k)}\to\La(A)$ as $k\to\infty$,
which completes the proof since $E_3^{(n)}$ is $\ms F_{\La}(\mbb R\setminus A)$-measurable for every $n$.
\end{proof}

\subsection{Noise}
\label{sec:Noise}

In this section, we describe the noises $\xi$ considered in this paper.
(Much of the notation in this section and Sections
\ref{sec:Operator} and \ref{sec:FeynmanKac} are
 directly inspired from \cite{GaudreauLamarre}.)
Let $\mr{PC}_c$ denote the set of functions $f:\mbb R\mapsto\mbb R$ that are
c\`adl\`ag and compactly supported.
We begin by introducing the covariance functions that characterize
the noise $\xi$.

\begin{definition}
\label{Definition: Covariance}
Let $\ga$ be an even function on $\mbb R$
or an even Schwartz distribution on $\mr{PC}_c$
(that is, $\langle f,\ga\rangle=\langle \tilde f,\ga\rangle$ for every $f\in\mr{PC}_c$, where $\tilde f(x):=f(-x)$)
such that
\begin{align}
\label{Equation: Semi-Inner-Product}
\langle f,g\rangle_\ga:=\int_{\mbb R^2}f(x)\ga(x-y)g(y)\d x\dd y,\qquad f,g\in \mr{PC}_c
\end{align}
is a semi-inner-product on $\mr{PC}_c$, that is,
\begin{enumerate}
\item \eqref{Equation: Semi-Inner-Product} is finite and well defined for every $f,g\in\mr{PC}_c$;
\item $(f,g)\mapsto\langle f,g\rangle_\ga$ is sesquilinear and symmetric; and
\item $\langle f,f\rangle_\ga\geq0$ for all $f\in\mr{PC}_c$.
\end{enumerate}
We denote the seminorm induced by $\langle \cdot,\cdot\rangle_\ga$ as
\begin{align*}
\|f\|_\ga:=\sqrt{\langle f,f\rangle}_\ga,\qquad f\in\mr{PC}_c.
\end{align*}
We say that $\ga$ is {\bf compactly supported} if there exists a compact
set $A\subset\mbb R$ such that $\langle f,\ga\rangle=0$ whenever $f(x)=0$
for every $x\in A$.
\end{definition}

\begin{remark}
In cases where $\ga$ is not an almost-everywhere-defined function,
the integral over $\ga(x-y)$ in \eqref{Equation: Semi-Inner-Product} may not be well defined.
In such cases, we interpret
\[\langle f,g\rangle_\ga:=\langle f,g*\ga\rangle=\langle f*\tilde g,\ga\rangle=\langle \tilde f*g,\ga\rangle=\langle f*\ga,g\rangle,\]
where $\langle\cdot,\cdot\rangle$ denotes the $L^2$ inner product and $*$ the convolution.
\end{remark}

Throughout this paper, we make the following assumption.

\begin{assumption}
\label{Assumption: Noise}
We assume that there exists
a $\ga$ as in Definition
\ref{Definition: Covariance} such that
\begin{align}
\label{Equation: gamma to Lp Bounds}
\|f\|_{\ga}^2\leq c_\ga\big(\|f\|_{q_1}^2+\cdots+\|f\|_{q_\ell}^2\big),\qquad f\in\mr{PC}_c
\end{align}
for some constant $c_\ga>0$ and $1\leq q_1,\ldots,q_\ell\leq2$, $\ell\in\mbb N$,
where $\|f\|_q:=\left(\int_\mbb R|f(x)|^q\d x\right)^{1/q}$ denotes the usual $L^q$ norm.\\
\end{assumption}

If Assumption \ref{Assumption: Noise} holds, then
it can be shown that there exists a centered Gaussian process
$\Xi:\mbb R\to\mbb R$ such that
\begin{enumerate}
\item almost surely, $\Xi(0)=0$ and $\Xi$ has continuous sample paths;
\item $\Xi$ has stationary increments; and
\item $\Xi$'s covariance is given by
\begin{align}
\label{Equation: Xi Covariance}
\mbf E[\Xi(x)\Xi(y)]=\begin{cases}
\langle\mbf 1_{[0,x)},\mbf 1_{[0,y)}\rangle_\ga&\text{if }x,y\geq0\\
\langle\mbf 1_{[0,x)},-\mbf 1_{[y,0)}\rangle_\ga&\text{if }x\geq0\geq y\\
\langle-\mbf 1_{[x,0)},\mbf 1_{[0,y)}\rangle_\ga&\text{if }y\geq0\geq x\\
\langle\mbf 1_{[x,0)},\mbf 1_{[y,0)}\rangle_\ga&\text{if }0\geq x,y.
\end{cases}
\end{align}
\end{enumerate}
Indeed, the existence of a Gaussian process with covariance
\eqref{Equation: Xi Covariance} follows from standard existence
theorems since $\langle\cdot,\cdot\rangle_\ga$ is a semi-inner-product;
the stationarity of increments follows from the fact that $\langle f,g\rangle_\ga$
remains unchanged if we replace $f$ and $g$ by their translates
$x\mapsto f(x-z)$ and $x\mapsto g(x-z)$ for some $z\in\mbb R$;
and a continuous version can be shown to exists thanks to
Kolmorogov's classical theorem for path continuity.
(We refer to \cite[Remark 2.19 and Section 3.3]{GaudreauLamarre}
for the full details of this argument.)
We think of the noise $\xi$ as the formal derivative of the
continuous stochastic process $\Xi$. More precisely:

\begin{definition}
\label{Definition: Stochastic Integral}
For every $f\in\mr{PC}_c$, we define
\begin{align}
\label{Equation: Stochastic Integral}
\xi(f):=\int_\mbb R f(x)\d\Xi(x),
\end{align}
where $\dd\Xi$ denotes stochastic integration with respect to $\Xi$ interpreted in the
pathwise sense of Karandikar \cite{Karandikar}; we refer to \cite[Section 3.2.1]{GaudreauLamarre}
for the details of this construction.
\end{definition}

\begin{remark}
The properties of $\xi$ as defined above that we need in this paper are that
\begin{enumerate}
\item for every realization of $\Xi$, the map $\xi:\mr{PC}_c\to\mbb R$
is measurable with respect to the uniform topology; and
\item $f\mapsto\xi(f)$ is a centered Gaussian process
on $\mr{PC}_c$ with covariance
\begin{align}
\label{Equation: Covariance Function}
\mbf E[\xi(f)\xi(g)]=\langle f,g\rangle_\ga.
\end{align}
\end{enumerate}
A proof that \eqref{Equation: Stochastic Integral}
satisfies these properties is the subject of \cite[Section 3.2.1]{GaudreauLamarre}.
\end{remark}

We now present several examples of noises covered by Assumption \ref{Assumption: Noise}.
We refer to Lemma \ref{Lemma: Gamma to Lp Bounds} in this paper for a proof
that the examples below satisfy \eqref{Equation: gamma to Lp Bounds}.

\begin{example}[\bf White\textnormal]
\label{Example: White}
Let $\si>0$ be fixed. We say that $\xi$ is a {\bf white noise} with
variance $\si^2$ if $\ga=\si^2\de_0$, where $\de_0$ denotes the delta Dirac distribution.
In this case, the covariance is simply the $L^2$ inner product
\[\mbf E\big[\xi(f)\xi(g)\big]=\si^2\langle f,g\rangle,\]
and $\xi$ can be constructed as the pathwise stochastic integral
\[\xi(f):=\si\int_\mbb R f(x)\d W(x)\]
with respect to a two-sided Brownian motion $W$.
\end{example}

\begin{example}[\bf Fractional\textnormal]
\label{Example: Fractional}
Let $H\in(\tfrac12,1)$ and $\si>0$ be fixed. We say that $\xi$ is a {\bf fractional noise}
with Hurst parameter $H$ and variance $\si^2$ if 
\[\ga(x):=\si^2H(2H-1)|x|^{2H-2},\]
in which case
\[\mbf E\big[\xi(f)\xi(g)\big]=\si^2H(2H-1)\int_{\mbb R^2}\frac{f(x)g(y)}{|x-y|^{2-2H}}\d x\dd y.\]
This noise can be constructed as the pathwise stochastic integral
\[\xi(f):=\si\int_\mbb R f(x)\d W^H(x),\]
where $W^H$ is a two-sided fractional Brownian motion with Hurst parameter $H$.
\end{example}

\begin{example}[\bf $\bs{L^p}$-Singular\textnormal]
\label{Example: Singular}
Let $1\leq p<\infty$. We say that $\xi$ is an {\bf $\bs{L^p}$-singular
noise} if $\ga$ can be decomposed as
\[\ga=\ga_1+\ga_2,\]
where $\ga_1\in L^p(\mbb R)$, and $\ga_2$ is uniformly bounded.
We can view $L^p$-singular noise as a generalization of
fractional noise, as $\ga_1$ may have point singularities, such as $\ga_1(x)\sim |x|^{-\mf e}$ as $x\to0$
for some $\mf e\in(0,1)$, or $\ga_1(x)\sim (-\log |x|)^{\mf e}$ as $x\to0$ for some $\mf e>0$.
\end{example}

\begin{example}[\bf Bounded\textnormal]
\label{Example: Bounded}
We say that $\xi$ is a {\bf bounded noise} if $\ga$ is uniformly bounded.
In many such cases $\xi$ gives rise to a pointwise-defined
Gaussian process on $\mbb R$ with covariance function $\mbf E[\xi(x)\xi(y)]=\ga(x-y)$,
whence we can simply define
\begin{align}
\label{eq:pointwise}
\xi(f):=\int_\mbb R f(x)\xi(x)\d x.
\end{align}
\end{example}

\subsection{Operator and Eigenvalue Point Process}\label{sec:Operator}

We now discuss the definition of the operator $\hat{\mc H}_I$
and its spectrum. We make the following two
assumptions on the domain/boundary conditions of the operator,
and the deterministic potential $V$:

\begin{assumption}
\label{Assumption: Cases}
We consider three types of domains $I\subset\mbb R$ on which $\hat{\mc H}_I$ acts:
the full space $I=\mbb R$ ({\bf Case 1}), the half-line $I=(0,\infty)$ ({\bf Case 2}),
and the bounded interval $I=(0,b)$ for some $b>0$ ({\bf Case 3}).

In {\bf Case 2}, we consider Dirichlet and Robin boundary conditions
at the origin:
\begin{align}\label{Equation: Case 2 Boundary}
\begin{cases}
f(0)=0&\text{(Dirichlet)}\\
f'(0)+\al f(0)=0&\text{(Robin)}
\end{cases}
\end{align}
where $\al\in\mbb R$ is fixed.

In {\bf Case 3}, we consider the Dirichlet, Robin, and mixed boundary conditions at
the endpoints $0$ and $b$:
\begin{align}\label{Equation: Case 3 Boundary}
\begin{cases}
f(0)=f(b)=0&\text{(Dirichlet)}\\
f'(0)+\al f(0)=-f'(b)+\be f(b)=0&\text{(Robin)}\\
f'(0)+\al f(0)=f(b)=0&\text{(Mixed 1)}\\
f(0)=-f'(b)+\be f(b)=0&\text{(Mixed 2)}\\
\end{cases}
\end{align}
where $\al,\be\in\mbb R$ are fixed.
\end{assumption}

\begin{assumption}\label{Assumption: Potential}
$V:I\to\mbb R$ is bounded below and locally integrable on $I$'s closure.
If $I$ is unbounded (i.e., {\bf Cases 1 \& 2}), then we also assume that
\begin{align*}
\lim_{|x|\to\infty}\frac{V(x)}{\log|x|}=\infty.
\end{align*}
\end{assumption}

We may now provide the following definition for the operator $\hat{\mc H}_I$,
which is a direct application of \cite[Proposition 2.9]{GaudreauLamarre},
and allows for a rigorous interpretation of the deterministic operator
$-\tfrac12\De+V$ plus noise $\xi$ through sesquilinear forms
(see also \cite{BloemendalVirag,FukushimaNakao,Minami,RamirezRiderVirag}):

\begin{proposition}
\label{Prop: operator}
Given a fixed choice of domain $I$, boundary conditions,
and potential $V$ all satisfying Assumptions \ref{Assumption: Cases}
and \ref{Assumption: Potential}, let $\mc E$ denote the sesquilinear form
of the corresponding deterministic Schr\"odinger operator $-\tfrac12\De+V$,
and let $D(\mc E)\subset L^2(I)$ be the associated form domain.
(We refer to \cite[Definition 2.6]{GaudreauLamarre} for a precise
statement of these objects in all cases outlined in Assumption
\ref{Assumption: Cases} and \ref{Assumption: Potential}, and to \cite[Section 7.5 and Example 7.5.3]{Simon} for
the standard operator theoretic terminology used here.)

Suppose that Assumption \ref{Assumption: Noise} holds,
and let $\xi$ be as in Definition \ref{Definition: Stochastic Integral}.
With probability one,
there exists a unique self-adjoint operator $\hat{\mc H}_I$
with dense domain $D(\hat{\mc H}_I)\subset L^2$
such that
\begin{enumerate}
\item $D(\hat{\mc H}_I)\subset D(\mc E);$
\item $\langle f,\hat{\mc H}_Ig\rangle=\mc E(f,g)+\xi(fg)$ for every $f,g\in D(\hat{\mc H}_I)$; and
\item $\hat{\mc H}_I$ has compact resolvent.
\end{enumerate}
\end{proposition}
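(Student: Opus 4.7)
The plan is to construct $\hat{\mc H}_I$ as a form-perturbation of the deterministic operator $-\tfrac12\De+V$ via the KLMN theorem. Since the unperturbed form $\mc E$ is closed, densely-defined, and bounded below on $D(\mc E)$ by Assumptions~\ref{Assumption: Cases}--\ref{Assumption: Potential}, the task reduces to (a) giving rigorous meaning to the quadratic form $f\mapsto\xi(f^2)$ on $D(\mc E)$, and (b) showing that, almost surely, this form is infinitesimally $\mc E$-bounded. Property (iii) (compact resolvent) then follows from the growth of $V$ together with the stability of compact resolvent under infinitesimally form-bounded perturbations.

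First I would extend $\xi$, originally defined on $\mr{PC}_c$, to a continuous linear functional on a larger class of test functions. The crucial input is \eqref{Equation: gamma to Lp Bounds}: the variance $\mbf E[\xi(h)^2]=\|h\|_\ga^2$ is dominated by $c_\ga(\|h\|_{q_1}^2+\cdots+\|h\|_{q_\ell}^2)$, with $q_i\in[1,2]$. A Gaussian concentration bound (Borell-TIS) applied along a countable dense family then shows that, almost surely, $h\mapsto\xi(h)$ admits a continuous extension to the Banach space obtained as the completion of $\mr{PC}_c$ under the norm $\|h\|_{q_1}+\cdots+\|h\|_{q_\ell}$. This space contains every $h\in\bigcap_i L^{q_i}(I)$, so $\xi(f^2)$ becomes well-defined as soon as $f^2$ lies in each $L^{q_i}$.

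Next I would establish the infinitesimal form-bound: for every $\eps>0$ there exists a random $C_\eps<\infty$ such that, almost surely,
\begin{align*}
|\xi(f^2)|\leq \eps\,\mc E(f,f)+C_\eps\|f\|_2^2\qquad\text{for every }f\in D(\mc E).
\end{align*}
This is the main technical obstacle. The argument combines the extension of $\xi$ from the previous step with the one-dimensional Sobolev inequality $\|f\|_\infty^2\leq 2\|f'\|_2\|f\|_2$, which bounds each $\|f^2\|_{q_i}$ by a product of powers of $\|f'\|_2$ and $\|f\|_2$. Young's inequality with a tunable parameter then absorbs the kinetic piece into $\eps\,\mc E(f,f)$ while the remainder is controlled by $C_\eps\|f\|_2^2$. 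In the unbounded-domain cases, the tail of $\xi$ for $|x|$ large must additionally be dominated by the potential $V$, which is precisely where Assumption~\ref{Assumption: Potential} enters: the growth $V(x)/\log|x|\to\infty$ outruns the $O(\sqrt{\log|x|})$ fluctuations of the continuous primitive $\Xi$ at infinity, so that the localisation error produced by truncating $f^2$ to a compact set can be reabsorbed.

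Once infinitesimal form-boundedness is established, the KLMN theorem yields a unique closed, densely-defined, semibounded form on $D(\mc E)$ whose associated self-adjoint operator is the desired $\hat{\mc H}_I$; this delivers (i) and (ii). For (iii), Assumption~\ref{Assumption: Potential} together with standard min-max arguments (plus Rellich's compactness theorem in the unbounded cases) shows that the embedding $D(\mc E)\hookrightarrow L^2(I)$ is compact, so $\mc E$ has compact resolvent. Compactness of the resolvent is preserved under infinitesimally form-bounded perturbations since the form domain is unchanged and the graph norms are equivalent, so $\hat{\mc H}_I$ inherits this property. The whole programme is a direct application of \cite[Proposition 2.9]{GaudreauLamarre}, once Assumption~\ref{Assumption: Noise} above is recognised as its input hypothesis.
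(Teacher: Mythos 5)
The paper offers no proof of Proposition~\ref{Prop: operator} at all: the statement itself declares it ``a direct application of \cite[Proposition 2.9]{GaudreauLamarre},'' and the extension of $\xi$ to products of form-domain elements is delegated to \cite[Remark~2.7]{GaudreauLamarre} in the remark that follows. Your proposal correctly identifies this citation as the terminal step, and the KLMN-plus-form-bound sketch you supply is a reasonable reconstruction of what that reference does; the one place to be careful is the claim that $\Xi$ has $O(\sqrt{\log|x|})$ fluctuations at infinity---that estimate holds for the \emph{oscillation of $\Xi$ over unit intervals} (by Borell-TIS and a union bound), not for the global size $|\Xi(x)|$, which generically grows polynomially, and it is the local oscillation that the superlogarithmic growth of $V$ in Assumption~\ref{Assumption: Potential} must dominate.
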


\begin{remark}
Implicit in the statement of Proposition \ref{Prop: operator}
is the claim that the noise $\xi$ can be suitably extended to
products of functions in the form domain $D(\mc E)$. As argued in
\cite[Remark 2.7]{GaudreauLamarre}, this is not a problem.
\end{remark}

With this result in hand, we immediately obtain the following definition
of $\hat{\mc H}_I$'s spectrum by the variational principle
(e.g., \cite[Theorems XIII.2 and XIII.64]{ReedSimon}):

\begin{corollary}
\label{cor: point process}
Under the same hypotheses and notations as Proposition \ref{Prop: operator},
there exists a random orthonormal basis $(\Psi_k)_{k\in\mbb N}$
of $L^2(I)$ and a point process $\La=(\La_k)_{k\in\mbb N}$
on the real line $\mbb R$ such that, almost surely,
\begin{enumerate}
\item $-\infty<\La_1\leq\La_2\leq\La_3\leq\cdots\nearrow+\infty$; and
\item for every $k\in\mbb N$,
\[\La_k=\inf_{\substack{f\in D(\mc E),~\|f\|_2=1\\
f\perp \Psi_1,\ldots\Psi_{k-1}}}\mc E(f,f)+\xi(f^2),\]
where $\Psi_k$ achieves the above infimum.
\end{enumerate}
\end{corollary}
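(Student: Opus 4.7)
The plan is to deduce Corollary \ref{cor: point process} from the spectral theorem for self-adjoint operators with compact resolvent, together with the min-max principle in its quadratic-form version. All the structural inputs I need are already packaged in Proposition \ref{Prop: operator}: on an event of full probability $\hat{\mc H}_I$ is self-adjoint, has compact resolvent, its operator domain is contained in the form domain $D(\mc E)$, and the quadratic form $\mc Q(f):=\mc E(f,f)+\xi(f^2)$ extends to $D(\mc E)$ (the measurability of this extension in the noise, and its finiteness, being the content of \cite[Remark 2.7]{GaudreauLamarre}).

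First, I would observe that compactness of the resolvent $(\hat{\mc H}_I-z)^{-1}$ for some $z$ in the resolvent set, combined with self-adjointness, implies via the spectral theorem for compact normal operators that the spectrum of $\hat{\mc H}_I$ is purely discrete, each eigenvalue has finite multiplicity, and the eigenvalues accumulate only at $\pm\infty$. The quadratic form associated with $\hat{\mc H}_I$ is closed and semibounded by construction (this is built into the Friedrichs-type procedure of \cite[Proposition 2.9]{GaudreauLamarre}), so $\hat{\mc H}_I$ is bounded below; hence $-\infty<\La_1$ and the eigenvalues in fact accumulate only at $+\infty$. This yields item (1) of the corollary as well as the existence of an orthonormal basis $(\Psi_k)_{k\in\mbb N}$ of $L^2(I)$ consisting of eigenfunctions with $\hat{\mc H}_I\Psi_k=\La_k\Psi_k$.

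Second, I would invoke the min-max principle in the form of \cite[Theorems XIII.2 and XIII.64]{ReedSimon}, which applies to a semibounded self-adjoint operator with purely discrete spectrum and characterises its $k$-th eigenvalue variationally on the form domain:
\begin{align*}
\La_k \;=\; \inf_{\substack{M\subset D(\mc E)\\ \dim M=k}}\;\sup_{\substack{f\in M\\ \|f\|_2=1}}\mc Q(f).
\end{align*}
The sequential (Courant-Fischer) version of this principle asserts equivalently that for any $k\in\mbb N$,
\begin{align*}
\La_k \;=\; \inf_{\substack{f\in D(\mc E),\,\|f\|_2=1\\ f\perp\Psi_1,\ldots,\Psi_{k-1}}}\mc Q(f),
\end{align*}
and that the infimum is attained at $f=\Psi_k$, which is precisely item (2). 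Iterating this observation inductively in $k$ selects a valid orthonormal sequence of eigenfunctions, and completeness in $L^2(I)$ follows because the resolvent is compact.

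The main subtlety, rather than a genuine obstacle, is justifying the transition from the identity $\langle f,\hat{\mc H}_I g\rangle=\mc E(f,g)+\xi(fg)$ on $D(\hat{\mc H}_I)$ (as stated in Proposition \ref{Prop: operator}) to the expression $\mc Q(f)=\mc E(f,f)+\xi(f^2)$ on the larger form domain $D(\mc E)$. This extension must be performed almost surely and measurably in $\xi$, but it is exactly the content of \cite[Remark 2.7]{GaudreauLamarre} together with the closability of the form that underlies \cite[Proposition 2.9]{GaudreauLamarre}, so citing those references resolves the point. Once this is in place, the proof is a direct application of the cited Reed-Simon theorems, and nothing random needs to be re-examined beyond the full-probability event on which Proposition \ref{Prop: operator} itself holds.
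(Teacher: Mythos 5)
Your proposal is correct and takes essentially the same route as the paper: the paper treats the corollary as an immediate consequence of Proposition \ref{Prop: operator} via the variational principle, citing precisely the Reed--Simon theorems you invoke, and your write-up simply fills in the standard steps (discrete spectrum from compact resolvent and self-adjointness, semiboundedness from the closed form, min-max on the form domain) that the paper leaves implicit.
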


\subsection{Semigroup and Feynman-Kac Formula}\label{sec:FeynmanKac}

We now discuss the semigroup theory of the operator defined in
Proposition \ref{Prop: operator}, and argue that exponential
statistics of its eigenvalue point process defined in Corollary
\ref{cor: point process} can be studied with a Feynman-Kac
formula.
Before we can do this,
we must introduce some stochastic processes that form the basis of the
Feynman-Kac formulas that we use:

\begin{definition}\label{Definition: Local Time Etc}
We use $B$ to denote a standard Brownian motion taking values in $\mbb R$,
$X$ to denote a reflected standard Brownian motion taking values in $(0,\infty)$,
and $Y$ to denote a reflected standard Brownian motion taking values in $(0,b)$.
Throughout this paper, we use $Z$ to denote one of these three processes,
depending on which case in Assumption \ref{Assumption: Cases} is being considered, that is
\begin{align}
\label{Equation: Z}
Z=\begin{cases}
B&\text{({\bf Case 1})}\\
X&\text{({\bf Case 2})}\\
Y&\text{({\bf Case 3})}.
\end{cases}
\end{align}

For every $t>0$ and $x,y\in I$, we denote by
\[Z^x:=\big(Z|Z(0)=x\big)\]
the process started at $x$, and we denote the bridge process from $x$ to $y$ in time $t$ by
\[Z^{x,y}_t:=\big(Z|Z(0)=x\text{ and }Z(t)=y\big).\]
We sometimes use $\mbf E^x$ and $\mbf E^{x,y}_t$ to denote the expected value with respect to the law of
$Z^x$ and $Z^{x,y}_t$, respectively.

We denote the Gaussian kernel by
\begin{align}
\label{Equation: Gaussian Kernel}
\ms G_t(x):=\frac{\mr e^{-x^2/2t}}{\sqrt{2\pi t}},\qquad t>0,~x\in\mbb R.
\end{align}
We denote the transition kernel of $Z$ by $\Pi_Z$, that is, for every $t>0$ and $x,y\in I$
\begin{align*}
\Pi_Z(t;x,y):=\begin{cases}
\ms G_t(x-y)&\text{({\bf Case 1})}\\
\ms G_t(x-y)+\ms G_t(x+y)&\text{({\bf Case 2})}\\
\sum_{z\in2b\mbb Z\pm y}\ms G_t(x-z)&\text{({\bf Case 3})}.
\end{cases}
\end{align*}

For any $0\leq s\leq t$, we let $a\mapsto L^a_{[s,t]}(Z)$ ($a\in I$) denote the continuous version
of the local time of $Z$ (or its conditioned versions) collected on $[s,t]$, i.e.,
\begin{align}
\label{Equation: Local Time Definition}
\int_s^tf\big(Z(u)\big)\d u=\int_I L^a_{[s,t]}(Z)f(a)\d a=\langle L_{[s,t]}(Z),f\rangle
\end{align}
for any measurable function $f:I\to\mbb R$
(see, e.g., \cite[Chapter VI, Corollary 1.6 and Theorem 1.7]{RevuzYor}
for the existence and continuity of local times).
We use the shorthand $L_t(Z):=L_{[0,t]}(Z)$.

As a matter of convention, if $Z=X$ or $Y$, then we distinguish the boundary local time from the above, which we denote as
\[\mf L^c_{[s,t]}(Z):=\lim_{\eps\to0}\frac1{2\eps}\int_s^t\mbf 1_{\{c-\eps<Z(u)<c+\eps\}}\d u\]
for $c\in\partial I$ (i.e., $c=0$ if $Z=X$ or $c\in\{0,b\}$ if $Z=Y$), also with the shorthand $\mf L^c_t(Z):=\mf L^c_{[0,t]}(Z)$.
We refer to \cite[Chapter VI, Corollary 1.9]{RevuzYor} for the relation between this
quantity and the local time as defined in \eqref{Equation: Local Time Definition}.
\end{definition}

We are now finally in a position to state our Feynman-Kac formulas.

\begin{definition}\label{Definition: Feynman-Kac}
In {\bf Cases 2 \& 3}, let us define the quantities $\bar\al$ and $\bar\be$ as
\[\bar\al:=\begin{cases}
-\infty&\text{({\bf Case 2}, Dirichlet)}\\
\al&\text{({\bf Case 2}, Robin)}
\end{cases}
\qquad
(\bar\al,\bar\be):=\begin{cases}
(-\infty,-\infty)&\text{({\bf Case 3}, Dirichlet)}\\
(\al,\be)&\text{({\bf Case 3}, Robin)}\\
(\al,-\infty)&\text{({\bf Case 3}, Mixed 1)}\\
(-\infty,\be)&\text{({\bf Case 3}, Mixed 2)}\\
\end{cases}\]
where $\al,\be\in\mbb R$ are as in \eqref{Equation: Case 2 Boundary} and \eqref{Equation: Case 3 Boundary}.
For every $t>0$ and $x,y\in I$, we define the random kernel
\[{\hat K}(t;x,y):=\begin{cases}
\Pi_B(t;x,y)\mbf E^{x,y}_t\big[\mr e^{-\langle L_t(B),V\rangle-\xi(L_t(B))}\big]&\text{({\bf Case 1})}
\vspace{5pt}\\
\Pi_X(t;x,y)\mbf E^{x,y}_t\big[\mr e^{-\langle L_t(X),V\rangle-\xi(L_t(X))+\bar\al\mf L^{0}_t(X)}\big]&\text{({\bf Case 2})}
\vspace{5pt}\\
\Pi_Y(t;x,y)\mbf E^{x,y}_t\big[\mr e^{-\langle L_t(Y),V\rangle-\xi(L_t(Y))+\bar\al\mf L^{0}_t(Y)+\bar\be \mf L^b_t(Y)}\big]&\text{({\bf Case 3})}
\end{cases}\]
where we assume that the noise $\xi$ is independent of $B$, $X$, or $Y$; hence the expected value $\mbf E^{x,y}_t$
is with respect to $B^{x,y}_t$, $X^{x,y}_t$, or $Y^{x,y}_t$, conditional on $\xi$. We denote by $\hat K(t)$ the random integral operator on $L^2(I)$ with the above kernel.
\end{definition}

\begin{remark}
If $\xi$ can be realized as a pointwise-defined measurable map on $\mbb R$,
then it follows from \eqref{eq:pointwise} and \eqref{Equation: Local Time Definition} that
\[\langle L_t(Z),V\rangle+\xi\big(L_t(Z)\big)=\int_0^tV\big(Z(s)\big)+\xi\big(Z(s)\big)\d s.\]
Thus, in this case $\hat K(t)$ corresponds to the Feynman-Kac representation of the semigroup
generated by the classically well-defined operator $\hat{\mc H}_I:=-\frac12\De+V+\xi$ with the appropriate
boundary condition (e.g., \cite{ChungZhao,Papanicolaou,SimonSemigroups,Sznitman},
or \cite[Theorem 5.4]{GaudreauLamarre} and references therein for a unified statement).
\end{remark}

\begin{remark}
Since we use the continuous version of Brownian local time,
for every $t>0$,
$L_t(Z)$ is an element of $\mr{PC}_c$ almost surely.
Consequently, the term $\xi(L_t(Z))$ in $\hat K(t)$'s definition
is well defined in the sense of Definition \ref{Definition: Stochastic Integral}.
The facts that the functions $(x,y)\mapsto\hat K(t;x,y)$
and $x\mapsto \hat K(t;x,x)$
are measurable on $I\times I$ and $I$ respectively and that $\hat K(t)\in L^2(I\times I)$
are proved in \cite[Theorem 2.23 and Appendix A]{GaudreauLamarre}.
\end{remark}

\begin{remark}
In cases where $\bar\al$ or $\bar\be$ are not finite, we use the conventions
$\mr e^{-\infty}:=0$ and
\[-\infty\cdot \mf L^c_t(Z):=\begin{cases}
0&\text{if }\mf L^c_t(Z)=0\\
-\infty&\text{if }\mf L^c_t(Z)>0.
\end{cases}\]
Thus, for any $c\in\partial I$,
if we let $\tau_c(Z):=\inf\{t\geq0:Z(t)=c\}$ denote the first hitting time of $c$, then we can interpret
$\mr e^{-\infty\cdot \mf L^c_t(Z)}:=\mbf 1_{\{\tau_c(Z)>t\}}$.
\end{remark}

The following result is a direct consequence of \cite[Theorem 2.23]{GaudreauLamarre}
(see also \cite{GaudreauLamarreShkolnikov,GorinShkolnikov}).

\begin{proposition}
\label{prop: F-K}
Suppose that the same hypotheses as Proposition \ref{Prop: operator} hold,
and let $\La=(\La_k)_{k\in\mbb N}$
denote $\hat{\mc H}_I$'s spectrum, as per
Corollary \ref{cor: point process}. For every $t>0$,
\begin{align}
\label{Equation: Trace Formula}
0\leq\mr{Tr}[\mr e^{-t\hat{\mc H}_I}]=\sum_{k=1}^\infty\mr e^{-t\La_k}=\mr{Tr}[\hat K(t)]=\int_I\hat K(t;x,x)\d x<\infty
\qquad\text{almost surely}.
\end{align}
In particular, exponential linear statistics of the form $x\mapsto\mr e^{-t x}$ are well defined in the
point process $\La$ for all $t>0$, and can be computed explicitly using the kernels in Definition \ref{Definition: Feynman-Kac}.
\end{proposition}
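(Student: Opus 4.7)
The plan is to reduce the statement to the existing Feynman-Kac machinery of \cite{GaudreauLamarre} and then use standard spectral theory to promote the operator identity to the trace identity. Fix $t>0$ and work on the probability one event on which $\hat{\mc H}_I$ is a well-defined self-adjoint operator with compact resolvent (Proposition~\ref{Prop: operator}).

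First, I would establish the operator identity $\mr e^{-t\hat{\mc H}_I} = \hat K(t)$. When $\xi$ is smooth (for instance, given by convolution of $\Xi$ with a mollifier $\phi_\eps$), this is the classical Feynman-Kac formula for Schr\"odinger semigroups with reflection/Robin boundary conditions: the boundary terms $\bar\al\mf L^0_t(Z)$ and $\bar\be\mf L^b_t(Z)$ arise from the standard Girsanov/reflection argument (see the references in Definition~\ref{Definition: Feynman-Kac}). The general case is exactly \cite[Theorem~2.23]{GaudreauLamarre}; the key point is that $L_t(Z)\in\mr{PC}_c$ almost surely, so $\xi(L_t(Z))$ is well defined by Definition~\ref{Definition: Stochastic Integral}, and a mollification argument shows that $\xi_\eps(L_t(Z))\to\xi(L_t(Z))$ in probability while the regularized operators converge to $\hat{\mc H}_I$ in the strong resolvent sense, yielding the claimed identity.

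Second, I would show that $\hat K(t)$ is trace class with $\mr{Tr}[\hat K(t)] = \int_I \hat K(t;x,x)\d x<\infty$ almost surely. By the semigroup property it suffices to prove that $\hat K(t/2)$ is Hilbert-Schmidt, i.e., $\int_{I\times I}|\hat K(t/2;x,y)|^2\d x\dd y<\infty$. Using that $V$ is bounded below (Assumption~\ref{Assumption: Potential}) and Jensen's inequality in the conditional expectation $\mbf E^{x,y}_{t/2}$, one bounds this integral by $c\int_{I\times I}\Pi_Z(t/2;x,y)^2\,\mbf E^{x,y}_{t/2}[\mr e^{-2\xi(L_{t/2}(Z))}]\d x\dd y$, and the inner Gaussian integral is controlled by the exponential moment of $\|L_{t/2}(Z)\|_\ga^2$, which is finite on compact sets via Assumption~\ref{Assumption: Noise}. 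The growth condition $V(x)/\log|x|\to\infty$ in the unbounded cases ensures that $\int_I\hat K(t;x,x)\d x<\infty$ a.s. by supplying enough decay at infinity. The measurability and continuity statements needed to invoke Mercer's theorem (so that the trace equals the integral of the diagonal) are already provided in \cite[Theorem 2.23 and Appendix~A]{GaudreauLamarre}.

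Third, once $\mr e^{-t\hat{\mc H}_I}=\hat K(t)$ is trace class, expanding in the eigenbasis $(\Psi_k)$ from Corollary~\ref{cor: point process} gives the spectral representation $\mr{Tr}[\hat K(t)]=\sum_k \mr e^{-t\La_k}$, which is finite and non-negative since the kernel of $\hat K(t)$ is pointwise non-negative. Combining the three identities yields \eqref{Equation: Trace Formula}.

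The main obstacle is the Feynman-Kac step itself: giving rigorous meaning to $\mr e^{-t(-\frac12\Delta+V+\xi)}$ when $\xi$ is merely a distribution requires passing through the local-time pairing $\xi(L_t(Z))$, and the convergence of the regularized semigroups must be coupled with the convergence of these stochastic integrals. All of this delicate work has been done in \cite{GaudreauLamarre,GaudreauLamarreShkolnikov,GorinShkolnikov}, so the present proposition is genuinely a direct consequence; the remaining effort lies in the trace-class/Mercer argument sketched above.
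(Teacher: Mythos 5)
Your proposal is correct and follows the same route as the paper: the proposition is stated in the paper as "a direct consequence of \cite[Theorem 2.23]{GaudreauLamarre}," and your sketch (operator identity via mollification and strong resolvent convergence, trace-class via the Hilbert--Schmidt/Mercer argument, spectral expansion in the eigenbasis $(\Psi_k)$) is exactly the content of that cited theorem, to which the paper defers. The only substantive point worth flagging is that the trace-class property, the measurability of $(x,y)\mapsto\hat K(t;x,y)$ and $x\mapsto\hat K(t;x,x)$, and the identity $\mr{Tr}[\hat K(t)]=\int_I\hat K(t;x,x)\,\dd x$ are all already packaged into \cite[Theorem 2.23 and Appendix A]{GaudreauLamarre}, so none of the sketched steps needs to be re-proved here; they just need to be invoked.
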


\subsection{Main Result}\label{sec:Main}

Our main result is as follows.

\begin{theorem}
\label{Theorem: Main}
Suppose that Assumptions \ref{Assumption: Noise}, \ref{Assumption: Cases}, and \ref{Assumption: Potential} are satisfied,
and let $\hat{\mc H}_I$ be as in Proposition \ref{Prop: operator}.
In {\bf Case 3}, $\hat{\mc H}_{(0,b)}$'s spectrum is always number rigid.
In {\bf Cases 1 \& 2} (i.e., $\hat{\mc H}_\mbb R$ or $\hat{\mc H}_{(0,\infty)}$),
if $\mf d>1$ is such that
\begin{align}
\label{Equation: d Exponent}
\limsup_{t\to0}t^{-\mf d}\left(\sup_{x\in I}\mbf E^x\Big[\|L_t(Z)\|_\ga^{2\theta}\Big]^{1/\theta}\right)<\infty
\end{align}
for every positive $\theta$,
then $\hat{\mc H}_I$'s spectrum is number rigid if the following
growth condition on $V$ holds:
\begin{align}
\label{Equation: General Growth Bounds}
\begin{cases}
\displaystyle
\lim_{|x|\to\infty}\frac{V(x)}{|x|^{2/(2\mf d-1)}}=\infty&\text{(if $\ga$ is compactly supported)}
\vspace{5pt}\\
\displaystyle
\lim_{|x|\to\infty}\frac{V(x)}{|x|^{2/(\mf d-1)}}=\infty&\text{(otherwise)}.
\end{cases}
\end{align}
\end{theorem}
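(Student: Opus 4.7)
The plan is to apply Proposition \ref{Proposition: Variance Criterion} to the test functions $f_t(x):=\mr e^{-tx}$ as $t\searrow 0$. Conditions (1) and (2) of that proposition are automatic from the trace bound in Proposition \ref{prop: F-K} and the uniform convergence $f_t\to 1$ on bounded sets, so everything reduces to proving $\mbf{Var}[\La(f_t)]\to 0$ as $t\to 0$. Writing $\La(f_t)=\mr{Tr}[\hat K(t)]=\int_I\hat K(t;x,x)\d x$, squaring, and exploiting that $\xi$ is Gaussian and independent of the bridge processes $Z^{x,x}_t,Z^{y,y}_t$, the $\xi$-expectations integrate explicitly: the term $\mbf E\big[\mr e^{-\xi(L_t^1+L_t^2)}\big]=\mr e^{\frac12\|L_t^1+L_t^2\|_\ga^2}$ appears in $\mbf E\big[\mr{Tr}[\hat K(t)]^2\big]$, while $\mr e^{\frac12(\|L_t^1\|_\ga^2+\|L_t^2\|_\ga^2)}$ appears in $\big(\mbf E[\mr{Tr}[\hat K(t)]]\big)^2$. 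Subtracting yields
\[
\mbf{Var}[\La(f_t)]=\int_{I\times I}\Pi_Z(t;x,x)\Pi_Z(t;y,y)\,\mbf E^{x,x}_t\!\otimes\mbf E^{y,y}_t\!\Big[\mc A(L_t^1)\mc A(L_t^2)\big(\mr e^{\langle L_t^1,L_t^2\rangle_\ga}-1\big)\Big]\d x\dd y,
\]
where $\mc A(L):=\mr e^{-\langle L,V\rangle+\frac12\|L\|_\ga^2+(\text{boundary local-time corrections})}$. This is the ``cross self-intersection'' reformulation announced in step \textbf{(ii)} of Section \ref{sec:Outline}: the only obstruction to variance vanishing is the interaction of the two independent bridge local times under $\langle\cdot,\cdot\rangle_\ga$.

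The next step is to estimate the cross factor: using $\mr e^u-1\leq u\mr e^u$ for $u\geq 0$ together with Cauchy-Schwarz $\langle L_t^1,L_t^2\rangle_\ga\leq\|L_t^1\|_\ga\|L_t^2\|_\ga$, one obtains the bound $\|L_t^1\|_\ga\|L_t^2\|_\ga\mr e^{\|L_t^1\|_\ga\|L_t^2\|_\ga}$. A Hölder split at the level of the product bridge measure then reduces the integrand at $(x,y)$ to a product of expectations of the form $\mbf E^{x,x}_t\big[\|L_t\|_\ga^{2\theta}\big]^{1/\theta}$ and $\mbf E^{x,x}_t\big[\mr e^{-p\langle L_t,V\rangle}\big]^{1/p}$, for arbitrarily large $\theta,p$. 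The first type is the crucial place where hypothesis \eqref{Equation: d Exponent} enters: an absolute-continuity argument exchanges the bridge law for the free law $Z^x$ (the midpoint density of a free Brownian-type path at $x$ is uniformly bounded relative to the bridge law), whereupon \eqref{Equation: d Exponent} delivers the $t^{\mf d}$ decay uniformly in $x$. The second type uses that length-$t$ bridges concentrate within $O(\sqrt t)$ of $x$, yielding a bound $\mr e^{-ct\,\tilde V(x)}$ with $\tilde V(x):=\inf_{|y-x|\leq C\sqrt t}V(y)$ inheriting the growth of $V$ from Assumption \ref{Assumption: Potential}.

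In \textbf{Case 3}, the domain $I=(0,b)$ is bounded and the boundary-local-time factors are uniformly controlled under the parameters of Assumption \ref{Assumption: Cases}, so the $(x,y)$-integral is immediately $O(t^{\mf d-1})$, which vanishes since $\mf d>1$. In \textbf{Cases 1 \& 2}, one must absorb the spatial divergence by $V$-growth, and the compact-support dichotomy in \eqref{Equation: General Growth Bounds} arises here: when $\ga$ is supported in $[-R,R]$, $\langle L_t^1,L_t^2\rangle_\ga$ vanishes unless $|x-y|\lesssim R+\sqrt t$, collapsing the double integral to essentially a one-dimensional integral in $x$ and giving the softer exponent $2/(2\mf d-1)$; for general $\ga$ both spatial directions contribute and the stronger $2/(\mf d-1)$ exponent is needed. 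A Laplace-type balancing of the polynomial losses $\Pi_Z(t;x,x)=O(t^{-1/2})$ and the $t^{\mf d}$-moment factor against $\mr e^{-ct\,\tilde V(x)}$ then fixes the precise thresholds in \eqref{Equation: General Growth Bounds}. The principal obstacle, as flagged in step \textbf{(iii)} of Section \ref{sec:Outline}, is the bridge-to-free passage for the moments of $\|L_t\|_\ga$ while retaining uniformity in $x$ and transferring estimates between the three domain types (full line, half-line, bounded interval) by coupling reflected Brownian motions; this forms the core technical content of Section \ref{Section: Self-Intersection}, whose estimates then feed directly into the variance computation of Section \ref{Section: Variance Estimates}.
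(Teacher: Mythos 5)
Your proposal follows the same overall route as the paper: exponential test functions fed into Proposition \ref{Proposition: Variance Criterion}, the Gaussian-integration variance formula (Lemma \ref{Lemma: Variance Formula}), a H\"older split separating the $V$-contribution from the self-intersection cross term $\mr e^{\mc D_t}-1$, the midpoint/Doob-transform reduction of bridge moments to free-process moments, and a Laplace-type analysis in the $V$-growth integral with the compact-support dichotomy collapsing the $(x,y)$-integral. Two points deserve care before this sketch becomes a proof. First, the inequality $\mr e^u-1\leq u\mr e^u$ ``for $u\geq 0$'' is not adequate as stated: $\langle L_t^1,L_t^2\rangle_\ga$ need not be nonnegative for general $\ga$ (a semi-inner-product of nonnegative functions can be negative, e.g.\ for oscillating covariances in the $L^p$-singular or bounded class), so one must pass to $|\mr e^z-1|\leq|z|\mr e^{|z|}$ together with $|\mc D_t|\leq\tfrac12(\|L_t^1\|_\ga^2+\|L_t^2\|_\ga^2)$, as the paper does in Lemma \ref{Lemma: D Bound 2}; this is a one-line correction but is needed to cover all cases. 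Second, and more substantively, your ``Laplace-type balancing'' elides a two-step limiting argument that is essential for the stated growth condition. If you choose $\mf a$ strictly above the threshold so that the $t$-exponent in the variance bound is strictly positive, you only obtain the stronger (narrower) sufficient hypothesis $V(x)\gtrsim|x|^{\mf a}$ for some $\mf a>\mf a_0$. To obtain the paper's sharp condition $\lim_{|x|\to\infty}V(x)/|x|^{\mf a_0}=\infty$ at the exact threshold $\mf a_0 := 2/(2\mf d-1)$ (resp.\ $2/(\mf d-1)$), one sets $\mf a=\mf a_0$ so that the $t$-exponent vanishes, notes that for every $\ka>0$ the condition $V(x)/|x|^{\mf a_0}\to\infty$ furnishes $\nu_\ka$ with $V(x)\geq|\ka x|^{\mf a_0}-\nu_\ka$, carries the $\ka$-dependence through the Gamma-function computation to obtain a residual bound $\limsup_{t\to0}\mbf{Var}\leq C_{\mf a_0}/\ka$ (resp.\ $C_{\mf a_0}/\ka^2$), and then lets $\ka\to\infty$. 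Without this second limit, the argument as written proves a weaker theorem. With these two adjustments your proposal matches the paper's proof.
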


\begin{remark}
If we assume \eqref{Equation: gamma to Lp Bounds},
then \eqref{Equation: d Exponent} always holds
with at least
\begin{align}
\label{Equation: minimal d exponent}
\mf d\geq1+1/\max\{q_1,\ldots,q_\ell\}.
\end{align}
(i.e.,
combine the bound \eqref{Equation: gamma to Lp Bounds}
with \eqref{Equation: Lp Local Time Scaling}; see \eqref{Equation: Gamma to Lp specific} for the details).
In particular, under our assumptions, Theorem \ref{Theorem: Main}
always provides a nontrivial sufficient condition for number rigidity in
{\bf Cases 1 \& 2}.
We nevertheless state the general condition \eqref{Equation: d Exponent}
in Theorem \ref{Theorem: Main} instead of \eqref{Equation: minimal d exponent}, since it is sometimes possible to
find $\mf d>1+1/\max\{q_1,\ldots,q_\ell\}$ such that \eqref{Equation: d Exponent}
holds, and thus prove number rigidity for a larger class of potentials
(see, for example, the case of fractional noise in \eqref{Equation: Gamma to Lp specific fractional}).
\end{remark}

From this theorem, we obtain the following corollary, which specializes
\eqref{Equation: d Exponent} and \eqref{Equation: General Growth Bounds} to the four examples of noises considered
earlier.

\begin{theorem}
\label{Theorem: Main Examples}
Let $\xi$ be one of the four types of noises considered in Examples
\ref{Example: White}--\ref{Example: Bounded}. Then,
\eqref{Equation: d Exponent} holds with
\begin{align}
\label{Equation: d Exponent Examples}
\mf d:=\begin{cases}
3/2&\text{(white noise)}\\
1+H&\text{(fractional noise with index $H\in(\tfrac12,1)$)}\\
2-1/2p&\text{($L^p$-singular noise with $p\geq1$)}\\
2&\text{(bounded noise)}.
\end{cases}
\end{align}
In particular, under Assumptions \ref{Assumption: Cases} and \ref{Assumption: Potential},
 in {\bf Cases 1 \& 2}
$\hat{\mc H}_I$'s spectrum is number rigid if the following sufficient conditions on $V$ are satisfied.
\begin{enumerate}
\item \textnormal{({\bf White})} If $\xi$ is a white noise, then
\begin{equation}\label{eq:Vwhite}
\lim_{|x|\to\infty}\frac{V(x)}{|x|}=\infty.
\end{equation}

\item \textnormal{({\bf Fractional})}
If $\xi$ is a fractional noise with Hurst index $H\in(\tfrac12,1)$, then
\begin{equation}\label{eq:Vfractiona}
\lim_{|x|\to\infty}\frac{V(x)}{|x|^{2/H}}=\infty.
\end{equation}

\item \textnormal{({\bf$\bs{L^p}$-Singular})}
If $\xi$ is an $L^p$-singular noise, then
\begin{equation}\label{eq:VPolSing}
\begin{cases}
\displaystyle
\lim_{|x|\to\infty}\frac{V(x)}{|x|^{2p/(3p-1)}}=\infty&\text{(if $\ga$ is compactly supported)}
\vspace{5pt}\\
\displaystyle
\lim_{|x|\to\infty}\frac{V(x)}{|x|^{4p/(2p-1)}}=\infty&\text{(otherwise)}.
\end{cases}
\end{equation}

\item \textnormal{({\bf Bounded})}
If $\xi$ is a bounded noise, then
\begin{equation}\label{eq:VBound}
\begin{cases}
\displaystyle
\lim_{|x|\to\infty}\frac{V(x)}{|x|^{2/3}}=\infty&\text{(if $\ga$ is compactly supported)}
\vspace{5pt}\\
\displaystyle
\lim_{|x|\to\infty}\frac{V(x)}{|x|^2}=\infty&\text{(otherwise)}.
\end{cases}
\end{equation}
\end{enumerate}
\end{theorem}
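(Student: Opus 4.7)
The plan is to apply Theorem~\ref{Theorem: Main} to each of the four noise classes by (i) verifying \eqref{Equation: d Exponent} with the values of $\mathfrak d$ claimed in \eqref{Equation: d Exponent Examples}, and (ii) extracting the growth condition on $V$ from \eqref{Equation: General Growth Bounds}. Before both steps, Assumption~\ref{Assumption: Noise} itself must be verified, which amounts to exhibiting exponents $q_1,\ldots,q_\ell\in[1,2]$ with $\|\cdot\|_\gamma^2\lesssim\sum_i\|\cdot\|_{q_i}^2$; this is the content of Lemma~\ref{Lemma: Gamma to Lp Bounds}.

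For white noise, $\|f\|_\gamma^2=\sigma^2\|f\|_2^2$ gives $q=2$; for bounded noise, $\|f\|_\gamma^2\le\|\gamma\|_\infty\|f\|_1^2$ gives $q=1$; and for $L^p$-singular noise with $\gamma=\gamma_1+\gamma_2$, the bounded piece gives $q_1=1$ while Young's inequality applied to $\gamma_1$ gives $q_2=2p/(2p-1)$. Combining each with the small-time local-time scaling \eqref{Equation: Lp Local Time Scaling}, $\mathbf E^x[\|L_t(Z)\|_q^{2\theta}]^{1/\theta}\lesssim t^{1+1/q}$ uniformly in $x\in I$ and over $Z\in\{B,X,Y\}$, produces \eqref{Equation: d Exponent} with $\mathfrak d = 1+1/\max_i q_i$, which evaluates to $3/2$, $2$, and $2-1/(2p)$ respectively---exactly the claimed values.

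The main obstacle is the fractional case, where the indirect route via $L^q$-domination ($\gamma\in L^p_{\mathrm{loc}}$ only for $p<(2-2H)^{-1}$) yields at best $\mathfrak d\approx 2H<1+H$. To obtain the sharp exponent I would bypass the $L^q$-reduction and estimate $\|L_t(Z)\|_\gamma^2$ directly through the Riesz-energy representation
\[\|L_t(Z)\|_\gamma^2=\sigma^2H(2H-1)\int_0^t\!\!\int_0^t|Z(u)-Z(v)|^{2H-2}\,du\,dv,\]
apply Brownian scaling to factor out $t^{1+H}$, and control the remaining dimensionless double integral by the classical moment bounds on Brownian Riesz energies gathered in \cite[Chapter~4]{Chen}; the reflected cases $Z=X,Y$ can be reduced to the unreflected case either by analogous scaling for the reflected transition density or by the couplings mentioned in step (iii) of Section~\ref{sec:Outline}, establishing $\mathfrak d=1+H$. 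To finish, I would substitute each $\mathfrak d$ into \eqref{Equation: General Growth Bounds}---using the compactly supported branch for white noise and for the compactly supported subcases of $L^p$-singular and bounded noise, and the generic branch otherwise---and simplify $2/(2\mathfrak d -1)$ or $2/(\mathfrak d -1)$ to recover the exponents $1$, $2/H$, $2p/(3p-1)$, $4p/(2p-1)$, $2/3$, and $2$ appearing in \eqref{eq:Vwhite}--\eqref{eq:VBound}.
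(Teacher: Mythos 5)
For white, $L^p$-singular, and bounded noise your route is exactly the paper's: combine Lemma~\ref{Lemma: Gamma to Lp Bounds} with the local-time scaling of Proposition~\ref{Proposition: Lp Local Time Scaling} to produce \eqref{Equation: d Exponent} with $\mf d = 1+1/\max_i q_i$, then substitute into \eqref{Equation: General Growth Bounds}; the arithmetic you describe is correct. Your diagnosis of the fractional case is also correct: a pure $L^q$-domination of $\gamma$ cannot reach $\mf d = 1+H$ exactly (it only approaches it as the exponent $p$ with $\gamma_1\in L^p$ approaches $(2-2H)^{-1}$), so something better is needed.

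Where you and the paper diverge is how to fix the fractional case, and your alternative has a gap in the reflected domains. You propose the direct Riesz-energy identity $\|L_t(Z)\|_\ga^2 = \si^2 H(2H-1)\int_0^t\int_0^t|Z(u)-Z(v)|^{2H-2}\,du\,dv$ followed by Brownian scaling and moment bounds from \cite[Chapter~4]{Chen}. For $Z=B$ on $\mbb R$ this works and is a clean alternative. But your sentence reducing $Z=X,Y$ to the unreflected case does not: with $X=|B|$, the coupling gives $|X(u)-X(v)|\le |B(u)-B(v)|$, and since $2H-2<0$ this produces $|X(u)-X(v)|^{2H-2}\ge|B(u)-B(v)|^{2H-2}$, an inequality in the wrong direction. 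To make the reduction go through one must instead decompose $L^a(X)=L^a(B)+L^{-a}(B)$ and appeal to Minkowski's inequality for the seminorm $\|\cdot\|_\ga$, i.e.\ roughly the machinery already used in Proposition~\ref{Proposition: Lp Local Time Scaling}. In {\bf Case~3} the situation is worse, because the fixed length scale $b$ breaks the Brownian scaling step entirely: $Y^x(ts)/\sqrt t$ is a reflected BM on $(0,b/\sqrt t)$, not on $(0,b)$, and the periodized local time identity \eqref{Equation: Interval Local Time} involves an unboundedly many-term sum whose control is not a one-liner.

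The paper sidesteps all of this with a single device: the fractional line of Lemma~\ref{Lemma: Gamma to Lp Bounds} is a \emph{time-dependent} bound $\|f\|_\ga^2\le c\,t^H\big(t^{-1/2}\|f\|_2^2 + t^{-1}\|f\|_1^2\big)$, obtained by rescaling the Riesz kernel and splitting the double integral across $\{|a-b|<1\}$ (Young) and $\{|a-b|\ge 1\}$ (trivial). Feeding this into the already-proved Proposition~\ref{Proposition: Lp Local Time Scaling} yields $\mbf E^x[\|L_t(Z)\|_\ga^{2\theta}]^{1/\theta} = O\big(t^H(t^{-1/2+3/2}+t^{-1+2})\big) = O(t^{1+H})$ uniformly over all three domains, because the $L^q$ local-time scaling was proved once and for all in {\bf Cases 1--3}. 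That is the key economy: the domain-dependent work is isolated in Proposition~\ref{Proposition: Lp Local Time Scaling}, and the kernel-dependent work (including the $t$-scaling) is isolated in the $\ga$-bound. Your Riesz-energy route re-entangles these and therefore has to redo the domain analysis for each $Z$; as written it does not yet close.
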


Theorem \ref{Theorem: Main} is proved in Section \ref{Section: Variance Estimates}.
The main technical ingredient in this proof is Theorem \ref{Theorem: Variance Bound},
which provides quantitative upper bounds on the variance of the linear statistic
$\sum_k\mr e^{-t\La_k}$ as $t\to0$ using the identity \eqref{Equation: Trace Formula}.
The result then follows from an application of
Proposition \ref{Proposition: Variance Criterion} with test functions of the form
$f_n(x)=\mr e^{-t_nx}$ with $t_n\to0$, by proving that
\[\lim_{n\to\infty}\mbf{Var}\big[\La(f_n)\big]=\lim_{n\to\infty}\mbf{Var}\big[\mr{Tr}[\hat K(t_n)]\big]=0\]
under the conditions stated in Theorem \ref{Theorem: Main}. Theorem \ref{Theorem: Main Examples}
is proved in Section \ref{Section: Proof of Main Corollary}.

\subsection{Questions of Optimality}
\label{Section: Q of Optimality}

\subsubsection{Two Examples}

The growth conditions \eqref{Equation: General Growth Bounds} raise natural questions concerning the optimality of
Theorem \ref{Theorem: Main}. For instance, when $\xi$ is a white noise, it is known that
the super-linear condition $V(x)/|x|\to\infty$ in Theorem \ref{Theorem: Main Examples} is not necessary
for the number rigidity of $\La$.

\begin{proposition}[{\cite{Buf16}}]
\label{Proposition: Optimality 1}
Let $\xi_2$ be a white noise with variance $1/2$.
Let us denote the operator
\begin{align}
\label{Equation: Airy 2 Operator}
\hat{\mc H}^{(2)}_{(0,\infty)}:=-\tfrac12\De+\tfrac x2+\xi_2,
\end{align}
with Dirichlet boundary condition at zero.
$\hat{\mc H}^{(2)}_{(0,\infty)}$'s spectrum is number rigid.
\end{proposition}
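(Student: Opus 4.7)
The plan is to leverage the fact that, by the Ramirez--Rider--Virag correspondence, the spectrum of $\hat{\mc H}^{(2)}_{(0,\infty)}$ agrees in distribution with the Airy-$2$ point process, which is determinantal on $\mbb R$ with the (projection) Airy kernel
\[
K_{\mr{Airy}}(x,y)=\frac{\mr{Ai}(x)\mr{Ai}'(y)-\mr{Ai}'(x)\mr{Ai}(y)}{x-y}.
\]
With determinantal structure in hand, one can bypass the Feynman-Kac approach entirely and instead apply the variance criterion of Proposition~\ref{Proposition: Variance Criterion} directly, by exploiting the explicit formula
\[
\mbf{Var}\bigl[\La(f)\bigr]=\tfrac12\iint_{\mbb R^2}\bigl(f(x)-f(y)\bigr)^2 K_{\mr{Airy}}(x,y)^2\d x\dd y,
\]
valid for any sufficiently nice $f$ because $K_{\mr{Airy}}$ is a self-adjoint projection.

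Fix a bounded Borel set $A\subset[-R,R]$. The scheme is to construct a sequence $f_n\colon\mbb R\to[0,1]$ of smooth cutoffs such that $f_n\equiv1$ on $[-n,R+1]$, $f_n\equiv0$ outside $[-2n,R+2]$, with $|f_n'|\lesssim 1/n$ on the left transition band $[-2n,-n]$ and $|f_n'|\lesssim 1$ on the right transition band $[R+1,R+2]$. Condition (1) of Proposition~\ref{Proposition: Variance Criterion} is automatic since the Airy-$2$ configuration is locally finite, and condition (2) holds trivially. The task reduces to controlling the variance double integral, which I would split into three regions: (i) both arguments in the right transition band, (ii) both arguments in the left transition band, and (iii) mixed cross-terms. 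Region (i) is negligible because $K_{\mr{Airy}}(x,y)$ decays super-exponentially as $x,y\to+\infty$. The cross-term estimates in (iii) are bounded using the same decay on one factor and $L^2$ bounds on the other.

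The crucial estimate, and the main technical obstacle, is controlling region (ii) in the deep negative bulk, where the density $K_{\mr{Airy}}(x,x)\sim\sqrt{-x}/\pi$ is large and the local correlation structure of $K_{\mr{Airy}}$ converges, after rescaling by the local spacing $|x|^{-1/2}$, to the sine kernel. The plan is to cover $[-2n,-n]$ by intervals on which the kernel is close to a translation of the sine kernel, apply the corresponding sine-kernel variance bound on each piece (which behaves logarithmically in the width), and sum the contributions using the $|f_n-f_n|^2\lesssim (|x-y|/n)^2$ control inherited from the derivative bound. A careful accounting yields a variance bound of the form $(\log n)^c/n^{\delta}$ for some $\delta>0$, establishing the hypothesis of Proposition~\ref{Proposition: Variance Criterion} and therefore rigidity.

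The hardest input is the uniform sine-kernel approximation of $K_{\mr{Airy}}$ together with the matching off-diagonal decay $K_{\mr{Airy}}(x,y)^2\lesssim (1+|x-y|)^{-2}$ in the relevant regime; these should follow from classical uniform asymptotics for the Airy function and its derivative, plus Christoffel--Darboux-type identities. Given these ingredients, the remaining pieces are routine covering and change-of-variables arguments, and the conclusion of Proposition~\ref{Proposition: Optimality 1} follows immediately from Proposition~\ref{Proposition: Variance Criterion}.
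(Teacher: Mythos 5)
Note that the paper does not contain a proof of Proposition \ref{Proposition: Optimality 1}: the result, and the strategy you outline (determinantal variance formula together with the vanishing-variance criterion), are both imported from Bufetov \cite{Buf16}. Your plan, however, has a concrete gap in the choice of cutoff: the plateau must be held \emph{fixed}, not grown with $n$. With $f_n\equiv1$ on $[-n,R+1]$ and $f_n\equiv0$ on $(-\infty,-2n]$, the crossing contribution to the variance formula is
\[
\iint_{x\in[-n,R+1],\;y\leq-2n}\mf K(x,y)^2\d x\dd y,
\]
and the oscillatory asymptotics of the Airy function give $\mf K(x,y)^2\approx c\,|x|^{-1/2}|y|^{-3/2}$ for $x,y\ll 0$ with $|y|\gg|x|$ (after averaging out the oscillation), so this integral is $\Theta(1)$: indeed $\int_1^n u^{-1/2}\dd u\,\int_{2n}^{\infty}v^{-3/2}\dd v$ converges to a positive constant as $n\to\infty$. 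Thus the crossing term never dies and the variance criterion cannot close. One must take a fixed plateau $[-R,R+1]$ so that only finitely much mass of $\mf K(\cdot,\cdot)^2$ straddles the plateau and the zero region, and let the transition band $[-n,-R]$ occupy essentially all of the support of $f_n$.

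There is a second, subtler gap: even after fixing the plateau, combining the pointwise gradient bound $(f_n(x)-f_n(y))^2\lesssim(x-y)^2/n^2$ with the local sine approximation of $\mf K^2$ yields only an $\Theta(1)$ bound on the transition-band contribution, because $(x-y)^2\mf K(x,y)^2$ averages (over the Airy oscillation) to a constant rather than decaying:
\[
\iint_{[-2n,-R]^2}\frac{(x-y)^2}{n^2}\,\mf K(x,y)^2\d x\dd y=\Theta(1).
\]
To obtain $o(1)$ one must exploit the oscillatory \emph{cancellation} inside $\mf K^2$ itself --- for instance by a Parseval-type computation in the semiclassical coordinate $\zeta=\tfrac23|x|^{3/2}$, in which the kernel becomes sine-like at unit density --- rather than a pointwise upper bound on the kernel. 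This cancellation is where Bufetov's argument does its real work, and your claimed $(\log n)^{c}\,n^{-\delta}$ estimate cannot be reached by the covering-plus-gradient bound as sketched.
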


Indeed, one may recognize $\hat{\mc H}^{(2)}_{(0,\infty)}$ as the stochastic Airy operator with parameter $\be=2$ (up to a multiple of $1/2$),
whose spectrum forms a determinantal point process (e.g., \cite{RamirezRiderVirag,TracyWidom}) known as the Airy-$2$ process. By using
this integrable structure, Bufetov showed in \cite[Section 3.2]{Buf16} that $\hat{\mc H}^{(2)}_{(0,\infty)}$'s spectrum is
number rigid. In the following proposition (proved in Section~\ref{Section: Optimality}), we demonstrate how exponential linear statistics
fail to show the rigidity of the Airy-$2$ process, and thus \eqref{eq:Vwhite} is the best general sufficient condition for white noise one
can obtain with the method of this paper:

\begin{proposition}
\label{Proposition: Optimality 2}
With $\hat{\mc H}^{(2)}_{(0,\infty)}$ as in \eqref{Equation: Airy 2 Operator}, it holds that
\[\lim_{t\to0}\mbf{Var}\big[\mr{Tr}[\mr e^{-t\hat{\mc H}^{(2)}_{(0,\infty)}}]\big]=(4\pi)^{-1}.\]
\end{proposition}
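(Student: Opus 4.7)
The plan is to compute $\mbf{Var}\bigl[\mr{Tr}[\hat K(t)]\bigr]$ directly via the Feynman--Kac formula. By Proposition~\ref{prop: F-K} and Fubini,
\[\mbf{Var}\bigl[\mr{Tr}[\hat K(t)]\bigr]=\int_0^\infty\!\!\int_0^\infty\mbf{Cov}\bigl(\hat K(t;x,x),\hat K(t;y,y)\bigr)\,dx\,dy.\]
Introducing two independent copies of the Dirichlet bridge on $(0,\infty)$, the Gaussian identity $\mbf E_{\xi_2}[\mr e^{-\xi_2(f)}]=\mr e^{\|f\|_2^2/4}$ (valid since $\ga=\tfrac12\de_0$, so $\|f\|_\ga^2=\tfrac12\|f\|_2^2$) yields, after cancellation of the means,
\[\mbf{Cov}\bigl(\hat K(t;x,x),\hat K(t;y,y)\bigr)=\Pi^{\mr{Dir}}(t;x,x)\,\Pi^{\mr{Dir}}(t;y,y)\,\mbf E_{X,Y}\!\Bigl[A_t\bigl(\mr e^{\tfrac12\langle L_t(X),L_t(Y)\rangle_{L^2}}-1\bigr)\Bigr],\]
where $\Pi^{\mr{Dir}}(t;x,y):=\ms G_t(x-y)-\ms G_t(x+y)$ and $A_t:=\exp\bigl(-\tfrac12\int_0^t(X+Y)\,ds+\tfrac14\|L_t(X)\|_2^2+\tfrac14\|L_t(Y)\|_2^2\bigr)$.

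Next I would diffusively rescale the bridges by $X(s)=x+\sqrt t\,\tilde X(s/t)$ and $Y(s)=y+\sqrt t\,\tilde Y(s/t)$, with $\tilde X,\tilde Y$ independent standard Brownian bridges from $0$ to $0$ on $[0,1]$. Then $L_t(X)^a=\sqrt t\,\tilde L^{(a-x)/\sqrt t}(\tilde X)$, so $\|L_t(X)\|_2^2=O(t^{3/2})$, $\int_0^t X\,ds=xt+O(t^{3/2})$, and
\[\tfrac12\langle L_t(X),L_t(Y)\rangle_{L^2}=\tfrac{t^{3/2}}{2}\,F_{\tilde X,\tilde Y}\!\bigl(\tfrac{y-x}{\sqrt t}\bigr),\qquad F_{\tilde X,\tilde Y}(r):=\int_\mbb R\tilde L^c(\tilde X)\,\tilde L^{c-r}(\tilde Y)\,dc.\]
Consequently $A_t=\mr e^{-(x+y)t/2}(1+o(1))$ and $\mr e^{(1/2)\langle L_t(X),L_t(Y)\rangle_{L^2}}-1=\tfrac{t^{3/2}}{2}F_{\tilde X,\tilde Y}+O(t^3)$ on the dominant region $\{x,y\gg\sqrt t\}$. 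Using $\Pi^{\mr{Dir}}(t;x,x)^2\to(2\pi t)^{-1}$ and the change of variables $v=(x+y)/2$, $r=(y-x)/\sqrt t$ (Jacobian $\sqrt t$),
\[\mbf{Var}\approx\frac{1}{2\pi t}\cdot\sqrt t\cdot\frac{t^{3/2}}{2}\int_0^\infty \mr e^{-vt}\,dv\int_\mbb R\mbf E\bigl[F_{\tilde X,\tilde Y}(r)\bigr]\,dr=\frac{1}{4\pi}\int_\mbb R\mbf E\bigl[F_{\tilde X,\tilde Y}(r)\bigr]\,dr,\]
and Fubini together with the almost-sure identity $\int_\mbb R\tilde L^c\,dc=1$ (total time of a unit bridge) give $\int_\mbb R\mbf E[F_{\tilde X,\tilde Y}(r)]\,dr=\mbf E[\int\tilde L^c(\tilde X)\,dc]\cdot\mbf E[\int\tilde L^{c'}(\tilde Y)\,dc']=1$, yielding the claimed limit $(4\pi)^{-1}$.

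The main technical obstacle is making these heuristic approximations rigorous uniformly in $(x,y)$. This requires (a) controlling the boundary region $\{x,y\lesssim\sqrt t\}$, where $\ms G_t(2x)$ is not negligible, via the crude bound $\Pi^{\mr{Dir}}(t;x,x)\leq\ms G_t(0)$ and sup-norm estimates on the local-time overlap, showing this region contributes $o(1)$; (b) truncating the tail $\{x+y\gtrsim t^{-1}\log(1/t)\}$ via the exponential factor $\mr e^{-(x+y)t/2}$ coming from $A_t$; and (c) passing to the limit inside the bridge expectation using the elementary inequality $\mr e^r-1\leq r\mr e^r$, which reduces matters to bounding joint moments of the form $\mbf E[\langle L_t(X),L_t(Y)\rangle_{L^2}\,\mr e^{(1/2)\|L_t(X)+L_t(Y)\|_2^2}]$; these are controlled by the self-intersection local time estimates of Section~\ref{Section: Self-Intersection}, which provide the requisite $O(t^{3/2})$ decay.
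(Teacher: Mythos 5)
Your proposal takes a genuinely different route from the paper. The paper proves Proposition~\ref{Proposition: Optimality 2} by exploiting the determinantal structure of the Airy-$2$ process: it starts from the variance identity $\mbf{Var}[\mf{Ai}_2(f_t)]=\tfrac12\int(\mr e^{tx}-\mr e^{ty})^2\mf K(x,y)^2\,dx\,dy$ for the projection kernel $\mf K$, uses the representation $\mf K(x,y)=\int_0^\infty\mr{Ai}(u+x)\mr{Ai}(u+y)\,du$, and then invokes Okounkov's closed-form Laplace transform \eqref{eq:Airy Laplace} to reduce everything to explicit Gaussian integrals, obtaining an exact answer in terms of the error function before taking $t\to0$. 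You instead work entirely within the Feynman--Kac framework of Lemma~\ref{Lemma: Variance Formula}: you specialize the variance formula to the Dirichlet half-line with $V(x)=x/2$ and $\ga=\tfrac12\de_0$, rewrite the reflected-bridge/indicator expression in terms of the absorbed (Dirichlet) kernel, perform a diffusive rescaling of the bridges, and extract the leading-order behaviour of the self-intersection term $\mr e^{\mc D_t(x,y)}-1\approx\tfrac{t^{3/2}}{2}F_{\tilde X,\tilde Y}((y-x)/\sqrt t)$. The scaling accounting and the Fubini identity $\int_\mbb R\mbf E[F_{\tilde X,\tilde Y}(r)]\,dr=1$ are correct, and the resulting limit $(4\pi)^{-1}$ matches.

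The trade-off is clear: the paper's proof is short, exact, and closed-form, but it is opaque from the point of view of the semigroup method and relies on integrability that is specific to $\be=2$. Your proof is conceptually aligned with the rest of the paper and in principle would extend to non-determinantal settings, but it leaves three nontrivial analytic steps as sketches -- namely, (a) showing the boundary region $\{x\wedge y\lesssim\sqrt t\}$ contributes $o(1)$, (b) the tail truncation in $x+y$, and (c) justifying the interchange of limit and expectation with uniform error control. All three are plausible and can be handled with the self-intersection and sub-Gaussian estimates already in Sections~\ref{Section: Self-Intersection} and \ref{Section: Variance Estimates}, but as written they are assertions rather than proofs, so a fully rigorous version would be substantially longer than the paper's. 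One cosmetic point: the paper states the result as $\lim_{t\to0}\mbf{Var}[\mr{Tr}[\mr e^{-t\hat{\mc H}^{(2)}_{(0,\infty)}}]]$ and proves it for $\mr{Tr}[\mr e^{-2t\hat{\mc H}^{(2)}_{(0,\infty)}}]$ (the Airy-$2$ normalization); since the limit is invariant under $t\mapsto t/2$ this is harmless, and your argument works directly with the un-rescaled semigroup.
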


We also note the following simple example, which shows that our superquadratic
condition in \eqref{eq:VBound} for bounded noise with general $\ga$ is optimal,
and provides an example of a random Schr\"odinger operator whose spectrum
is not number rigid.

\begin{example}
Let $g$ be a standard Gaussian random variable, and suppose that $\xi(x)=g$
for all $x\in\mbb R$. In our terminology, $\xi$ is a bounded noise with non-compactly-supported
covariance function $\ga(x)=1$ for all $x\in\mbb R$. Consider the operator
\begin{align}
\label{Equation: Quantum HO}
\hat{\mc H}^{(\mr{HO})}_\mbb Rf(x):=-\tfrac12f''(x)+x^2f(x)+\xi(x)f(x),
\end{align}
acting on the whole space $\mbb R$.
It is known that the deterministic operator $-\tfrac12\De+x^2$, which is usually called the
quantum harmonic oscillator, has a spectrum of the form $\{c_1k+c_2\}_{k\in\mbb N}$
for some constants $c_1,c_2>0$ (e.g., \cite[Chapter 2, Proposition 2.2 (ii)]{Tak}). In particular,
the spectrum of \eqref{Equation: Quantum HO} consists of the randomly shifted semilattice
$\{c_1k+c_2+g\}_{k\in\mbb N}$, which is clearly not number rigid.
\end{example}

\subsubsection{Some Open Problems}

In light of Propositions \ref{Proposition: Optimality 1} and \ref{Proposition: Optimality 2},
it would be interesting to better understand the conditions
under which the spectrum of one-dimensional continuous RSOs are number rigid,
leading to the following open problem.

\begin{problem}
\label{Problem: Full Characterization}
Suppose that Assumptions \ref{Assumption: Noise} and \ref{Assumption: Potential} hold.
Given a fixed noise $\xi$, characterize the potentials $V$ such that $\hat{\mc H}_I$'s
spectrum is number rigid in {\bf Cases 1 \& 2}.
\end{problem}

A second problem of interest would be to uncover optimal conditions
under which the variance of $\mr{Tr}[\hat K(t)]$ vanishes as $t\to0$.

\begin{problem}
\label{Problem: Semigroup Characterization}
Suppose that Assumptions \ref{Assumption: Noise} and \ref{Assumption: Potential} hold.
Given a fixed noise $\xi$, characterize the potentials $V$ such that
\[\lim_{t\to0}\mbf{Var}\big[\mr{Tr}[\hat K(t)]\big]=0\]
in {\bf Cases 1 \& 2}.
\end{problem}

Owing to Proposition \ref{Proposition: Optimality 2}, the following conjecture concerning
Problem \ref{Problem: Semigroup Characterization} in the case of white noise seems natural.

\begin{conjecture}
Let $\xi$ be a white noise.
Suppose that Assumption \ref{Assumption: Potential} holds.
In {\bf Cases 1 \& 2}, if there exists $\ka,\nu>0$ such that $V(x)\leq\ka|x|+\nu$ for all $x\in I$, then
\[\liminf_{t\to0}\mbf{Var}\big[\mr{Tr}[\hat K(t)]\big]>0.\]
\end{conjecture}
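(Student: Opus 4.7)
The plan is to establish the lower bound by identifying the dominant contribution to $\mbf{Var}[\mr{Tr}[\hat K(t)]]$ as $t\to 0$ via the Feynman-Kac representation and Brownian scaling. Since $\xi$ is an independent Gaussian with $\|f\|_\ga^2=\si^2\|f\|_2^2$ for white noise, integrating out $\xi$ in the product $\hat K(t;x,x)\hat K(t;y,y)$, using two independent bridges $Z_1,Z_2$ distributed as $Z^{x,x}_t$ and $Z^{y,y}_t$ respectively, yields
\begin{equation*}
\mbf{Var}\bigl[\mr{Tr}[\hat K(t)]\bigr]=\int_{I^2}\Pi_Z(t;x,x)\Pi_Z(t;y,y)\,\mbf E\Bigl[\mr e^{A_1+A_2}\bigl(\mr e^{\si^2\langle L_t(Z_1),L_t(Z_2)\rangle}-1\bigr)\Bigr]\d x\dd y,
\end{equation*}
where $A_i:=-\langle L_t(Z_i),V\rangle+\tfrac{\si^2}{2}\|L_t(Z_i)\|_2^2$ (plus boundary terms in {\bf Case 2}). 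All factors being nonnegative, I drop $\tfrac{\si^2}{2}\|L_t(Z_i)\|_2^2$ and use $\mr e^s-1\ge s$ for $s\ge 0$ to obtain an integrand containing the factor $\si^2\langle L_t(Z_1),L_t(Z_2)\rangle\,\mr e^{-\langle L_t(Z_1)+L_t(Z_2),V\rangle}$.

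I then restrict to a localization event $\mc G_M$ on which both bridges stay inside $[x-M\sqrt t,x+M\sqrt t]$ and, in {\bf Case 2}, $x,y>M\sqrt t$ so that $\mf L^0_t(Z_i)=0$ and the boundary factor is $1$ (the reflection is not felt). With the change of variables $y=x+\sqrt t\,u$ and the Brownian scaling $Z_i(s)=x_i+\sqrt t\,\tilde B_i(s/t)$ for independent standard Brownian bridges $\tilde B_i$ on $[0,1]$, one has $\langle L_t(Z_1),L_t(Z_2)\rangle=t^{3/2}\be_u(\tilde B_1,\tilde B_2)$ with $\be_u:=\int_\mbb R\ell_1^a(\tilde B_1)\ell_1^{a+u}(\tilde B_2)\d a$. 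On $\mc G_M$, the hypothesis $V\le \ka|\cdot|+\nu$ gives $\langle L_t(Z_i),V\rangle\le(\ka|x|+\nu+2M\ka\sqrt t)\,t$, while $\Pi_Z(t;x,x)=(1+o(1))/\sqrt{2\pi t}$ uniformly in the relevant range.

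Combining these ingredients, and using $\dd x\,\dd y=\sqrt t\,\dd u\,\dd x$,
\begin{equation*}
\mbf{Var}\bigl[\mr{Tr}[\hat K(t)]\bigr]\ge\frac{\si^2(1+o(1))}{2\pi}\int_{\{|x|>M\sqrt t\}\cap I}\mr e^{-2(\ka|x|+\nu)t}\d x\cdot t\int_{|u|\le M}\mbf E[\be_u\mbf 1_{\mc G_M}]\d u.
\end{equation*}
The decisive cancellation is $\int_I\mr e^{-2\ka|x|t}\d x=\Theta(1/(\ka t))$, which exactly absorbs the explicit factor $t$ from the scaling. Thus the right-hand side is bounded below by $(\si^2/(2\pi\ka))\int_{|u|\le M}\mbf E[\be_u\mbf 1_{\mc G_M}]\d u$ (with an extra factor $1/2$ in {\bf Case 2}), up to a multiplicative $1+o(1)$. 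Using Fubini and $\int_\mbb R\ell_1^a(\tilde B_i)\d a=1$, one has $\int_\mbb R\mbf E[\be_u]\d u=1$; letting first $t\to0$ and then $M\to\infty$ (so $\mbf P[\mc G_M]\to 1$ and $\mbf E[\be_u\mbf 1_{\mc G_M}]\to\mbf E[\be_u]$ by monotone convergence) yields $\liminf_{t\to0}\mbf{Var}[\mr{Tr}[\hat K(t)]]\ge\si^2/(2\pi\ka)>0$ in {\bf Case 1} and $\si^2/(4\pi\ka)>0$ in {\bf Case 2}. The latter agrees with the value $1/(4\pi)$ of Proposition \ref{Proposition: Optimality 2} (where $\si^2=\ka=1/2$), serving as a consistency check.

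The main technical obstacle will be justifying these approximations uniformly in $x$ ranging up to scale $|x|\sim 1/t$. Concretely one must (i) bound $\mbf P[\mc G_M^c]$ by $C\mr e^{-cM^2}$ uniformly in $x$ via standard tail estimates for the maximum of a (possibly reflected) Brownian bridge of time $t$, then absorb the contribution of the bad event; (ii) discard the sliver $x\in(0,M\sqrt t)$ in {\bf Case 2}, whose contribution to the spatial integral is $O(\sqrt t)\ll 1/t$; and (iii) confirm that the dropped term $\tfrac{\si^2}{2}\|L_t(Z_i)\|_2^2\sim t^{3/2}$, being positive, only improves the bound while remaining uniformly small on $\mc G_M$. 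None of these is conceptually difficult, but they will require the same type of careful moment estimates for Brownian bridge local times that underlie the proof of Proposition \ref{Proposition: Optimality 2}.
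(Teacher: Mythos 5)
This statement appears in the paper as a \emph{Conjecture}, and the paper offers no proof of it; it is left as an open problem motivated by Proposition~\ref{Proposition: Optimality 2}. There is therefore no ``paper's own proof'' to compare against, and what you have written is a genuine attempt to settle the conjecture. On close inspection the argument appears to be correct. Starting from Lemma~\ref{Lemma: Variance Formula} specialized to white noise, you discard the nonnegative self-norm terms $\tfrac{\si^2}{2}\|L_t(Z_i)\|_2^2$ from the exponent, replace $\mr e^s-1$ by $s$ (legitimate since $\langle L_t(Z_1),L_t(Z_2)\rangle\ge0$), localize the two bridges, and expose the Brownian scaling $\langle L_t(Z_1),L_t(Z_2)\rangle=t^{3/2}\be_{-u}(\tilde B_1,\tilde B_2)$ after $y=x+\sqrt t\,u$. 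The decisive cancellation $t\int_I\mr e^{-2\ka|x|t}\d x\to 1/\ka$ (resp.~$1/(2\ka)$) is right; the normalization $\int_\mbb R\be_u\,\dd u=1$ is a deterministic identity; and the order of limits ($t\to0$, then $M\to\infty$) is sound precisely because, after rescaling, $\mbf E[\be_u\mbf 1_{\mc G_M}]$ becomes $t$-independent. That your Case 2 lower bound $\si^2/(4\pi\ka)$ reproduces the exact limit $1/(4\pi)$ of Proposition~\ref{Proposition: Optimality 2} when $\si^2=\ka=1/2$, $\nu=0$ is strong evidence the constants are right; indeed your lower bound should be asymptotically sharp, since every term dropped vanishes as $t\to0$.

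Two remarks on the write-up. First, for a lower bound on a nonnegative integrand, restricting to $\mc G_M$ is simply discarding the complement, so there is nothing to ``absorb'' on the bad event; item (i) of your last paragraph is superfluous. Second, the phrase ``the reflection is not felt'' in Case 2 should be made precise, since the reflected bridge $X^{x,x}_t$ is not pathwise a shifted standard bridge. The clean statement is: with the coupling $X^x=|B^x|$, for any nonnegative path functional $F$ one has
\begin{equation*}
\Pi_X(t;x,x)\,\mbf E\big[F(X^{x,x}_t)\big]=\ms G_t(0)\,\mbf E^{x,x}_t\big[F(|B|)\big]+\ms G_t(2x)\,\mbf E^{x,-x}_t\big[F(|B|)\big]\ge \ms G_t(0)\,\mbf E^{x,x}_t\big[F(|B|)\big],
\end{equation*}
and on $\mc G_M$ with $x>M\sqrt t$ the absolute value can be dropped. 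This reduces Case 2 to Case 1 verbatim and replaces the approximate statement $\Pi_X(t;x,x)=(1+o(1))/\sqrt{2\pi t}$ by the exact inequality $\Pi_X(t;x,x)\ge \ms G_t(0)$, which is all you use. With those two points tidied up, I believe this is a correct proof of the conjecture, and in fact it yields the quantitative refinement $\liminf_{t\to0}\mbf{Var}[\mr{Tr}[\hat K(t)]]\ge\si^2/(2\pi\ka)$ in Case 1 and $\ge\si^2/(4\pi\ka)$ in Case 2.
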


\section{Self-Intersection Local Time}
\label{Section: Self-Intersection}

As mentioned in the introduction (and as evidenced by \eqref{Equation: d Exponent}),
controlling the small-$t$ decay rate of self-intersection local times is a crucial ingredient in the proof of our results.
To this effect, in this section, our purpose is to provide one of the main technical ingredient
that we use to establish \eqref{Equation: d Exponent}: Namely, for every $1\leq q\leq 2$, there exists a nonnegative random variable $R_q$ with finite exponential moments in a neighbourhood of zero such that
\begin{align}
\label{Equation: Lp Local Time Scaling}
\sup_{x\in I}\|L_t(Z^x)\|_q^2\leq t^{1+1/q}R_q\qquad\text{for all $t\in (0,1)$,}
\end{align}
where the inequality in \eqref{Equation: Lp Local Time Scaling} is understood
in the sense of stochastic domination.
(Recall that for any two random variables $X$ and $Y$, $X$ is said to be \emph{stochastically dominated} by $Y$ if $\mbf{E}[f(X)]\leq \mbf{E}[f(Y)]$ for any nondecreasing function $f$.
This is equivalent to saying that there exists a random variable $Z$ with the same distribution as $Y$
 such that $X\leq Z$ almost surely; see, e.g., \cite[Theorem 1]{KKO}). We refer to the proof of Theorem \ref{Theorem: Main Examples} in Section \ref{Section: Proof of Main Corollary}
for an explanation of how \eqref{Equation: Lp Local Time Scaling} is used to prove
\eqref{Equation: d Exponent Examples}.
\begin{proposition}
\label{Proposition: Lp Local Time Scaling}
Define $\mathcal{L}^{\sup}:= \sup_{a\in\mbb R}L^a_1(B^0)$.
Let us denote the maximum and minimum of the Brownian motion $B^x$ as
\begin{align}
\label{Equation: Brownian min and max}
M^x(t):=\sup_{s\in[0,t]}B^x(s)
\qquad\text{and}\qquad
m^x(t):=\inf_{s\in[0,t]}B^x(s).
\end{align}
For $q=1$, define $R_q:=1$, and for $q\in (1,2]$, let
\begin{align}
\label{Equation: Rq Variable}
R_q:= \begin{cases}
2^{2(q-1)/q}\|L_1(B^{0})\|^2_{q} & (\textbf{Cases 1 \& 2})\\
c\big(\mathcal{L}^{\sup}\big)^{2(1-1/q)}
+c\bigg(2(\mathcal{L}^{\sup})^2+2\big(M^0(1)-m^0(1)\big)^2\bigg)^{2(1-1/q)} & (\textbf{Case 3}),
\end{cases}
\end{align}
where $c>0$ in {\bf Case 3} is a deterministic constant that only depends on the size of the interval $I=(0,b)$ and $q$.
Then, \eqref{Equation: Lp Local Time Scaling} holds for all $q\in [1,2]$ with $R_q$ shown above.  
\end{proposition}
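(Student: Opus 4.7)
The plan is to handle each case in Assumption \ref{Assumption: Cases} separately and, in each, reduce the local time of $Z^x$ to that of an unconditioned Brownian motion $B$ on $\mbb R$, so that Brownian scaling $L^a_t(B^0)\deq\sqrt t\,L^{a/\sqrt t}_1(B^0)$ produces the $t^{1+1/q}$ factor. The case $q=1$ is immediate: the occupation-time identity \eqref{Equation: Local Time Definition} applied to $f\equiv 1$ gives $\|L_t(Z^x)\|_1=t$ for every choice of $Z$ and $x$, so $\|L_t(Z^x)\|_1^2=t^2$ and $R_1=1$ works. In what follows I focus on $q\in(1,2]$.

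For Case 1, translation invariance under the coupling $B^x=x+B^0$ gives $L^a_t(B^x)=L^{a-x}_t(B^0)$, so $\sup_x\|L_t(B^x)\|_q=\|L_t(B^0)\|_q$, and Brownian scaling yields $\|L_t(B^0)\|_q^2\deq t^{1+1/q}\|L_1(B^0)\|_q^2$. For Case 2, I realize $X^x\deq|B^x|$ (valid for $x\geq 0$ by the strong Markov property at the first hitting time of $0$), which gives the pointwise identity $L^a_t(X^x)=L^a_t(B^x)+L^{-a}_t(B^x)$ for $a>0$. The convexity inequality $(u+v)^q\leq 2^{q-1}(u^q+v^q)$ followed by folding the integral over $[0,\infty)$ back onto $\mbb R$ yields $\|L_t(X^x)\|_q^q\leq 2^{q-1}\|L_t(B^x)\|_q^q$, and Case 1 then supplies the factor $R_q=2^{2(q-1)/q}\|L_1(B^0)\|_q^2$ upon raising to the $2/q$ power.

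The main obstacle is Case 3, where reflected Brownian motion on $(0,b)$ lacks translation invariance and must be compared to $B$ through an unfolding. The key idea is to realize $Y^x=\pi\circ\tilde B$, where $\tilde B$ is a Brownian motion on $\mbb R$ with $\tilde B(0)=x$ and $\pi:\mbb R\to[0,b]$ is the $2b$-periodic triangular-wave map, so that $\pi^{-1}(a)=\{a+2bk:k\in\mbb Z\}\cup\{-a+2bk:k\in\mbb Z\}$ for $a\in(0,b)$. The occupation-time identity then decomposes
\[L^a_t(Y^x)=\sum_{k\in\mbb Z}L^{a+2bk}_t(\tilde B)+\sum_{k\in\mbb Z}L^{-a+2bk}_t(\tilde B),\]
and for $t\leq 1$ each sum has at most $N_t:=\lceil(M^0(t)-m^0(t))/(2b)\rceil+1$ nonzero terms. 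I would apply the $\ell^q$–$\ell^1$ inequality $(\sum_{k\leq N}u_k)^q\leq N^{q-1}\sum_k u_k^q$ to each sum, integrate in $a\in[0,b]$ to recognize a constant multiple of $\|L_t(\tilde B)\|_q^q$, and then use the Hölder bound $\|L_t(\tilde B)\|_q^q\leq(\sup_y L^y_t(\tilde B))^{q-1}\cdot t$ (against the total mass $t$). Brownian scaling converts $\sup_y L^y_t(\tilde B)$ into $\sqrt t\,\mathcal{L}^{\sup}$ and $M^0(t)-m^0(t)$ into $\sqrt t(M^0(1)-m^0(1))$, yielding the desired $t^{1+1/q}$ scaling with an upper bound of the form \eqref{Equation: Rq Variable}; splitting the two preimage branches produces the two additive summands, and the constant $c$ absorbs the $b$- and $q$-dependent factors. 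The required finite exponential moments of $R_q$ near zero then follow from the Gaussian tails of $M^0(1)-m^0(1)$ and the classical sub-Gaussian tails of $\mathcal{L}^{\sup}$ (e.g., via the Barlow–Yor inequalities).
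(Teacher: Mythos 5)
Your proposal is correct and follows essentially the same approach as the paper: trivial $q=1$ reduction, translation-invariance plus Brownian scaling in Case 1, the folding $X^x=|B^x|$ and the convexity inequality in Case 2, and in Case 3 the unfolding coupling $Y^x=\pi\circ\tilde B$ together with a power/H\"older inequality on the sum over preimages, the bound $\|L_t\|_q^q\leq(\sup_a L_t^a)^{q-1}t$, Brownian scaling, and exponential moments of $\mathcal L^{\sup}$ and the range. The only cosmetic difference is in Case 3, where you apply the pointwise power-mean inequality before integrating rather than Minkowski followed by H\"older with indicators as in the paper; this yields a product rather than the sum form in \eqref{Equation: Rq Variable} (the two additive summands there actually arise from splitting $(M^x(t)-m^x(t)+c_2)^{1-1/q}$ by subadditivity, not from the two preimage branches), but your $R_q$ is dominated by the stated one up to constants and has the same exponential-moment properties, so the proof carries through.
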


\begin{proof}
Recall that, thanks to \eqref{Equation: Local Time Definition}, $\|L_t(Z)\|_1=t$.
Thus, if $q=1$, then \eqref{Equation: Lp Local Time Scaling} holds trivially with
$R_q=1$.

We therefore only need to prove \eqref{Equation: Lp Local Time Scaling}
for $q\in(1,2]$. We argue case by case. Let us begin with \textbf{Case 1} which corresponds to $I =\mathbb{R}$. If we couple $B^x=x+B^0$ for all $x\in \mathbb{R}$, then straightforward
changes of variables with a Brownian scaling imply that
\begin{align}
\|L_t(B^x)\|_q^2=\|L_t(B^0)\|_q^2
\deq t\left(\int_{\mbb R} L^{t^{-1/2}a}_1(B^0)^q\d a\right)^{2/q}
=t^{1+1/q}\|L_1(B^0)\|_q^2 \label{Equation: Local Time Inequality}
\end{align}
for every $q>1$.
According to \cite[Theorem 4.2.1]{Chen},
for every $q>1$ there exists some $c>0$ such that
\begin{align}
\label{Equation: Self-Intersection Tails}
\mbf P\big[\|L_1(B^0)\|_q^2>u\big]=\mr e^{-cu^{q/(q-1)}(1+o(1))},\qquad u\to\infty.
\end{align}
This shows $\|L_1(B^0)\|_q^2$ has exponential moments for $1<q\leq 2$.
Thus, in {\bf Case 1} we have \eqref{Equation: Lp Local Time Scaling} with $R_q=2^{2(q-1)/q} \|L_1(B^0)\|_q^2$
since $2^{2(q-1)/q}>1$ whenever $q>1$.

Consider now \textbf{Case 2} where $I$ is taken to be $(0,\infty)$ and $X$ is a reflected Brownian motion taking values in $(0,\infty)$. By coupling $X^x(t)=|B^x(t)|$ for all $t>0$, we note that
for every $a>0$, one has $L_t^a(X^x)=L_t^a(|B^x|)=L_t^a(B^x)+L_t^{-a}(B^x).$
Therefore,
\begin{multline*}
\|L_t(X^x)\|_q^2
=\left(\int_0^\infty L^a_t(X^x)^q\d a\right)^{2/q}\\
\leq 2^{2(q-1)/q}\left(\int_0^\infty L^a_t(B^x)^q+L^{-a}_t(B^x)^q\d a\right)^{2/q}
=2^{2(q-1)/q}\|L_t(B^x)\|_q^2.
\end{multline*}
By \eqref{Equation: Local Time Inequality}, the right-hand side of above display is equal in distribution
to
\[t^{1+1/q}2^{2(q-1)/q}\|L_1(B^0)\|_q^2.\] Owing to \eqref{Equation: Self-Intersection Tails}, $R_q=2^{2(q-1)/q}\|L_1(B^0)\|_q^2$ has finite exponential moments
for $1<q\leq 2$,   
thus the proof of \eqref{Equation: Lp Local Time Scaling} in {\bf Case 2} follows.

Finally, consider \textbf{Case 3} where $I$ is an interval $(0,b)$ for some $b>0$ and $Y$ is a reflected Brownian motion
taking values in $(0,b)$.
We note that we can couple the processes $Y^x$ and $B^x$
in such a way that $Y^x$ is obtained by reflecting the path of $B^x$ on the boundary of $(0,b)$,
namely,
\begin{align}
\label{Equation: Coupling of B and Y}
Y^x(t)=\begin{cases}
B^x(t)-2kb&\text{if }B^x(t)\in[2kb,(2k+1)b],\quad k\in\mbb Z,\\
|B^x(t)-2kb|&\text{if }B^x(t)\in[(2k-1)b,2kb],\quad k\in\mbb Z.
\end{cases}
\end{align}
Under this coupling, we observe that for any $z\in(0,b)$, one has
\begin{align}\label{Equation: Interval Local Time}
L_t^z(Y^x)=\sum_{a\in2b\mbb Z\pm z}L^a_t(B^x).
\end{align}
The argument that follows
is inspired from the proof of \cite[Lemma 2.1]{ChenLi}
(see also \cite[Lemma 5.10]{GaudreauLamarre}):
Under \eqref{Equation: Interval Local Time},
\begin{multline*}
\Big(\int_0^b L_t^z(Y^x)^q\d z\Big)^{1/q}
=\Big(\int_0^b \Big(\sum_{k\in2b\mbb Z} L_t^{k+z}(B^x)+L_t^{k-z}(B^x)\Big)^q\d z\Big)^{1/q}\\
\leq2^{(q-1)/q}\sum_{k\in2b\mbb Z}\Big(\int_{-b}^bL^{k+z}_t(B^x)^q\d z\Big)^{1/q}.
\end{multline*}

Recall that $M^x(t)$ and $m^{x}(t)$ are the maximum and minimum of $B^{x}$ in the interval $[0,t]$. In order for $\int_{-b}^bL^{k+z}_t(B^x)^2\d z$ to be nonzero, it must be the case that
$M^x(t)\geq k-b$ and $m^x(t)\leq k+b$, or, equivalently,
$M^x(t)+b\geq k \geq m^x(t)-b$. Thus, for any $q>1$,
\begin{align}
&\sum_{k\in2b\mbb Z}\Big(\int_{-b}^bL^{k+z}_t(B^x)^q\d z\Big)^{1/q}\nonumber\\
&=\sum_{k\in2b\mbb Z}\Big(\int_{-b}^bL^{k+z}_t(B^x)^q\d z\Big)^{1/q}\mbf 1_{\{M^x(t)+b\geq k \geq m^x(t)-b\}}\nonumber\\
&\leq\Big(\sum_{k\in2b\mbb Z}\int_{-b}^bL^{k+z}_t(B^x)^q\d z\Big)^{1/q}\Big(\sum_{k\in2b\mbb Z}\mbf 1_{\{M^x(t)+b\geq k \geq m^x(t)-b\}}\Big)^{1-1/q}\nonumber\\
&=\Big(\int_\mbb RL^a_t(B^x)^q\d a\Big)^{1/q}\Big(\sum_{k\in2b\mbb Z}\mbf 1_{\{M^x(t)+b\geq k \geq m^x(t)-b\}}\Big)^{1-1/q}\nonumber\\
&\leq c_1t^{1/q}\Big(\sup_{a\in\mbb R}L^a_t(B^x)\Big)^{1-1/q}\big(M^x(t)-m^x(t)+c_2\big)^{1-1/q}\nonumber\\
&\leq c_1t^{1/q}\Bigg(c_2^{1-1/q}\big(\sup_{a\in\mbb R}L^a_t(B^x)\big)^{1-1/q}+\Big(\sup_{a\in\mbb R}L^a_t(B^x)\cdot\big(M^x(t)-m^x(t)\big)\Big)^{1-1/q}\Bigg)\label{Equation: EqString}
\end{align}
where $c_1,c_2>0$ only depend on $b$ and $q$: Indeed, the inequality in the third line follows by H\"older's inequality,
the equality in  the fourth line is obtained by noting that $\sum_{k \in 2b\mathbb{Z}}\int^{b}_{-b} L^{a}_t(B^x)^q \d a$ is equal to $\int_\mbb RL^a_t(B^x)^q\d a$,
we get the inequality in the fifth line by noting that $\int_\mbb RL^a_t(B^x)^q\d a$ is bounded by $(\sup_{a\in\mbb R}L^a_t(B^x))^{q-1} \|L_t(B^x)\|_1$ where $\|L_t(B^x)\|_1=t$, and the inequality in the last line follows by bounding $\big(M^x(t)-m^x(t)+c_2\big)^{1-1/q}$ by $\big(M^x(t)-m^x(t)\big)^{1-1/q}+c^{1-1/q}_2$.      

Given that the distributions of the supremum of local time of $B^x$ and the range $M^x(t)-m^x(t)$ are independent of
the starting point $x$, by Brownian scaling, we have that
\begin{align}\label{Equation: Distribution Equivalence 1}
t^{1/q}\big(\sup_{a\in\mbb R}L^a_t(B^x)\big)^{1-1/q}\deq t^{1/2+1/2q}\big(\mc L^{\sup}\big)^{1-1/q}
\end{align}
and
\begin{align}\label{Equation: Distributional Equivalence 2}
t^{1/q}\bigg(\sup_{a\in\mbb R}L^a_t(B^x)\cdot\big(M^x(t)-m^x(t)\big)\bigg)^{\frac{q-1}{q}}
\deq t\,\bigg(\mc L^{\sup}\cdot\big(M^0(1)-m^0(1)\big)\bigg)^{\frac{q-1}{q}}.
\end{align}
Combining \eqref{Equation: EqString} with \eqref{Equation: Distribution Equivalence 1} and \eqref{Equation: Distributional Equivalence 2} shows that $\|L_t(Y^{x})\|^2_q$ is stochastically dominated by the random variable
\[t^{1+1/q}\left(c\big(\mathcal{L}^{\sup}\big)^{2(1-1/q)}
+t^{1-1/q}c\bigg(2(\mathcal{L}^{\sup})^2+2\big(M^0(1)-m^0(1)\big)^2\bigg)^{2(1-1/q)}\right)\]
 where the constant $c>0$ depends only on $b$ and $q$. The right-hand side of the above display is bounded by $t^{1+1/q}R_q$
 in \eqref{Equation: Rq Variable} for {\bf Case 3} for all $t\in (0,1)$. 
Note that there exists $\theta_0>0$ small enough so that
\begin{align}
\label{Equation: Sup Range Exponential Moments}
\mbf E\left[\exp\left(\theta_0 \sup_{a\in\mbb R}L^a_1(B^0)^2\right)\right],\mbf E\left[\mr e^{\theta_0(M^0(1)-m^0(1))^2}\right]<\infty
\end{align}
(e.g., the proof of \cite[Lemma 2.1]{ChenLi} and references therein).
Given that $4(1-1/q)\leq 2$, for $q\in(1,2]$, $R_q$ in \textbf{Case 3} has finite exponential moments
in a neighborhood of zero. This completes the proof of \eqref{Equation: Lp Local Time Scaling} in \textbf{Case 3},
and thus the proof of Proposition \ref{Proposition: Lp Local Time Scaling}.
\end{proof}

\section{Asymptotic Variance Estimates}
\label{Section: Variance Estimates}

In this section, we provide the main technical contributions of this paper,
and use the latter to prove our two main theorems.
The chief result in this direction consists of the following variance upper bounds
for the trace of $\hat K(t)$ as $t\to0$.

\begin{theorem}
\label{Theorem: Variance Bound}
Suppose that Assumptions \ref{Assumption: Noise},
\ref{Assumption: Cases}, and
\ref{Assumption: Potential} hold. Let $\mf d>1$ be as in \eqref{Equation: d Exponent}.
In {\bf Cases 1 \& 2}, assume that there exists $\ka,\nu,\mf a>0$ such that
\begin{align}\label{Equation: Polynomial Potential Condition}
V(x)\geq|\ka x|^{\mf a}-\nu\qquad\text{for every }x\in I.
\end{align}
In {\bf Cases 1 \& 2},
there exists a finite constant $C_\mf a>0$ that only depends on $\mf a$ such that
\begin{align}
\begin{cases}
\label{Equation: Variance Bound Cases 1 and 2}
\displaystyle
\limsup_{t\to0}\frac{\mbf{Var}\big[\mr{Tr}[\hat K(t)]\big]}{t^{\mf d-1/2-1/\mf a}}\leq \frac{C_\mf a}{\ka}&\text{(if $\ga$ is compactly supported)}
\vspace{5pt}\\
\displaystyle
\limsup_{t\to0}\frac{\mbf{Var}\big[\mr{Tr}[\hat K(t)]\big]}{t^{\mf d-1-2/\mf a}}\leq\frac{C_\mf a}{\ka^2}&\text{(otherwise)}.
\end{cases}
\end{align}
In {\bf Case 3}, one has
\begin{align}
\label{Equation: Variance Bound Case 3}
\limsup_{t\to0}\frac{\mbf{Var}\big[\mr{Tr}[\hat K(t)]\big]}{t^{\mf d-1}}<\infty.
\end{align}
\end{theorem}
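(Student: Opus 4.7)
The plan is to prove the variance bounds via the Feynman-Kac representation in Definition \ref{Definition: Feynman-Kac}. Starting from
\[
\mbf{Var}\bigl[\mr{Tr}[\hat K(t)]\bigr] = \iint_{I\times I} \mr{Cov}\bigl(\hat K(t;x,x),\hat K(t;y,y)\bigr)\,dx\,dy,
\]
I would introduce two independent bridges $Z^{x,x}_t,\tilde Z^{y,y}_t$ with local times $L,\tilde L$, and integrate out the Gaussian noise using $\mbf E[\mr e^{-\xi(f)}]=\mr e^{\|f\|_\ga^2/2}$. Combining this with the identity $\tfrac12\|L+\tilde L\|_\ga^2 = \tfrac12(\|L\|_\ga^2+\|\tilde L\|_\ga^2)+\langle L,\tilde L\rangle_\ga$ and the elementary bound $\mr e^u-1\leq u\,\mr e^u$ for $u\geq 0$ yields
\[
\mr{Cov}\bigl(\hat K(t;x,x),\hat K(t;y,y)\bigr) \leq \Pi_Z(t;x,x)\Pi_Z(t;y,y)\,\mbf E^{x,x}_t\tilde{\mbf E}^{y,y}_t\!\left[\langle L,\tilde L\rangle_\ga\,\mr e^{-\langle L+\tilde L,V\rangle+\mr{bdry}+\tfrac12\|L+\tilde L\|_\ga^2}\right],
\]
where $\mr{bdry}$ denotes the Robin/mixed boundary terms $\bar\al\mf L^0_t+\bar\be\mf L^b_t$ if present.

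The next step is to extract a factor $t^{\mf d}$ from the covariance. Bounding $\langle L,\tilde L\rangle_\ga\leq \|L\|_\ga\|\tilde L\|_\ga$ and applying Cauchy-Schwarz in the bridge measures, I would invoke the independence of $L$ and $\tilde L$ and the bridge version of \eqref{Equation: d Exponent} (see below) to obtain $\mbf E^{x,x}_t[\|L_t\|_\ga^{2\theta}]^{1/(2\theta)}\lesssim t^{\mf d/2}$ uniformly in $x$. The exponential moment of $\tfrac12\|L+\tilde L\|_\ga^2$ stays bounded as $t\to 0$ via exponential Chebyshev applied to \eqref{Equation: d Exponent}, and the boundary local times $\mf L^c_t\to 0$ contribute only bounded factors. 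Therefore, up to constants independent of $(x,y)$ and $t\in(0,1)$,
\[
\bigl|\mr{Cov}\bigl(\hat K(t;x,x),\hat K(t;y,y)\bigr)\bigr|\leq C\,\Pi_Z(t;x,x)\Pi_Z(t;y,y)\,t^{\mf d}\,\mbf E^{x,x}_t\tilde{\mbf E}^{y,y}_t\bigl[\mr e^{-\langle L+\tilde L,V\rangle}\bigr].
\]

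The remaining spatial integration uses \eqref{Equation: Polynomial Potential Condition}: since the bridge concentrates within $O(\sqrt t)$ of its endpoint, $\mbf E^{x,x}_t[\mr e^{-\langle L,V\rangle}]\lesssim \mr e^{-c|\ka x|^{\mf a}t}$ for $|x|$ large, and $\int_I \mr e^{-c|\ka x|^{\mf a}t}\,dx\lesssim t^{-1/\mf a}/\ka$ by a direct change of variables. When $\ga$ is compactly supported, $\langle L,\tilde L\rangle_\ga$ vanishes unless the bridge supports lie in a fixed neighborhood of each other; exploiting this restriction via the identity $\int_\mbb R\ga(s+w)\,ds=\|\ga\|_1$ to collapse one integration inside the expectation, combined with a sharp Cauchy-Schwarz split, yields only a single factor of $t^{-1/\mf a}/\ka$. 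Together with $\Pi_Z\Pi_Z=O(t^{-1})$ and the $t^{\mf d}$ covariance factor this produces the exponent $\mf d-\tfrac12-\tfrac1{\mf a}$ of \eqref{Equation: Variance Bound Cases 1 and 2}. Without compact support both $x$ and $y$ integrations are decoupled and each yield a $t^{-1/\mf a}/\ka$ factor, hence the weaker $\mf d-1-\tfrac2{\mf a}$. In \textbf{Case 3} the bounded domain $I=(0,b)$ makes the $(x,y)$-integration $O(1)$ with no role for $V$, and $\Pi_Z\Pi_Z=O(t^{-1})$ combined with $t^{\mf d}$ immediately gives \eqref{Equation: Variance Bound Case 3}.

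The principal obstacle is transferring \eqref{Equation: d Exponent} from the unconditioned process $Z^x$ to the bridge $Z^{x,x}_t$, since the Feynman-Kac expectations naturally live on bridges. I would handle this via a path decomposition $L_t = L_{[0,t/2]}+L_{[t/2,t]}$: the Markov property reduces each half of the bridge to an absolutely continuous perturbation of an unconditioned path, with an explicit Radon-Nikodym derivative of the form $\Pi_Z(t/2;\cdot,x)/\Pi_Z(t;x,x)$; the triangle inequality for $\|\cdot\|_\ga$ then reduces the bridge estimate to the unconditioned bound \eqref{Equation: d Exponent} at the cost of a bounded constant. A secondary technical point is the Dirichlet-type boundaries ($\bar\al=-\infty$ or $\bar\be=-\infty$): the conventions of Definition \ref{Definition: Feynman-Kac} reinterpret the boundary exponential as the indicator that the bridge avoids the boundary, which only tightens the estimates.
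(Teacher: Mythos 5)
Your overall strategy is essentially the paper's: expand the variance through the Feynman--Kac representation, integrate out the Gaussian noise with $\mbf E[\mr e^{-\xi(f)}]=\mr e^{\|f\|_\ga^2/2}$ to isolate the cross term $\langle L,\tilde L\rangle_\ga$, bound $\mr e^{\mc D_t}-1$ by $|\mc D_t|\mr e^{|\mc D_t|}$, and split via H\"older/Cauchy--Schwarz into a potential factor, boundary factor, diagonal self-intersection factor, and a cross-term that decays like $t^{\mf d}$. You also correctly identified the central technical obstacle (transferring \eqref{Equation: d Exponent} from $Z^x$ to $Z^{x,x}_t$) and its resolution by conditioning at the midpoint with the Doob $h$-transform and using the triangle inequality for $\|\cdot\|_\ga$; this is exactly the paper's ``midpoint trick.'' The Case 3 and non-compactly-supported Case 1\&2 bookkeeping is also correct.

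The gap is in your treatment of compactly supported $\ga$. First, the proposed mechanism---invoking $\int_\mbb R\ga(s+w)\,\dd s=\|\ga\|_1$ to ``collapse one integration inside the expectation''---does not work: $\ga$ need not be integrable (for white noise $\ga=\si^2\de_0$ is a distribution), and even where it is, this identity does not localize $\langle L,\tilde L\rangle_\ga$. The correct mechanism (which your opening remark ``$\langle L,\tilde L\rangle_\ga$ vanishes unless the bridge supports lie in a fixed neighborhood of each other'' actually points toward) is probabilistic: since $\ga$ is supported in $[-K,K]$, the event $\{\mc D_t(x,y)\neq 0\}$ forces the ranges of $Z^{x,x}_t$ and $\bar Z^{y,y}_t$ to come within distance $K$, and by a Brownian scaling and sub-Gaussian tails for bridge maxima this has probability $O(\mr e^{-(|x-y|-K)^2/2ct})$; inserting the indicator and applying Cauchy--Schwarz produces this Gaussian factor multiplicatively. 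Second, your arithmetic for the exponent does not close: combining ``a single factor of $t^{-1/\mf a}/\ka$'' with $\Pi_Z\Pi_Z=O(t^{-1})$ and $t^{\mf d}$ gives $t^{\mf d-1-1/\mf a}/\ka$, not the claimed $t^{\mf d-1/2-1/\mf a}/\ka$. The missing $t^{1/2}$ comes precisely from integrating the Gaussian factor $\mr e^{-(|x-y|-K)^2/2ct}$ over the $(x-y)$-direction: the effective integration window has width $O(\sqrt t)$, so the coupled $(x,y)$-integration against $\mr e^{-t|\ka x|^{\mf a}-t|\ka y|^{\mf a}}$ yields $t^{1/2-1/\mf a}/\ka$ rather than $t^{-1/\mf a}/\ka$, which then reproduces the stated exponent.

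A final minor caveat: the inequality $\mr e^u-1\leq u\mr e^u$ should be used in the two-sided form $|\mr e^z-1|\leq|z|\mr e^{|z|}$, since $\mc D_t$ is a semi-inner-product of two local times and has no a priori sign.
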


The remainder of this section is organized as follows:
In Sections \ref{Section: Rigidity Corollary} and \ref{Section: Proof of Main Corollary},
we use Theorem \ref{Theorem: Variance Bound} to prove
our main results, namely, Theorems \ref{Theorem: Main} and
\ref{Theorem: Main Examples} respectively.
Next, in Section \ref{Section: Variance Bound Strategy}, we prove Theorem \ref{Theorem: Variance Bound}.
In order to not interrupt the flow of the argument,
most of the more technical results used to prove Theorems
\ref{Theorem: Main}, \ref{Theorem: Main Examples}, and
\ref{Theorem: Variance Bound} are stated without proof  in Sections
\ref{Section: Rigidity Corollary}--\ref{Section: Variance Bound Strategy};
the technical results in question are then proved Sections 
\ref{Section: Seminorm Bounds} to \ref{Section: Variance Bounds Final}.

\subsection{Proof of Theorem \ref{Theorem: Main}}
\label{Section: Rigidity Corollary}

Let $(t_n)_{n\in\mbb N}$ be a sequence of positive numbers such that $t_n\to0$ as
$n\to0$. For every $n\in\mbb N$, let us define the test function $f_n(x):=\mr e^{-t_nx}$.
This sequence of functions converges to 1 uniformly on compact sets.
Moreover,
by \eqref{Equation: Trace Formula},
\[\La(f_n)=\sum_{k=1}^\infty\mr e^{-t_n\La_k}=\mr{Tr}[\hat K(t_n)]<\infty.\]
Hence, by Proposition \ref{Proposition: Variance Criterion},
to prove that $\La$ is number rigid, it suffices to show that
\begin{align}
\label{Equation: Variance to Zero}
\lim_{n\to\infty}\mbf{Var}\big[\mr{Tr}[\hat K(t_n)]\big]=0.
\end{align}
We now prove that \eqref{Equation: Variance to Zero} holds
under the conditions stated in Theorem~\ref{Theorem: Main}.

In \textbf{Case 3}, \eqref{Equation: Variance to Zero}
is an immediate consequence of \eqref{Equation: Variance Bound Case 3}
since $\mf d>1$ implies that $O(t^{\mf d-1})=o(1)$ as $t\to0$.
Consider then \textbf{Cases 1 \& 2}. If we know that
$V(x)/|x|^\mf a\to\infty$, then for every $\ka>0$, we can choose
$\nu_\ka>0$ large enough so that $V(x)\geq|\ka x|^\mf a-\nu_\ka$ for every $x\in I$.
As per \eqref{Equation: General Growth Bounds}, we choose
\[\begin{cases}
\mf a=2/(2\mf d-1)\iff\mf d-1/2-1/\mf a=0&\text{(if $\ga$ is compactly supported)}\\
\mf a=2/(\mf d-1)\iff\mf d-1-2/\mf a=0&\text{(otherwise)},
\end{cases}\]
and thus \eqref{Equation: Variance Bound Cases 1 and 2} yields
\[\limsup_{n\to\infty}\mbf{Var}\big[\mr{Tr}[\hat K(t_n)]\big]
\leq
\begin{cases}
\displaystyle
C_\mf a/\ka&\text{(if $\ga$ is compactly supported)}\\
C_\mf a/\ka^2&\text{(otherwise)}.
\end{cases}\]
Since $\ka>0$ was arbitrary,
we then obtain \eqref{Equation: Variance to Zero} in \textbf{Cases 1 \& 2}
by taking $\ka\to\infty$, thus concluding the proof of Theorem \ref{Theorem: Main}.

\subsection{Proof of Theorem \ref{Theorem: Main Examples}}
\label{Section: Proof of Main Corollary}

We want to prove that \eqref{Equation: d Exponent}
holds with the choices of $\mf d>1$ in \eqref{Equation: d Exponent Examples}.
Our main tool in proving this is the following lemma.

\begin{lemma}
\label{Lemma: Gamma to Lp Bounds}
There exists a constant $c>0$ (which only depends on $\ga$)
such that for every $f\in\mr{PC}_c$ and $t>0$, one has
\begin{align}
\label{Equation: Gamma to Lp Bounds}
\|f\|_\ga^2\leq
\begin{cases}
c\|f\|_2^2&\text{(white noise)}\\
ct^H\big(t^{-1/2}\|f\|_2^2+t^{-1}\|f\|_1^2\big)&\text{(fractional noise with $H\in(\tfrac12,1)$)}\\
c\big(\|f\|_{1/(1-1/2p)}^2+\|f\|_1^2\big)&\text{($L^p$-singular noise with $p\geq1$)}\\
c\|f\|_1^2&\text{(bounded noise)}.
\end{cases}
\end{align}
\end{lemma}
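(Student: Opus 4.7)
The plan is to establish the four bounds separately, since they correspond to qualitatively different covariances, but in all four cases I will exploit the formula
\[
\|f\|_\ga^2 = \int_{\mbb R}\ga(z)\,\phi(z)\d z,\qquad \phi(z):=\int_{\mbb R}f(x)f(x-z)\d x=(f*\tilde f)(z),
\]
valid (with the appropriate distributional interpretation) under Assumption \ref{Assumption: Noise}. The white noise case is immediate: with $\ga=\si^2\de_0$ one has $\|f\|_\ga^2=\si^2\|f\|_2^2$. The bounded case is nearly as easy: by Young's inequality, $\|\phi\|_\infty\leq\|f\|_1^2$, and so $\|f\|_\ga^2\leq\|\ga\|_\infty\|f\|_1^2$. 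These two bounds will take essentially no work.

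For the $L^p$-singular case, I would decompose $\ga=\ga_1+\ga_2$ with $\ga_1\in L^p$ and $\ga_2\in L^\infty$, and treat the two pieces separately. The $\ga_2$ term is controlled by $\|\ga_2\|_\infty\|f\|_1^2$ as above. For the $\ga_1$ term, I would apply H\"older's inequality to get $\int\ga_1(z)\phi(z)\d z\leq\|\ga_1\|_p\|\phi\|_{p'}$ where $p'=p/(p-1)$, and then apply Young's convolution inequality: choosing exponents $a=b$ with $1/p'=2/a-1$ gives $a=1/(1-1/2p)$, hence $\|\phi\|_{p'}=\|f*\tilde f\|_{p'}\leq\|f\|_{1/(1-1/2p)}^2$. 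Combining the two estimates yields the $L^p$-singular bound.

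The fractional case, which I expect to be the main obstacle, requires a bit more thought because we need a $t$-dependent bound. Here $\ga(z)=\si^2 H(2H-1)|z|^{2H-2}$ with $2H-2<0$, and I would split the double integral defining $\|f\|_\ga^2$ according to whether $|x-y|$ is smaller or larger than a threshold $r>0$ to be optimized. On $\{|x-y|\leq r\}$, I would apply the AM--GM bound $|f(x)f(y)|\leq\tfrac12(f(x)^2+f(y)^2)$ and use the explicit integral $\int_{|u|\leq r}|u|^{2H-2}\d u=\tfrac{2r^{2H-1}}{2H-1}$ to obtain a contribution of order $r^{2H-1}\|f\|_2^2$. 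On $\{|x-y|>r\}$, the pointwise bound $|x-y|^{2H-2}\leq r^{2H-2}$ gives a contribution of order $r^{2H-2}\|f\|_1^2$. Choosing $r=t^{1/2}$ produces exactly the desired bound $ct^{H-1/2}\|f\|_2^2+ct^{H-1}\|f\|_1^2=ct^H\bigl(t^{-1/2}\|f\|_2^2+t^{-1}\|f\|_1^2\bigr)$.

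The only real subtlety is ensuring that the formal manipulations (convolution, Fubini on the double integral, the splitting) are legitimate for $f\in\mr{PC}_c$; but since such $f$ are bounded with compact support, all integrals in sight are absolutely convergent, and each step can be justified by approximation or by interpreting $\ga$ distributionally as in the remark following Definition \ref{Definition: Covariance}. The constants produced depend on $\si$, $H$, $\|\ga_1\|_p$, $\|\ga_2\|_\infty$, $\|\ga\|_\infty$ as appropriate, which we absorb into a single constant $c>0$ depending only on $\ga$.
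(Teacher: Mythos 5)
Your proposal is correct and follows essentially the same route as the paper's proof: trivial in the white and bounded cases, a near-/far-diagonal split at scale $t^{1/2}$ in the fractional case, and a decomposition $\ga=\ga_1+\ga_2$ combined with Young's inequality in the $L^p$-singular case. The only cosmetic difference is in the fractional case, where the paper first rescales $(a,b)\mapsto t^{1/2}(a,b)$ and then splits at distance $1$ using Young's convolution inequality on the near part, whereas you split the original integral directly at distance $r=t^{1/2}$ and use AM--GM for the near part -- the same estimate by slightly different bookkeeping.
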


Lemma \ref{Lemma: Gamma to Lp Bounds} is proved in Section
\ref{Section: Seminorm Bounds}, and is a relatively straightforward
consequence of applying Young's convolution inequality to the
semi-inner-product $\langle f,g\rangle_\ga$.
With \eqref{Equation: Gamma to Lp Bounds} in hand, the result
follows directly from a combination of
\eqref{Equation: Lp Local Time Scaling}
and dominated convergence:
On the one hand,
if it holds that $\|f\|_\ga^2\leq c_\ga\big(\|f\|_{q_1}^2+\cdots+\|f\|_{q_\ell}^2\big)$
for some $1\leq q_i\leq 2$ and $\ell\in\mbb N$, then
an application of \eqref{Equation: Lp Local Time Scaling}
yields
\begin{align}
\label{Equation: Gamma to Lp specific}
\sup_{x\in I}\mbf E^x\big[\|L_t(Z)\|_\ga^{2\theta}\big]^{1/\theta}
=O\left(\sum_{i=1}^\ell t^{1+1/q_i}\mbf E[R_{q_i}^\theta]^{1/\theta}\right)=O(t^{1+1/\max\{q_1,\ldots,q_\ell\}})
\end{align}
as $t\to0$. In the case of white, $L^p$-singular, and bounded noise,
this immediately yields \eqref{Equation: d Exponent Examples} thanks to \eqref{Equation: Gamma to Lp Bounds}.
On the other hand, in the case of fractional noise,
an application of \eqref{Equation: Lp Local Time Scaling} and \eqref{Equation: Gamma to Lp Bounds} yields
the following asymptotic as $t\to0$, concluding the proof of \eqref{Equation: d Exponent Examples}:
\begin{align}
\label{Equation: Gamma to Lp specific fractional}
\sup_{x\in I}\mbf E^x\big[\|L_t(Z)\|_\ga^{2\theta}\big]^{1/\theta}
= O\left(t^H\big(t^{-1/2+3/2}\mbf E[R_2^\theta]^{1/\theta}+t^{-1+2}\mbf E[R_1^\theta]^{1/\theta}\big)\right)
=O(t^{1+H}).
\end{align}

\subsection{Proof of Theorem \ref{Theorem: Variance Bound}}
\label{Section: Variance Bound Strategy}

We divide the proof of Theorem~\ref{Theorem: Variance Bound} into three steps. In the first step (Section~\ref{Section: Variance Formula 2}), we derive an integral formula of $\mbf {Var}[\mathrm{Tr}[\hat{K}(t)]]$. The second step (Section~\ref{Section: Technical Lemmas}) provides upper bounds on the different components of the integral formula. Those upper bounds are summarized in few technical lemmas whose proofs are relegated to Section~\ref{Section: Variance Formula}-\ref{Section: Variance Bounds Final}.
The third and final step (Section~\ref{Section: Conclusion of Proof}) completes the proof of Theorem~\ref{Theorem: Variance Bound} by combining the ingredients of Section~\ref{Section: Technical Lemmas} with the integral formula of Section~\ref{Section: Variance Formula 2}.

\subsubsection{Step 1. Variance Formula}\label{Section: Variance Formula 2}

We begin by introducing some notational shortcuts used
throughout this section to improve readability:

\begin{notation}
For the remainder of Section \ref{Section: Variance Estimates}, we use $C,c>0$ to denote constants independent of
$\ka$, $\nu$, $\mf a$ and $t$ whose precise values may change from one equation to the next,
and we use $C_\mf a>0$ to denote such constants that depend only on $\mf a$.
\end{notation}

\begin{notation}
\label{Notation: Exponential Trace Variance Shorthands}
Let $Z$ be as in \eqref{Equation: Z}, and let $\bar Z$ be an independent copy of $Z$.
For every $t>0$, we define the following random functions:
For $(x,y)\in I^2$,
\begin{align*}
\mc A_t(x,y)&:=-\langle L_t(Z^{x,x}_t)+L_t(\bar Z^{y,y}_t),V\rangle,\\
\mc B_t(x,y)&:=
\begin{cases}
0&\text{({\bf Case 1})}\\
\bar\al\mf L^{0}_t(X^{x,x}_t)+\bar\al\mf L^{0}_t(\bar X^{y,y}_t)&\text{({\bf Case 2})}\\
\bar\al\mf L^{0}_t(Y^{x,x}_t)+\bar\be \mf L^b_t(Y^{x,x}_t)+\bar\al\mf L^{0}_t(\bar Y^{y,y}_t)+\bar\be \mf L^b_t(\bar Y^{y,y}_t)&\text{({\bf Case 3})},
\end{cases}\\
\mc C_t(x,y)&:=\frac{\|L_t(Z^{x,x}_t)\|_\ga^2+\|L_t(\bar Z^{y,y}_t)\|_\ga^2}{2},\\
\mc D_t(x,y)&:=\langle L_t(Z^{x,x}_t),L_t(\bar Z^{y,y}_t)\rangle_\ga,\\
\mc P_t(x,y)&:=\Pi_Z(t;x,x)\Pi_Z(t;y,y).
\end{align*}
\end{notation}

Our variance formula is as follows:

\begin{lemma}
\label{Lemma: Variance Formula}
Following Notation \ref{Notation: Exponential Trace Variance Shorthands},
it holds that
\begin{align}
\label{Equaton: Variance Formula}
\mbf{Var}\big[\mr{Tr}[\hat K(t)]\big]=\int_{I^2}\mc P_t(x,y)\,\mbf E\left[\mr e^{ (\mc A_t+\mc B_t+\mc C_t)(x,y)}
\left(\mr e^{\mc D_t(x,y)}-1\right)\right]\d x\dd y.
\end{align}
\end{lemma}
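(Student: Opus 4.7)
\textbf{Proof plan for Lemma \ref{Lemma: Variance Formula}.} The plan is to unfold the definitions of $\hat K(t;x,x)$ and $\hat K(t;y,y)$ and reduce everything to a Gaussian integration with respect to the noise $\xi$. First, using $\mr{Tr}[\hat K(t)] = \int_I \hat K(t;x,x)\d x$ and Fubini's theorem (whose applicability is guaranteed by Proposition \ref{prop: F-K}, which ensures $\mr{Tr}[\hat K(t)] < \infty$ almost surely, and by the fact that all exponentials appearing below are positive random variables), I rewrite
\[
\mbf{Var}\big[\mr{Tr}[\hat K(t)]\big] = \int_{I^2}\bigl(\mbf E[\hat K(t;x,x)\hat K(t;y,y)] - \mbf E[\hat K(t;x,x)]\mbf E[\hat K(t;y,y)]\bigr)\d x\dd y,
\]
and the prefactor $\mc P_t(x,y) = \Pi_Z(t;x,x)\Pi_Z(t;y,y)$ factors out of the integrand.

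Second, I introduce two independent copies $Z$ and $\bar Z$ of the underlying (reflected) Brownian motion, both independent of $\xi$. Freezing $\xi$, the Feynman-Kac expectation $\mbf E^{x,y}_t$ in Definition \ref{Definition: Feynman-Kac} is over the bridges alone, so
\[
\hat K(t;x,x)\hat K(t;y,y) = \mc P_t(x,y)\,\mbf E\!\left[\mr e^{\mc A_t(x,y) + \mc B_t(x,y) - \xi(L_t(Z^{x,x}_t) + L_t(\bar Z^{y,y}_t))}\,\Big|\,\xi\right],
\]
where the outer expectation is over the independent bridges $Z^{x,x}_t$ and $\bar Z^{y,y}_t$. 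Taking expectation over $\xi$ and exchanging the order of integration (the two expectations are over independent variables), I can condition on the two bridges and perform the Gaussian integration in $\xi$: since $\xi(f)$ is centered Gaussian with variance $\|f\|_\ga^2$ by \eqref{Equation: Covariance Function}, and since $L_t(Z^{x,x}_t) + L_t(\bar Z^{y,y}_t)\in\mr{PC}_c$ almost surely (so that $\xi$ applied to it is well defined as in Definition \ref{Definition: Stochastic Integral}), I obtain
\[
\mbf E\big[\mr e^{-\xi(L_t(Z^{x,x}_t) + L_t(\bar Z^{y,y}_t))}\,\big|\,Z^{x,x}_t,\bar Z^{y,y}_t\big] = \mr e^{\tfrac12\|L_t(Z^{x,x}_t) + L_t(\bar Z^{y,y}_t)\|_\ga^2}.
\]
Expanding the seminorm via its underlying sesquilinear form yields $\tfrac12\|L_t(Z^{x,x}_t) + L_t(\bar Z^{y,y}_t)\|_\ga^2 = \mc C_t(x,y) + \mc D_t(x,y)$.

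Third, I apply the same Gaussian integration separately to $\hat K(t;x,x)$ and $\hat K(t;y,y)$: since the two bridges are independent of each other (and of $\xi$),
\[
\mbf E[\hat K(t;x,x)]\mbf E[\hat K(t;y,y)] = \mc P_t(x,y)\,\mbf E\!\left[\mr e^{\mc A_t(x,y) + \mc B_t(x,y) + \tfrac12\|L_t(Z^{x,x}_t)\|_\ga^2 + \tfrac12\|L_t(\bar Z^{y,y}_t)\|_\ga^2}\right],
\]
and the quantity in the exponent is precisely $\mc A_t(x,y) + \mc B_t(x,y) + \mc C_t(x,y)$, with the cross term $\mc D_t(x,y)$ absent. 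Subtracting the two expressions yields the factor $\mr e^{\mc D_t(x,y)} - 1$ and completes the proof.

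The main technical obstacle is the careful bookkeeping of independence and conditional expectations: one must keep straight that the bridge laws $\mbf E^{x,y}_t$ in Definition \ref{Definition: Feynman-Kac} are conditional on $\xi$, that the auxiliary copies $Z$ and $\bar Z$ used to express the product are independent bridges and independent of $\xi$, and that Fubini applies to swap the various orders of integration. All of this is underpinned by Proposition \ref{prop: F-K}, which makes sense of $\mr{Tr}[\hat K(t)]$ and guarantees a.s.\ finiteness, and by the measurability and continuity properties of local time recorded in the remarks following Definition \ref{Definition: Feynman-Kac}. Once these points are settled, the identity reduces to computing characteristic functions of Gaussians, which is routine.
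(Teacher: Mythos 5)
Your proposal is correct and follows essentially the same route as the paper's proof: unfold $\hat K(t;x,x)$ via Feynman-Kac, apply Fubini (justified by nonnegativity/Tonelli), introduce an independent bridge copy $\bar Z$, condition on the bridges to carry out the Gaussian $\xi$-integration using $\mbf E[\mr e^{-\xi(f)}]=\mr e^{\frac12\|f\|_\ga^2}$, and observe that the cross term $\mc D_t$ survives in the second moment but not in the squared mean. The only cosmetic difference is that you compute the product of kernel values directly before integrating, whereas the paper first computes $\mbf E[\mr{Tr}[\hat K(t)]]$ and then squares it, but the algebra and the key conditioning step are identical.
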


Lemma \ref{Lemma: Variance Formula} is proved in Section~\ref{Section: Variance Formula} using the Feynman-Kac
formula in Proposition \ref{prop: F-K}.

\subsubsection{Step 2. Technical Results}
\label{Section: Technical Lemmas}

By a combination of applying H\"older's inequality to \eqref{Equaton: Variance Formula}
and bounding $\mc P_t(x,y)$ uniformly in $x,y\in I$ using the right-hand side of \eqref{Equation: Transition Bounds},
we obtain the following upper bound for $t\in(0,1]$:
\begin{multline}
\label{Equaton: Variance Formula 2}
\mbf{Var}\big[\mr{Tr}[\hat K(t)]\big]\leq
Ct^{-1}\int_{I^2}
\mbf E\left[\mr e^{4\mc A_t(x,y)}\right]^{1/4}
\mbf E\left[\mr e^{4\mc B_t(x,y)}\right]^{1/4}\\
\times\mbf E\left[\mr e^{4\mc C_t(x,y)}\right]^{1/4}
\mbf E\left[\left(\mr e^{\mc D_t(x,y)}-1\right)^4\right]^{1/4}\d x\dd y.
\end{multline}
At this point, the proof of Theorem \ref{Theorem: Variance Bound} is reduced
to controlling the $t\to0$ asymptotics of the four terms involving
$\mc A_t$, $\mc B_t$, $\mc C_t$, and $\mc D_t$
on the right-hand side of \eqref{Equaton: Variance Formula 2}.
We now state the technical results we use for this purpose.
Our first such result states that the contributions of
$\mc B_t$ and $\mc C_t$ to
\eqref{Equaton: Variance Formula 2} are uniformly bounded for small $t$:

\begin{lemma}
\label{Lemma: B & C Bounds}
For any $\theta>0$,
\begin{align}
\label{Equation: B Bound}
\limsup_{t\to0}\sup_{(x,y)\in I^2}\mbf E\left[\mr e^{\theta \mc B_t(x,y)}\right]&\leq C,\\
\limsup_{t\to0}\sup_{(x,y)\in I^2}\mbf E\left[\mr e^{\theta\mc C_t(x,y)}\right]&\leq C.
\label{Equation: C Bound}
\end{align}
\end{lemma}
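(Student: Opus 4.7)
Both functionals $\mc B_t(x,y)$ and $\mc C_t(x,y)$ decompose as a sum of a term depending only on $Z^{x,x}_t$ and a term depending only on the independent copy $\bar Z^{y,y}_t$, so the exponential moment factors over $(x,y)$. Consequently, it is enough to show that for every $\theta>0$ and for $t$ sufficiently small,
$$\sup_{x\in I}\mbf E\bigl[\mr e^{\theta\bar\al\mf L^0_t(Z^{x,x}_t)+\theta\bar\be\mf L^b_t(Z^{x,x}_t)}\bigr]\le C\qquad\text{and}\qquad \sup_{x\in I}\mbf E\bigl[\mr e^{\theta\|L_t(Z^{x,x}_t)\|_\ga^2}\bigr]\le C,$$
where the $\bar\be$ term is present only in \textbf{Case 3} and the first display is trivial ($\le1$) when $\bar\al=-\infty$ or $\bar\be=-\infty$ thanks to the convention $\mr e^{-\infty\cdot\mf L^c}\le1$.

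The common engine is an absolute-continuity transfer from the bridge to the unconditioned process on the first half $[0,t/2]$ of the time interval. For $\mc F_{t/2}$-measurable functionals,
$$\frac{\mr d\mbf P^{x,x}_t}{\mr d\mbf P^x}\bigg|_{\mc F_{t/2}}=\frac{\Pi_Z(t/2;Z(t/2),x)}{\Pi_Z(t;x,x)},$$
and case-by-case inspection of the explicit formulas in Definition \ref{Definition: Local Time Etc} shows that this density is bounded by a constant $K<\infty$ uniformly in $x,Z(t/2)\in I$ and $t\in(0,1]$: in \textbf{Case 1} the ratio is exactly $\sqrt{2}\,\mr e^{-(x-Z(t/2))^2/t}\le\sqrt{2}$; in \textbf{Case 2} the bound $\Pi_X(t;x,x)\ge(2\pi t)^{-1/2}$ and $\Pi_X(t/2;\cdot,x)\le 2(\pi t)^{-1/2}$ give $K=2\sqrt{2}$; in \textbf{Case 3} the same lower bound on $\Pi_Y(t;x,x)$ combined with a theta-function upper bound on the periodized Gaussian $\Pi_Y(t/2;\cdot,x)$ yields a constant $K=K(b)$. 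For $\mc C_t$, apply Minkowski's inequality $\|L_t\|_\ga^2\le2(\|L_{[0,t/2]}\|_\ga^2+\|L_{[t/2,t]}\|_\ga^2)$, Cauchy--Schwarz, and the time-reversal symmetry of the bridge to obtain
$$\mbf E\bigl[\mr e^{\theta\|L_t(Z^{x,x}_t)\|_\ga^2}\bigr]\le K\cdot\mbf E^x\bigl[\mr e^{4\theta\|L_{t/2}(Z)\|_\ga^2}\bigr].$$
Using Assumption \ref{Assumption: Noise} to dominate $\|L_{t/2}(Z^x)\|_\ga^2$ by $c_\ga\sum_i\|L_{t/2}(Z^x)\|_{q_i}^2$, applying H\"older's inequality, and invoking the stochastic domination $\|L_{t/2}(Z^x)\|_{q_i}^2\preceq(t/2)^{1+1/q_i}R_{q_i}$ of Proposition \ref{Proposition: Lp Local Time Scaling}, together with the finiteness of $\mbf E[\mr e^{\theta_0R_{q_i}}]$ in a neighbourhood of $0$, makes each resulting factor finite (and in fact tending to $1$) as $t\to0$ since the prefactors $t^{1+1/q_i}$ vanish.

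For $\mc B_t$, after discarding the trivial Dirichlet components and replacing $\bar\al,\bar\be$ by $|\bar\al|,|\bar\be|$ (exploiting $\mf L^c_t\ge0$), the same absolute-continuity transfer reduces the bound to controlling $\sup_x\mbf E^x\bigl[\mr e^{\theta'\mf L^c_{t/2}(Z)}\bigr]$ for $c\in\partial I$. Tanaka's formula (which identifies boundary local time of reflected Brownian motion with the local time at the boundary value of standard Brownian motion via $X^x=|B^x|$) and the folding coupling \eqref{Equation: Coupling of B and Y} in \textbf{Case 3} dominate $\mf L^c_{t/2}(Z^x)$ by a constant times $\sup_{a\in\mbb R}L^a_{t/2}(B^x)\cdot\bigl(1+(M^x(t/2)-m^x(t/2))/b\bigr)$, which by Brownian scaling is $\sqrt{t/2}$ times a random variable with finite exponential moments in a neighbourhood of $0$ by \eqref{Equation: Sup Range Exponential Moments}. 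The vanishing prefactor $\sqrt{t/2}$ then keeps the exponential moment bounded (and forces it to $1$) as $t\to0$.

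The main technical obstacle I anticipate is the case-by-case verification of the uniform Radon--Nikodym bound, most delicately in \textbf{Case 3} where both numerator and denominator are periodic sums and the bound must be uniform in the starting point $x\in(0,b)$, the midpoint value $Z(t/2)\in(0,b)$, and $t\in(0,1]$; the key is to extract the diagonal term $\ms G_t(0)$ from $\Pi_Y(t;x,x)$ for a lower bound and to use a crude $L^\infty$ bound on the theta function for the upper bound on $\Pi_Y(t/2;\cdot,x)$. A secondary (but minor) care point is cleanly accommodating all the sign combinations of $\bar\al,\bar\be$ and the $-\infty$ convention simultaneously in a single statement.
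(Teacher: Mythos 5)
Your proof is correct and follows essentially the same approach as the paper. The paper's ``midpoint sampling trick'' (Doob $h$-transform applied at time $t/2$, the quantity $\mathfrak{s}(Z)$ in \eqref{Equation: Midpoint Transition Bound}, and the transition-density bounds of Appendix~\ref{Appendix: Some Stoch Analysis}) is exactly your absolute-continuity transfer on $\mc F_{t/2}$ together with the uniform Radon--Nikodym bound $K$; the paper uses conditional independence of the two half-bridges plus Jensen where you use Cauchy--Schwarz plus time-reversal, which is an equivalent rearrangement; and the reduction of the boundary local time of the reflected process to $\sup_a L^a(B)$ and the range $M-m$ via the folding coupling, Brownian scaling, and \eqref{Equation: Sup Range Exponential Moments} is exactly the paper's {\bf Case 3} argument. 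The only minor divergence is in {\bf Case 2} for \eqref{Equation: B Bound}: the paper couples $X=|B|$, rescales, and then cites an external lemma of \cite{GaudreauLamarre} for the exponential moment of $\mf L^0_1(B)$, whereas you absorb this case into the $\sup_a L^a_1(B)$ bound covered by \eqref{Equation: Sup Range Exponential Moments}; your version is slightly more self-contained but not a different route. One small imprecision in your write-up: the dominating random variable for $\mf L^c_{t/2}(Z^x)$ is of the form $\sqrt{t/2}\,A + (t/2)\,B$ (two different powers of $t$) rather than ``$\sqrt{t/2}$ times a single random variable''; this is harmless since for $t\le 1$ one can bound $(t/2)\le\sqrt{t/2}$ and both $A$ and $B$ have exponential moments near $0$, but it should be stated.
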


Lemma \ref{Lemma: B & C Bounds} is proved in Section \ref{Section: Uniformly Bounded Terms}.
One of the main technical ingredients in the proof of this result is the estimate \eqref{Equation: Lp Local Time Scaling},
together with a midpoint sampling trick that allows to extend the latter (which concerns the unconditioned process $Z^x$)
to the bridge processes $Z^{x,x}_t$ (see \eqref{Equation: Tower+Doob}--\eqref{Equation: Midpoint Trick 3}
for the details).

Our second and third technical results 
concern the decay rate of the expectation involving $\mc D_t$. On the one hand,
the following result explains the distinction between general $\ga$
and compactly supported $\ga$ in
Theorem \ref{Theorem: Variance Bound} for {\bf Cases 1 \& 2}:

\begin{lemma}
\label{Lemma: D Bound 1}
Let $\theta>0$ be arbitrary.
Let $K>0$ be such that $\ga$ is supported on the compact interval $[-K,K]$
(that is, $\langle f,\ga\rangle=0$ for every $f$ that vanishes in $[-K,K]$).
In {\bf Case 1},
\[\Big(\mbf E\Big[\big|\mr e^{\mc D_t(x,y)}-1\big|^\theta\Big]\Big)^{1/\theta}
\leq C\mr e^{-\frac{(|x-y|-K)^2}{2ct}}\Big(\mbf E\Big[\big|\mr e^{\mc D_t(x,y)}-1\big|^{2\theta}\Big]\Big)^{1/2\theta}\]
for all $x,y\in\mbb R$. In {\bf Case 2}, for every $x,y>0$, one has
\[\Big(\mbf E\Big[\big|\mr e^{\mc D_t(x,y)}-1\big|^\theta\Big]\Big)^{1/\theta}
\leq C\left(\mr e^{-\frac{(|x-y|-K)^2}{2ct}}+\mr e^{-\frac{(|x+y|-K)^2}{2ct}}\right)\Big(\mbf E\Big[\big|\mr e^{\mc D_t(x,y)}-1\big|^{2\theta}\Big]\Big)^{1/2\theta}.\]
\end{lemma}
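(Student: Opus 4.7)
The plan is to exploit the compact support of $\ga$: since $\ga$ vanishes outside $[-K,K]$ and the local time $L_t(Z^{x,x}_t)$ is supported on the range of the bridge $Z^{x,x}_t$, the bilinear form $\mc D_t(x,y)=\langle L_t(Z^{x,x}_t),L_t(\bar Z^{y,y}_t)\rangle_\ga$ vanishes whenever the ranges of the two independent bridges are separated by more than $K$. Accordingly, I would insert $\mbf 1_{\{\mc D_t(x,y)\neq 0\}}$ into the left-hand side (legitimate because $\mr e^{\mc D_t}-1=0$ on $\{\mc D_t=0\}$) and apply Cauchy-Schwarz to peel off a factor of $\mbf P[\mc D_t(x,y)\neq 0]^{1/(2\theta)}$. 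This reduces the lemma to bounding this probability by the required exponential factor, since the dependence on $\theta$ can be absorbed into the constant $c$ in the exponent.

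In {\bf Case 1}, the ranges of the independent bridges $B^{x,x}_t$ and $\bar B^{y,y}_t$ are concentrated near $x$ and $y$, so forcing them to come within $K$ of each other requires at least one of the two bridges to deviate from its start by $(|x-y|-K)/2$. A union bound together with the classical Brownian bridge tail $\mbf P[\sup_{s\leq t}|B^{0,0}_t(s)|\geq u]\leq 2\mr e^{-2u^2/t}$ then yields $\mbf P[\mc D_t(x,y)\neq 0]\leq C\mr e^{-(|x-y|-K)^2/(ct)}$, which, after raising to the $1/(2\theta)$ power, gives the claim.

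{\bf Case 2} is the main obstacle, and I would handle it via the mixture representation of the reflected Brownian bridge: conditioning $X^x=|B^x|$ on $X^x(t)=x$ yields
\[X^{x,x}_t\deq\begin{cases}|B^{x,x}_t|&\text{with probability }\ms G_t(0)/\Pi_X(t;x,x),\\ |B^{x,-x}_t|&\text{with probability }\ms G_t(2x)/\Pi_X(t;x,x),\end{cases}\]
and likewise for $\bar X^{y,y}_t$. Splitting $\mbf P[\mc D_t(x,y)\neq 0]$ into the four resulting cases, the ``local-local'' contribution (both bridges take the first alternative) reproduces the {\bf Case 1} argument and yields the $\mr e^{-(|x-y|-K)^2/(2ct)}$ summand. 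The other three contributions carry mixture weights bounded by $\mr e^{-2x^2/t}$, $\mr e^{-2y^2/t}$, or $\mr e^{-2(x^2+y^2)/t}$, respectively. Combining each weight with bridge maximum tails---notably $\mbf P[\max B^{x,-x}_t\geq u]=\mr e^{-2(u^2-x^2)/t}$ for $u\geq x$---and then some elementary algebra (which is the only subtle calculation, amounting to checking that in each of the three cases the mixture weight and fluctuation exponent jointly dominate the target quadratic $(|x+y|-K)^2/(2ct)$) collapses these three terms into the $\mr e^{-(|x+y|-K)^2/(2ct)}$ summand, completing the proof.
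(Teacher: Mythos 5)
Your Case 1 argument is essentially the paper's: insert the indicator of $\{\mc D_t\neq 0\}$, apply H\"older/Cauchy--Schwarz to peel off $\mbf P[\mc D_t(x,y)\neq 0]^{1/(2\theta)}$, then bound this via Brownian scaling and sub-Gaussian tails of bridge maxima. This matches the paper.

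For Case 2, however, your route is genuinely different, and it contains an imprecision. The paper does not decompose $X^{x,x}_t$ into a two-point mixture over the endpoint sign. Instead, it invokes the inequality $\mbf E[F(X^{x,x}_t)]\leq 2\,\mbf E[F(|B^{x,x}_t|)]$ (a black-box result from \cite[(5.9)]{GaudreauLamarre}) to reduce directly to the single absolute bridge $|B^{x,x}_t|$, and then decomposes the bilinear form $\langle L_t(|B^{x,x}_t|),L_t(|\bar B^{y,y}_t|)\rangle_\ga$ into a direct piece and a reflected piece using $L^a_t(|B|)=L^a_t(B)+L^{-a}_t(B)$. It is precisely this decomposition of the bilinear form that produces the two exponents $(|x-y|-K)^2$ and $(|x+y|-K)^2$.

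The gap in your proposal is the claim that the local-local mixture component ``reproduces the Case 1 argument and yields the $\mr e^{-(|x-y|-K)^2/(2ct)}$ summand.'' In that component you are still dealing with $|B^{x,x}_t|$ and $|\bar B^{y,y}_t|$, not the signed bridges, so the two local times are folded at $0$ and can interact \emph{through the origin} even when both bridges keep the sign of their starting points. That reflected interaction is exactly where the $(|x+y|-K)^2$ exponent comes from in the paper, and it is already present in your local-local case. So the mixture decomposition does not cleanly separate the two exponents, and if you simply ``reproduce Case 1'' there you are treating $|B^{x,x}_t|$ as if it were $B^{x,x}_t$. Because $(|x+y|-K)^2\geq(|x-y|-K)^2$ for $x,y>0$, the missed term is dominated and the final bound is still true, so this is not a fatal error; but you would in any case have to carry out the same bilinear-form decomposition the paper does, at which point the mixture splitting into the other three components becomes redundant overhead. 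Separately, the ``elementary algebra'' you defer for the flipped components is less automatic than you suggest: in regimes such as $y\leq x+K$ the interaction probability is identically one, so the mixture weight alone must carry the target quadratic, and you need $2(x^2+y^2)\geq(x+y)^2$ (and a choice of $c$) to make the exponents match. This works, but it should be checked rather than asserted.
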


Lemma \ref{Lemma: D Bound 1} is proved in Section \ref{Section: Compactly Supported},
and its proof consists of a formalization of the following simple heuristic: The farther apart $x$ and $y$
are from each other, the more likely it is that the supports of $L_t(Z^{x,x}_t)$ and $L_t(\bar Z^{y,y}_t)$
are separated by a distance of at least $K>0$, in which case the semi-inner-product
$\mc D_t(x,y)=\langle L_t(Z^{x,x}_t),L_t(\bar Z^{y,y}_t)\rangle_\ga$ vanishes if $\ga$ is supported
in $[-K,K]$. On the other hand, the following result provides an estimate on the decay rate of $\mr e^{\mc D_t(x,y)}-1$
as $t\to0$, and explains the appearance of the assumption \eqref{Equation: d Exponent}
in the statement of Theorem \ref{Theorem: Main}:

\begin{lemma}
\label{Lemma: D Bound 2}
Let $\mf d>1$ be as in \eqref{Equation: d Exponent}.
For any $\theta>0$,
\[\limsup_{t\to0}t^{-\mf d}\,\sup_{(x,y)\in I^2}\Big(\mbf E\Big[\big|\mr e^{\mc D_t(x,y)}-1\big|^\theta\Big]\Big)^{1/\theta}\leq C.\]
\end{lemma}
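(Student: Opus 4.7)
The plan is to peel the exponential via $|\mr e^a-1|\leq|a|\mr e^{|a|}$, apply Cauchy--Schwarz to split the resulting expectation into a polynomial moment of $\mc D_t$ and an exponential moment of $|\mc D_t|$, control the exponential moment through Lemma \ref{Lemma: B & C Bounds}, reduce the polynomial moment to individual $\ga$-seminorms of bridge local times via Cauchy--Schwarz for $\langle\cdot,\cdot\rangle_\ga$ and the independence of $Z$ and $\bar Z$, and finally transfer the bridge moment estimates to unconditioned ones so that \eqref{Equation: d Exponent} can be invoked.

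Concretely, the first Cauchy--Schwarz yields
\[\mbf E\big[|\mr e^{\mc D_t(x,y)}-1|^\theta\big]\leq \mbf E\big[|\mc D_t(x,y)|^{2\theta}\big]^{1/2}\mbf E\big[\mr e^{2\theta|\mc D_t(x,y)|}\big]^{1/2}.\]
For the exponential factor, Cauchy--Schwarz for $\langle\cdot,\cdot\rangle_\ga$ followed by AM--GM gives
\[|\mc D_t(x,y)|\leq \|L_t(Z^{x,x}_t)\|_\ga\,\|L_t(\bar Z^{y,y}_t)\|_\ga\leq \mc C_t(x,y),\]
so this factor is uniformly bounded in $(x,y)\in I^2$ and $t\in(0,1]$ by \eqref{Equation: C Bound}. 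For the polynomial factor, the same Cauchy--Schwarz together with the independence of $Z^{x,x}_t$ and $\bar Z^{y,y}_t$ yields
\[\mbf E\big[|\mc D_t(x,y)|^{2\theta}\big]\leq \mbf E\big[\|L_t(Z^{x,x}_t)\|_\ga^{2\theta}\big]\,\mbf E\big[\|L_t(\bar Z^{y,y}_t)\|_\ga^{2\theta}\big],\]
and each factor needs to be bounded by $Ct^{\theta\mf d}$ uniformly in the starting point.

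The hard part will be exactly this last step, since \eqref{Equation: d Exponent} is only stated for the unconditioned process $Z^x$, not the bridge $Z^{x,x}_t$. The plan is to reuse the midpoint sampling trick already employed in the proof of Lemma \ref{Lemma: B & C Bounds}: decompose $L_t(Z^{x,x}_t)=L_{[0,t/2]}(Z^{x,x}_t)+L_{[t/2,t]}(Z^{x,x}_t)$, apply the triangle inequality for $\|\cdot\|_\ga$, and observe that conditioning on the midpoint exhibits the law of $Z^{x,x}_t|_{[0,t/2]}$ as absolutely continuous with respect to the unconditioned law of $Z^x|_{[0,t/2]}$ with Radon--Nikodym density $\Pi_Z(t/2;Z(t/2),x)/\Pi_Z(t;x,x)$. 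The standard upper and lower bounds on $\Pi_Z$ (those used right after \eqref{Equaton: Variance Formula 2}) imply that this density is uniformly bounded in $x\in I$ and $t\in(0,1]$ across all three cases of Assumption \ref{Assumption: Cases}. The time-reversal symmetry of the bridge handles $[t/2,t]$ identically, and \eqref{Equation: d Exponent} then gives $\mbf E[\|L_t(Z^{x,x}_t)\|_\ga^{2\theta}]\leq C\mbf E^x[\|L_{t/2}(Z)\|_\ga^{2\theta}]\leq C t^{\theta\mf d}$.

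Inserting all bounds into the Cauchy--Schwarz split produces $\mbf E[|\mr e^{\mc D_t(x,y)}-1|^\theta]^{1/\theta}\leq C t^{\mf d}$ uniformly in $(x,y)\in I^2$, as required. Apart from the bridge-to-unconditioned transfer, every step is a short algebraic manipulation built on tools already developed in the preceding subsections, so the only genuine work lies in verifying the density ratio is uniformly bounded in each of the three geometries of Assumption \ref{Assumption: Cases}.
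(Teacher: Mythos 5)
Your proof is correct and follows essentially the same route as the paper's. You use $|\mr e^a-1|\le|a|\mr e^{|a|}$, then Cauchy--Schwarz to split into a polynomial moment of $|\mc D_t|$ times an exponential moment of $|\mc D_t|$; the exponential factor is absorbed by $|\mc D_t|\le\mc C_t$ and \eqref{Equation: C Bound}, and the polynomial factor reduces via independence and the Cauchy--Schwarz inequality for $\langle\cdot,\cdot\rangle_\ga$ to bridge moments of $\|L_t\|_\ga^{2\theta}$. The paper arrives at the same reduction by first applying $|\mc D_t|^\theta\le C(\|L_t(Z^{x,x}_t)\|_\ga^{2\theta}+\|L_t(\bar Z^{y,y}_t)\|_\ga^{2\theta})$ and then H\"older, so the only real difference is the order in which the polynomial and exponential factors are separated; in both cases the crux is the same midpoint-conditioning argument (Doob $h$-transform plus the transition-density bounds \eqref{Equation: Transition Bounds}) to pass from bridge to unconditioned local-time moments, which you correctly identify as the genuine work. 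No gap.
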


Lemma \ref{Lemma: D Bound 2} is proved in Section \ref{Section: Vanishing Term}.
Our final technical result
concerns the $t\to0$ asymptotics of the term involving $\mc A_t$ in \eqref{Equaton: Variance Formula 2}:

\begin{lemma}
\label{Lemma: Final Estimates}
Let $\mf a,\ka>0$ be as in \eqref{Equation: Polynomial Potential Condition}.
One the one hand, it holds that
\begin{align}
\label{Equation: Final Estimate General}
\limsup_{t\to0}t^{2/\mf a}\int_{I^2}\mbf E\left[\mr e^{4\mc A_t(x,y)}\right]^{1/4}\d x\dd y\leq\frac{C_{\mf a}}{\ka^2}
\end{align}
in {\bf Cases 1 \& 2}.
On the other hand, for every $c,K>0$, one has
\begin{align}
\label{Equation: Final Estimate Compact 1}
\limsup_{t\to0}t^{-1/2+1/\mf a}\int_{\mbb R^2}\mbf E\left[\mr e^{4\mc A_t(x,y)}\right]^{1/4}\,\mr e^{-\frac{(|x-y|-K)^2}{2ct}}\d x\dd y\leq\frac{C_{\mf a}}{\ka}
\end{align}
in {\bf Case 1}; and in {\bf Case 2}, it holds that
\begin{align}
\label{Equation: Final Estimate Compact 2}
\limsup_{t\to0}t^{-1/2+1/\mf a}\int_{(0,\infty)^2}\mbf E\left[\mr e^{4\mc A_t(x,y)}\right]^{1/4}\left(\mr e^{-\frac{(|x-y|-K)^2}{2ct}}+\mr e^{-\frac{(|x+y|-K)^2}{2ct}}\right)\d x\dd y\leq\frac{C_{\mf a}}{\ka}.
\end{align}
\end{lemma}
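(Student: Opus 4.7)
The plan is to exploit the polynomial lower bound \eqref{Equation: Polynomial Potential Condition} on $V$ to factor $\mbf E[\mr e^{4\mc A_t(x,y)}]^{1/4}$ into decaying factors in $x$ and $y$, each of which can be integrated by a Brownian-scaling argument. Since $V(z)\ge|\kappa z|^{\mf a}-\nu$ on $I$ and $\langle L_t(Z),1\rangle=t$ by \eqref{Equation: Local Time Definition}, I have
\[
\mc A_t(x,y)\le 2\nu t-\int_0^t|\kappa Z^{x,x}_t(s)|^{\mf a}\d s-\int_0^t|\kappa\bar Z^{y,y}_t(s)|^{\mf a}\d s,
\]
and by independence of $Z^{x,x}_t$ and $\bar Z^{y,y}_t$,
\[
\mbf E[\mr e^{4\mc A_t(x,y)}]^{1/4}\le\mr e^{2\nu t}\,\Phi_t(x)\Phi_t(y),\qquad \Phi_t(x):=\mbf E\bigl[\mr e^{-4\int_0^t|\kappa Z^{x,x}_t(s)|^{\mf a}\d s}\bigr]^{1/4}.
\]

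The crux of the proof is the pointwise estimate
\[
\Phi_t(x)\le C\bigl(\mr e^{-c_{\mf a}t|\kappa x|^{\mf a}}+\mr e^{-c|x|^2/t}\bigr),\qquad x\in I,\ t\in(0,1],
\]
with $c_{\mf a},c,C>0$ and $c_{\mf a}$ depending only on $\mf a$. To establish this, I would decompose on the event $A_x:=\{\sup_{s\in[0,t]}|Z^{x,x}_t(s)-x|\le|x|/2\}$: on $A_x$ one has $|Z^{x,x}_t(s)|\ge|x|/2$ throughout $[0,t]$, so $\int_0^t|\kappa Z^{x,x}_t|^{\mf a}\d s\ge 2^{-\mf a}t|\kappa x|^{\mf a}$, producing the first term in the bound; on $A_x^c$ I would use the trivial inequality $\mr e^{-\int}\le 1$ together with a Gaussian-type tail estimate $\mbf P[A_x^c]\le C\mr e^{-c|x|^2/t}$ on the oscillation of the bridge. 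In Case 1 the latter is a standard Brownian-bridge estimate; in Case 2 a Skorokhod-type coupling (or the explicit transition density of reflected Brownian motion) yields the same Gaussian tail bound uniformly in $x>0$.

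With this pointwise bound in hand, the substitution $u=(c_{\mf a}t)^{1/\mf a}\kappa x$ gives
\[\int_I\Phi_t(x)\d x\le\frac{C_{\mf a}}{\kappa\, t^{1/\mf a}}+C\sqrt t,\]
with the first term dominating as $t\to0$. Squaring and using the factorization above yields \eqref{Equation: Final Estimate General}. For \eqref{Equation: Final Estimate Compact 1} and \eqref{Equation: Final Estimate Compact 2}, the Gaussian-in-$|x-y|$ (and, in Case~2, $|x+y|$) weight localizes one integration variable to an annular region of width $O(\sqrt t)$; explicitly,
\[\sup_{x}\int_I\mr e^{-(|x-y|-K)^2/(2ct)}\d y=O(\sqrt t),\]
and similarly for the reflection term in Case 2. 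Bounding one factor of $\Phi_t$ trivially by $1$ and integrating the other against this weight therefore replaces one factor of $1/(\kappa t^{1/\mf a})$ by $\sqrt t$, giving the rate $t^{1/2-1/\mf a}/\kappa$ required.

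The hardest part will be the pointwise bound on $\Phi_t$, specifically the uniform-in-$x$ Gaussian tail $\mbf P[A_x^c]\le C\mr e^{-c|x|^2/t}$ for the reflected Brownian bridge of Case 2, whose law depends nontrivially on $x$ through the reflection at the origin. For $x\gtrsim\sqrt t$ this follows from a routine coupling with a free bridge; for $x$ within $O(\sqrt t)$ of the boundary one must treat the reflection more carefully, either via a pathwise comparison with $|B^{x,x}_t|$ or via the explicit reflected-bridge density obtained from $\Pi_X$. Once this uniform estimate is in place, the remainder of the argument reduces to elementary Gaussian calculus.
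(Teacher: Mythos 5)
Your argument is correct, and it is a genuinely different route from the paper's. The paper proves \eqref{Equation: Final Estimate General} and \eqref{Equation: Final Estimate Compact 1}--\eqref{Equation: Final Estimate Compact 2} by translating and rescaling the bridge (the coupling $B^{x,x}_t=x+B^{0,0}_t$, the change of variables $(x,y)\mapsto t^{-1/\mf a}(x,y)$ and Brownian scaling), then using the elementary inequality $|z+\bar z|^{\mf a}\ge|z|^{\min\{\mf a,1\}}-|\bar z|^{\min\{\mf a,1\}}-1$ to produce a $t$-independent dominating function of the form $\exp(-|\ka x|^{\min\{\mf a,1\}}-|\ka y|^{\min\{\mf a,1\}})$ times an exponential moment of Brownian-bridge maxima, and concludes with dominated convergence; this identifies the exact limit of the rescaled integral. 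You instead factor $\mbf E[\mr e^{4\mc A_t}]^{1/4}\le\mr e^{2\nu t}\Phi_t(x)\Phi_t(y)$ by independence, then prove the two-term pointwise bound $\Phi_t(x)\le C(\mr e^{-c_{\mf a}t|\ka x|^{\mf a}}+\mr e^{-c|x|^2/t})$ by splitting on the event that the bridge oscillation stays below $|x|/2$, and integrate directly. Your decomposition trick works exactly as you claim, and the resulting rate matches: $\int_I\Phi_t\le C_{\mf a}/(\ka t^{1/\mf a})+C\sqrt t$, and in the compactly supported case replacing one $\Phi_t$ by $1$ and using $\sup_x\int_I\mr e^{-(|x-y|-K)^2/2ct}\dd y=O(\sqrt t)$ (and similarly for the $|x+y|$ term in {\bf Case 2}) yields the correct $t^{1/2-1/\mf a}/\ka$ rate. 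What you lose relative to the paper is the exact limiting constant in \eqref{Equation: Final Estimate General}, but the lemma only asks for a $\limsup$ bound with a generic $C_{\mf a}$, so that is immaterial. One remark: your flagged "hardest part" (the uniform-in-$x$ Gaussian tail for the reflected bridge when $x=O(\sqrt t)$) is actually a nonissue. For $|x|\lesssim\sqrt t$ the right-hand side $C\mr e^{-c|x|^2/t}$ is bounded below by a constant, so the bound is vacuous there; and for all $x>0$ the coupling inequality \eqref{Equation: X and B Coupling inequality}, $\mbf E[F(X^{x,x}_t)]\le2\mbf E[F(|B^{x,x}_t|)]$ for nonnegative path functionals $F$, reduces the reflected-bridge tail directly to the free-bridge tail (indeed this is precisely how the paper handles {\bf Case 2}).
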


Lemma \ref{Lemma: Final Estimates} is proved in
Section \ref{Section: Variance Bounds Final}, and relies on a formalization
of the heuristic that, if we assume \eqref{Equation: Polynomial Potential Condition},
then we expect that $\mbf E\left[\mr e^{4\mc A_t(x,y)}\right]^{1/4}=O(\mr e^{2\nu t}\mr e^{-t(|\ka x|^\mf a+|\ka y|^\mf a)})$
as $t\to0$.

\subsubsection{Step 3. Conclusion of Proof}\label{Section: Conclusion of Proof}

We now use the technical lemmas stated in Section \ref{Section: Technical Lemmas}
to conclude the proof of Theorem \ref{Theorem: Variance Bound}.
Thanks to \eqref{Equaton: Variance Formula 2} and Lemma \ref{Lemma: B & C Bounds},
we have that
\begin{align}
\label{Variance Formula 3}
\mbf{Var}\big[\mr{Tr}[\hat K(t)]\big]=O\left(t^{-1}\int_{I^2}
\mbf E\left[\mr e^{4\mc A_t(x,y)}\right]^{1/4}
\mbf E\left[\left(\mr e^{\mc D_t(x,y)}-1\right)^4\right]^{1/4}\d x\dd y\right)
\end{align}
as $t\to0$, where the constant in $O$ is independent of
all parameters. We now control the right-hand-side of \eqref{Variance Formula 3}
on a case-by-case basis.

Let us begin with {\bf Case 1}. In the case of general $\ga$
(i.e., not necessarily compactly supported), it follows from
Lemma \ref{Lemma: D Bound 2} and \eqref{Equation: Final Estimate General}
that
\begin{multline}
\label{Equation: Wrap-up Case 1 General Gamma}
\limsup_{t\to0}t^{-\mf d+2/\mf a}\int_{\mbb R^2}
\mbf E\left[\mr e^{4\mc A_t(x,y)}\right]^{1/4}
\mbf E\left[\left(\mr e^{\mc D_t(x,y)}-1\right)^4\right]^{1/4}\d x\dd y\\
\leq\limsup_{t\to0}\left(t^{2/\mf a}\int_{\mbb R^2}
\mbf E\left[\mr e^{4\mc A_t(x,y)}\right]^{1/4}\d x\dd y\right)
\left(t^{-\mf d}\,\sup_{(x,y)\in \mbb R^2}\Big(\mbf E\Big[\big|\mr e^{\mc D_t(x,y)}-1\big|^4\Big]\Big)^{1/4}\right)\leq\frac{C_\mf a}{\ka^2}.
\end{multline}
When combined with \eqref{Variance Formula 3},
this yields \eqref{Equation: Variance Bound Cases 1 and 2} in {\bf Case 1} for general $\ga$.
If $\ga$ is compactly supported in some interval $[-K,K]$,
then it follows from Lemma \ref{Lemma: D Bound 1} that
\begin{multline}
\label{Equation: Bound for Compact gamma}
\int_{\mbb R^2}\mbf E\left[\mr e^{4\mc A_t(x,y)}\right]^{1/4}
\mbf E\left[\left(\mr e^{\mc D_t(x,y)}-1\right)^4\right]^{1/4}\d x\dd y\\
\leq\left(\int_{\mbb R^2}\mbf E\left[\mr e^{4\mc A_t(x,y)}\right]^{1/4}\,\mr e^{-\frac{(|x-y|-K)^2}{2ct}}\d x\dd y\right)\left(\sup_{(x,y)\in \mbb R^2}\Big(\mbf E\Big[\big|\mr e^{\mc D_t(x,y)}-1\big|^8\Big]\Big)^{1/8}\right).
\end{multline}
At this point, by arguing as in \eqref{Equation: Wrap-up Case 1 General Gamma}
(except that we replace the estimate \eqref{Equation: Final Estimate General}
with \eqref{Equation: Final Estimate Compact 1}),
we obtain that
\[\limsup_{t\to0}t^{-\mf d-1/2+1/\mf a}\int_{\mbb R^2}
\mbf E\left[\mr e^{4\mc A_t(x,y)}\right]^{1/4}
\mbf E\left[\left(\mr e^{\mc D_t(x,y)}-1\right)^4\right]^{1/4}\d x\dd y\leq\frac{C_\mf a}{\ka}.\]
Combining this with \eqref{Variance Formula 3}
yields \eqref{Equation: Variance Bound Cases 1 and 2} in {\bf Case 1} for compactly supported $\ga$,
concluding the proof of Theorem \ref{Theorem: Variance Bound} in {\bf Case 1}.

The proof of Theorem \ref{Theorem: Variance Bound} in {\bf Case 2} follows from the
same steps used in {\bf Case 1}, except that we replace \eqref{Equation: Bound for Compact gamma}
with the corresponding bound given by Lemma \ref{Lemma: D Bound 1} in {\bf Case 2},
and that we replace an application of \eqref{Equation: Final Estimate Compact 1}
with  \eqref{Equation: Final Estimate Compact 2}.

We now conclude the proof of Theorem \ref{Theorem: Variance Bound} with \textbf{Case 3}.
By Assumption \ref{Assumption: Potential}, $V$ is bounded below,
i.e., there exists some $c\geq0$ such that $V(x)\geq-c$ for every $x$. Thus,
\begin{align}
\label{Equation: Case 3 Lower Bound Simple}
\sup_{x,y\in(0,b)}\mbf E\left[\mr e^{4\mc A_t(x,y)}\right]^{1/4}\leq\mr e^{2ct}\leq C
\end{align}
for $t\in(0,1]$.
Since the integral in \eqref{Variance Formula 3} is over the bounded
domain $I^2=(0,b)^2$ in {\bf Case 3},
\eqref{Equation: Variance Bound Case 3} then follows from a direct application
of Lemma \ref{Lemma: D Bound 2} and \eqref{Equation: Case 3 Lower Bound Simple} to \eqref{Variance Formula 3},
concluding the proof of Theorem \ref{Theorem: Variance Bound}.

\subsection{Seminorm Bounds:
Proof of Lemma \ref{Lemma: Gamma to Lp Bounds}}
\label{Section: Seminorm Bounds}

We provide a case-by-case argument.
If $\xi$ is a white noise, then up to a constant $\|\cdot\|_\ga=\|\cdot\|_2$, so the
result is immediate.

For fractional noise, up to a constant, we have that
\[\|f\|_\ga^2\leq\int_{\mbb R^2} |f(a)\ga(a-b)f(b)|\d a\dd b=\int_{\mbb R^2} \frac{|f(a)f(b)|}{|a-b|^{2-2H}}\d a\dd b.\]
By applying the change of variables $(a,b)\mapsto t^{1/2}(a,b)$ to the right-hand side of this equation, we obtain
\[t\int_{\mbb R^2} \frac{|f(t^{\frac{1}{2}}a)f(t^{\frac{1}{2}}b)|}{|t^{\frac{1}{2}}(a-b)|^{2-2H}}\d a\dd b
=t^H\int_{\mbb R^2} \frac{|f(t^{\frac{1}{2}}a)f(t^{\frac{1}{2}}b)|}{|a-b|^{2-2H}}\d a\dd b.\]
Next, we write
\begin{align}
\label{Equation: Fractional Split Bound}
\int_{\mbb R^2} \frac{|f(t^{\frac{1}{2}}a)f(t^{\frac{1}{2}}b)|}{|a-b|^{2-2H}}\d a\dd b
=\Big(\int_{\{|b-a|<1\}}+\int_{\{|b-a|\geq1\}}\Big)\frac{|f(t^{\frac{1}{2}}a)f(t^{\frac{1}{2}}b)|}{|a-b|^{2-2H}}\d a\dd b.
\end{align}
On the one hand, by Young's convolution inequality (e.g., \cite{Young}), the first integral (integral over $\{|b-a|<1\}$) in the r.h.s. of \eqref{Equation: Fractional Split Bound}
is bounded above by
\[\left(\int_{-1}^1\frac{1}{|z|^{2-2H}}\d z\right)\left(\int_{\mbb R}f(t^{\frac{1}{2}}a)^2\d a\right)=\left(\int_{-1}^1\frac{1}{|z|^{2-2H}}\d z\right)t^{-\frac{1}{2}}\|f\|_2^2,\]
where the right-hand side comes from the change of variables $a\mapsto t^{-\frac{1}{2}}a$.
On the other hand, by the same change of variables, the second integral (integral over $\{|b-a|\geq1\}$) is bounded by
\[\left(\int_{\mbb R}|f(t^{\frac{1}{2}}a)|\d a\right)^2=t^{-1}\|f\|_1^2.\]
 Substituting these two bounds in the r.h.s. of \eqref{Equation: Fractional Split Bound} yields the desired bound on $\|f\|^2_{\gamma}$ in the case of fractional noise with Hurst parameter $H \in (\frac{1}{2},1)$. 

Let $\xi$ be an $L^p$-singular noise with decomposition $\ga=\ga_1+\ga_2$.
Then, the bound on $\|f\|_\ga^2$ follows from the following use Young's inequality,
\begin{align*}
\int_{\mbb R^2} |f(a)\ga(a-b)f(b)|\d a\dd b
&=\int_{\mbb R^2} |f(a)\ga_1(a-b)f(b)|\d a\dd b
+\int_{\mbb R^2} |f(a)\ga_2(a-b)f(b)|\d a\dd b\\
&\leq\|\ga_1\|_p\|f\|_q^2+\|\ga_2\|_\infty\|f\|_1^2
\end{align*}
where $\frac1q+\frac1q+\frac1p=2$,
or equivalently,
$q=1/(1-\frac1{2p})$. 

Finally, if $\ga$ is bounded, then
\[\int_{\mbb R^2} |f(a)\ga(a-b)f(b)|\d a\dd b\leq\|\ga\|_\infty\|f\|_1^2,\]
concluding the proof of Lemma \ref{Lemma: Gamma to Lp Bounds}, and thus also of Theorem \ref{Theorem: Main Examples}.

\subsection{Variance Formula:
Proof of Lemma \ref{Lemma: Variance Formula}}
\label{Section: Variance Formula}

We only prove Lemma \ref{Lemma: Variance Formula} in {\bf Case 1}, since the other cases follow from exactly the same argument.
By \eqref{Equation: Covariance Function}, we know that $\mbf E[\mr e^{-\xi(f)}]=\mr e^{\frac12\|f\|^2_\ga}$
for all $f\in\mr{PC}_c$.
Thus, it follows from Fubini's theorem and
\eqref{Equation: Trace Formula} that
\begin{multline*}
\mbf E\big[\mr{Tr}[\hat K(t)]\big]=\int_\mbb R\Pi_B(t;x,x)\mbf E^{x,x}_t\left[\mr e^{-\langle L_t(B),V\rangle}\mbf E_\xi\left[\mr e^{-\xi(L_t(B))}\right]\right]\d x\\
=\int_\mbb R\Pi_B(t;x,x)\mbf E^{x,x}_t\left[\mr e^{-\langle L_t(B),V\rangle+\frac12\|L_t(B)\|_\ga^2}\right]\d x,
\end{multline*}
where $\mbf E_\xi$ denotes the expectation with respect to $\xi$,
conditional on $B$.
Via another application of Fubini, we get
\begin{align}
\Big(\mbf E\big[\mr{Tr}[\hat K(t)]\big]\Big)^2
&=\int_{\mbb R^2}\mc P_t(x,y)\,\mbf E\Big[\mr e^{-\langle L_t(B^{x,x}_t)+L_t(\bar B^{y,y}_t),V\rangle + \frac12\|L_t(B^{x,x}_t)\|_\ga^2
+\frac12\|L_t(\bar B^{y,y}_t)\|_\ga^2}\Big]\d x\dd y \nonumber\\
\label{Equation: Int One Step}&=\int_{\mbb R^2}\mc P_t(x,y)\,\mbf E\Big[\mr e^{(\mc A_t+\mc B_t+ \mc C_t)(x,y)}\Big]\d x\dd y  
\end{align}
where $\bar B^{y,y}_t$ is a Brownian bridge independent of $B^{x,x}_t$. A similar computation yields
\begin{align*}
\mbf E\Big[\big(\mr{Tr}[\hat K(t)]\big)^2\Big]=\int_{\mbb R^2}\mc P_t(x,y)\,\mbf E\Big[\mr e^{-\langle L_t(B^{x,x}_t)+L_t(\bar B^{y,y}_t),V\rangle}
\cdot\mbf E_\xi\Big[\mr e^{-\xi(L_t(B^{x,x}_t)+L_t(\bar B^{y,y}_t))}\Big]\Big]\d x\dd y.
\end{align*}
Given that $\xi(L_t(B^{x,x}_t)+L_t(\bar B^{y,y}_t))$ is Gaussian with mean zero and variance
\[\|L_t(B^{x,x}_t)\|_\ga^2+\|L_t(\bar B^{y,y}_t)\|_\ga^2+2\langle L_t(B^{x,x}_t),L_t(\bar B^{y,y}_t)\rangle_\ga,\]
we may now write 
\[\mbf E\Big[\big(\mr{Tr}[\hat K(t)]\big)^2\Big]= \int_{\mbb R^2}\mc P_t(x,y)\,\mbf E\Big[\mr e^{(\mc A_t+\mc B_t+ \mc C_t + \mc D_t)(x,y)}\Big]\d x\dd y. \]
Finally, the result follows by subtracting \eqref{Equation: Int One Step} from $\mbf E\Big[\big(\mr{Tr}[\hat K(t)]\big)^2\Big]$ in the above display.

\subsection{Uniformly Bounded Terms:
Proof of Lemma \ref{Lemma: B & C Bounds}}
\label{Section: Uniformly Bounded Terms}

We begin with \eqref{Equation: C Bound}.
By Independence,
\[\mbf E\left[\mr e^{\theta\mc C_t(x,y)}\right]=\mbf E^{x,x}_t\left[\mr e^{\frac\theta2\|L_t(Z)\|_\ga^2}\right]\mbf E^{y,y}_t\left[\mr e^{\frac\theta2\|L_t(Z)\|_\ga^2}\right]\]
As it turns out, \eqref{Equation: C Bound}
follows from \eqref{Equation: Lp Local Time Scaling}. The trick that we
use to prove this makes several other appearances in this paper:
Since the exponential function is nonnegative, for every $\theta>0$,
it follows from the tower property and the Doob $h$-transform that
\begin{multline}
\label{Equation: Tower+Doob}
\mbf E^{x,x}_{t}\left[\mr e^{\theta\|L_{t}(Z)\|_\ga^2}\right]
=\mbf E\left[\mbf E^{x,x}_{t}\Big[\mr e^{\theta\|L_{t}(Z)\|_\ga^2}\big|Z^{x,x}_{t}(t/2)\Big]\right]\\
=\int_I\mbf E^{x,x}_{t}\Big[\mr e^{\theta\|L_{t}(Z)\|_\ga^2}\big|Z^{x,x}_{t}(t/2)=y\Big]\frac{\Pi_Z(t/2;x,y)\Pi_Z(t/2;y,x)}{\Pi_Z(t;x,x)}\d y.
\end{multline}

If we condition on $Z^{x,x}_{t}(t/2)=y$, then the paths $\big(Z^{x,x}_{t}(s):0\leq s\leq t/2\big)$ and
$\big(Z^{x,x}_{t}(s):t\leq s\leq t/2\big)$ are independent and have respective distributions $Z^{x,y}_{t/2}$
and $Z^{y,x}_{t/2}$. Since $\Pi_Z$ is a symmetric kernel, the time-reversed process
$s\mapsto Z^{y,x}_{t/2}(t-s)$ ($0\leq s\leq t$)
is equal in distribution to $Z^{x,y}_{t/2}$.
Since local time is additive,
$\mbf E^{x,x}_{t}\Big[\mr e^{\theta\|L_{t}(Z)\|_\ga^2}\big|Z^{x,x}_{t}(t/2)=y\Big]$ is equal to $\mbf E^{x,x}_{t}\Big[\mr e^{\theta\|L_{t/2}(Z)+L_{[t/2,t]}(Z)\|_\ga^2}\big|Z^{x,x}_{t}(t/2)=y\Big]$. Moreover,
\begin{align}
\nonumber
&\mbf E^{x,x}_{t}\Big[\mr e^{\theta\|L_{t/2}(Z)+L_{[t/2,t]}(Z)\|_\ga^2}\big|Z^{x,x}_{t}(t/2)=y\Big]\\
\nonumber
&\leq\mbf E^{x,x}_{t}\Big[\mr e^{2\theta(\|L_{t/2}(Z)\|_\ga^2+\|L_{[t/2,t]}(Z)\|_\ga^2)}\big|Z^{x,x}_{t}(t/2)=y\Big]\\
\label{Equation: Midpoint Trick 2}
&=\mbf E^{x,y}_{t/2}\Big[\mr e^{2\theta\|L_{t/2}(Z)\|_\ga^2}\Big]^2\leq\mbf E^{x,y}_{t/2}\Big[\mr e^{4\theta\|L_{t/2}(Z)\|_\ga^2}\Big],
\end{align}
where the inequality in the second line follows from a combination of the triangle inequality
(since $\|\cdot\|_\ga$ is a seminorm) and $(z+\bar z)^2\leq2(z^2+\bar z^2)$,
the first equality in \eqref{Equation: Midpoint Trick 2} follows from the fact that local time is invariant
with respect to time reversal,
and the second inequality in \eqref{Equation: Midpoint Trick 2} follows from Jensen's inequality.

At this point, if we let
\begin{align}\label{Equation: Midpoint Transition Bound}
\mf s(Z):=\sup_{t\in(0,1]}\sup_{x,y\in I}\frac{\Pi_Z(t/2;y,x)}{\Pi_Z(t;x,x)},
\end{align}
which we know is finite thanks to \eqref{Equation: Transition Bounds},
then, owing to the last inequality of \eqref{Equation: Midpoint Trick 2}, we obtain
\begin{align}
\label{Equation: Midpoint Trick 3}
\mbf E^{x,x}_{t}\left[\mr e^{\theta\|L_{t}(Z)\|_\ga^2}\right]
\leq \mf s(Z)\int_I\mbf E^{x,y}_{t/2}\left[\mr e^{4\theta\|L_{t/2}(Z)\|_\ga^2}\right]\Pi_Z(t/2;x,y)\d y=\mf s(Z)\,\mbf E^x\big[\mr e^{4\theta\|L_{t/2}(Z)\|_\ga^2}\big]
\end{align}
for every $t\leq 1$.
In conclusion, to prove \eqref{Equation: C Bound},
it is enough to show that
\begin{align*}
\limsup_{t\to0}\sup_{x\in I}\mbf E^x\left[\mr e^{\theta\|L_t(Z)\|_\ga^2}\right]\leq C.
\end{align*}
This follows directly from a combination of \eqref{Equation: gamma to Lp Bounds},
\eqref{Equation: Lp Local Time Scaling}, and dominated convergence.

We now prove \eqref{Equation: B Bound}.
In \textbf{Case 1} the result is trivial.
In \textbf{Case 2}, by using essentially the same argument leading up to \eqref{Equation: Midpoint Trick 3},
we have that
\[\mbf E^{x,x}_t\left[\mr e^{\theta\mf L^0_t(X)}\right]\leq C\,\mbf E^x\left[\mr e^{2\theta\mf L^0_{t/2}(X)}\right].\]
By coupling $X^x(s)=|B^x(s)|$ for all $s\geq0$,
this yields
\[\mbf E^{x,x}_t\left[\mr e^{\theta\mf L^0_t(X)}\right]\leq C\,\mbf E^x\left[\mr e^{2\theta\mf L^0_{t/2}(B)}\right],\]
where we define
\[\mf L^a_t(B)
:=\lim_{\eps\to0}\frac1{2\eps}\int_0^t\mbf 1_{\{a-\eps<B(s)<a+\eps\}}\d s\\
=\lim_{\eps\to0}\frac1{2\eps}\int_0^t\mbf 1_{\{a-\eps<|B(s)|<a+\eps\}}\d s.\]
for any $a\in\mbb R$.
Thus, by a straightforward application of H\"older's inequality, it suffices to prove that
\[\limsup_{t\to0}\sup_{x\in\mbb R}\mbf E^x\left[\mr e^{\theta\mf L^0_t(B)}\right]\leq C.\]
By Brownian scaling, $\mf L_t^0(B^x)\deq t^{1/2}\mf L_1^0(B^{t^{-1/2}x})$.
By repeating the proof of \cite[Lemma 5.6]{GaudreauLamarre} in its entirety,
we have that
\[\sup_{x\in\mbb R}\mbf E^x\left[\mr e^{\theta t^{1/2}\mf L_1^0(B)}\right]=\mbf E^0\left[\mr e^{\theta t^{1/2}\mf L_1^0(B)}\right]\leq C,\]
and thus the result follows from dominated convergence.

Consider now \textbf{Case 3}.
Once again arguing as in \eqref{Equation: Midpoint Trick 3},
it suffices to prove that
\begin{align}
\label{Equation: Boundary Local Time Case 3}
\limsup_{t\to0}\sup_{x\in(0,b)}\mbf E^x\left[\mr e^{\theta\mf L^c_t(Y)}\right]\leq C,\qquad c\in\{0,b\}.
\end{align}
Recall the coupling of $Y$ and $B$ in \eqref{Equation: Coupling of B and Y}.
Under this coupling, we observe that
\begin{align}\label{Equation: Interval Boundary Local Time}
\mf L_t^c(Y^x)=\begin{cases}
\displaystyle\sum_{a\in2b\mbb Z}\mf L^a_t(B^x)&(c=0)\\
\displaystyle\sum_{a\in b(2\mbb Z+1)}\mf L^a_t(B^x)&(c=b).
\end{cases}
\end{align}
Consider the case $c=0$.
According to \eqref{Equation: Interval Boundary Local Time},
we see that
\[\mf L^0_t(Y^x)\leq\sup_{a\in\mbb R}\mf L^a_t(B^x)\cdot\mf n_t,\]
where $\mf n_t$ counts the number of intervals of the form $[kb,(k+1)b]$ ($k\in\mbb Z$)
such that
\[\inf_{kb\leq a\leq(k+1)b}\mf L^a_t(B^x)>0.\]
It is easy to see that there exists constants $c_1,c_2>0$ that only depend on $b$ such that
for every $t>0$, one has
$\mf n_t\leq c_1\left(M^x(t)-m^x(t)+c_2\right),$
where we denote $M^x$ and $m^x$ as in \eqref{Equation: Brownian min and max}.
By Brownian scaling,
\begin{align*}
\sup_{a\in\mbb R}\mf L^a_t(B^x)\deq t^{1/2}\sup_{a\in\mbb R}\mf L^a_1(B^0),
\end{align*}
and
\[\left(\sup_{a\in\mbb R}\mf L^a_t(B^x)\right)\big(M^x(t)-m^x(t)\big)\\\deq t\left(\sup_{a\in\mbb R}\mf L^a_1(B^0)\right)\big(M^0(1)-m^0(1)\big).\]
By combining the fact that these terms are independent of $x$ with \eqref{Equation: Sup Range Exponential Moments},
we obtain \eqref{Equation: Boundary Local Time Case 3} for $c=0$. The proof for $c=b$ is nearly identical,
thus concluding the proof of \eqref{Equation: B Bound}, and therefore the proof of Lemma \ref{Lemma: B & C Bounds}.

\subsection{Compactly Supported $\ga$:
Proof of Lemma \ref{Lemma: D Bound 1}}
\label{Section: Compactly Supported}

We begin with the claimed bound in
{\bf Case 1}. Since $\ga$ is supported in $[-K,K]$, in order for the quantity
$\mc D_t(x,y)=\langle L_t(B^{x,x}_t),L_t(\bar B^{y,y}_t)\rangle_\ga$
to be nonzero, it must be the case that
\[\begin{cases}
\displaystyle
\max_{0\leq s\leq t}B^{x,x}_t(s)+K\geq\min_{0\leq s\leq t}\bar B^{y,y}_t(s)&\text{(if $x\leq y$)}
\vspace{5pt}\\
\displaystyle
\max_{0\leq s\leq t}\bar B^{y,y}_t(s)+K\geq\min_{0\leq s\leq t}B^{x,x}_t(s)&\text{(if $x\geq y$)}.
\end{cases}\]
Looking at the case where $x\leq y$, this means that
\begin{align}
\nonumber
&\mbf E\bigg[\left|\mr e^{\mc D_t(x,y)}-1\right|^\theta\bigg]^{1/\theta}\\
\nonumber
&=\mbf E\bigg[\mbf 1_{\{\max_{0\leq s\leq t}B^{x,x}_t(s)+K\geq\min_{0\leq s\leq t}\bar B^{y,y}_t(s)\}}\left|\mr e^{\mc D_t(x,y)}-1\right|^\theta\bigg]^{1/\theta}\\
\label{Equation: Compactly Supported Trick}
&\leq\mbf P\left[\max_{0\leq s\leq t}B^{x,x}_t(s)+K\geq\min_{0\leq s\leq t}\bar B^{y,y}_t(s)\right]^{1/{2\theta}}\mbf E\bigg[\left|\mr e^{\mc D_t(x,y)}-1\right|^{2\theta}\bigg]^{1/{2\theta}}
\end{align}
If we apply a Brownian scaling and use the fact that the maxima of brownian bridges have sub-Gaussian tails,
then
\begin{multline*}
\mbf P\left[\max_{0\leq s\leq t}B^{x,x}_t(s)+K\geq\min_{0\leq s\leq t}\bar B^{y,y}_t(s)\right]^{1/{2\theta}}\\
=\mbf P\left[\max_{0\leq s\leq 1}B^{0,0}_1(s)+\max_{0\leq s\leq 1}\bar B^{0,0}_1(s)\geq(y-x-K)/t^{1/2}\right]^{1/{2\theta}}\leq C\mr e^{-\frac{(y-x-K)^2}{2ct}}.
\end{multline*}
A similar bound is obtained when $x\geq y$, which, when combined with \eqref{Equation: Compactly Supported Trick},
concludes the proof of Lemma \ref{Lemma: D Bound 1} in {\bf Case 1}.

We now provide the proof of Lemma \ref{Lemma: D Bound 1} in {\bf Case 2}.
By H\"older's inequality,
\[\mbf E\bigg[\left|\mr e^{\mc D_t(x,y)}-1\right|^\theta\bigg]^{1/\theta}
\leq\mbf P\big[\langle L_t(X^{x,x}_t),L_t(\bar X^{y,y}_t)\rangle_\ga\neq0\big]^{1/{2\theta}}\mbf E\bigg[\left|\mr e^{\mc D_t(x,y)}-1\right|^{2\theta}\bigg]^{1/{2\theta}}.\]
Note that we can couple $X$ and $B$ so that $X^x(t)=|B^x(t)|$
for all $t\geq0$. Then, conditioning on the endpoint corresponds to
\[X^{x,x}_t=\big(|B^x|\,\big|B^x(t)\in\{x,-x\}\big).\]
Using this coupling, it follows from \cite[(5.9)]{GaudreauLamarre}
that for any nonnegative path functional $F$,
\begin{align}
\label{Equation: X and B Coupling inequality}
\mbf E\big[F(X^{x,x}_t)\big]\leq2\mbf E\big[F(|B^{x,x}_t|)\big].
\end{align}
Consequently, we get the further upper bound
\[\mbf P\big[\langle L_t(X^{x,x}_t),L_t(\bar X^{y,y}_t)\rangle_\ga\neq0\big]^{1/{2\theta}}
\leq2^{1/{2\theta}}\mbf P\big[\langle L_t(|B^{x,x}_t|),L_t(|\bar B^{y,y}_t|)\rangle_\ga\neq0\big]^{1/{2\theta}}.\]
Given that $L_t^a(|B^{x,x}_t|)=L_t^a(B^{x,x}_t)+L_t^{-a}(B^{x,x}_t)$ for all $a>0$
and similarly for $\bar B^{y,y}_t$,
we can expand $\langle L_t(|B^{x,x}_t|),L_t(|\bar B^{y,y}_t|)\rangle_\ga$ as the sum
\begin{multline*}
\int_{(0,\infty)^2}L_t^a(B^{x,x}_t)\ga(a-b)L_t^b(\bar B^{y,y}_t)\d a\dd b
+\int_{(0,\infty)^2}L_t^{-a}(B^{x,x}_t)\ga(a-b)L_t^{-b}(\bar B^{y,y}_t)\d a\dd b\\
+\int_{(0,\infty)^2}L_t^{-a}(B^{x,x}_t)\ga(a-b)L_t^b(\bar B^{y,y}_t)\d a\dd b
+\int_{(0,\infty)^2}L_t^a(B^{x,x}_t)\ga(a-b)L_t^{-b}(\bar B^{y,y}_t)\d a\dd b.
\end{multline*}
Let us define the set $\mc S:=(-\infty,0)^2\cup(0,\infty)^2$.
Since $\ga$ is assumed to be even, by a simple change of variables, the first two terms in the above sum add up to
\begin{align}
\label{Equation: Sesquilinear Decomposition 1}
\int_{\mc S}L_t^a(B^{x,x}_t)\ga(a-b)L_t^b(\bar B^{y,y}_t)\d a\dd b,
\end{align}
and the last two terms add up to
\begin{align}
\label{Equation: Sesquilinear Decomposition 2}
\int_{\mc S}L_t^a(B^{x,x}_t)\ga(a-b)L_t^{-b}(\bar B^{y,y}_t)\d a\dd b.
\end{align}

Suppose that $0<x\leq y$.
In order for \eqref{Equation: Sesquilinear Decomposition 1} to be nonzero,
it must be the case that
\[\max_{0\leq s\leq t}B^{x,x}_t(s)+K\geq\min_{0\leq s\leq t}\bar B^{y,y}_t(s),\]
and for \eqref{Equation: Sesquilinear Decomposition 2} to be nonzero, it must be the case that
\[-\min_{0\leq s\leq t}\bar B^{y,y}_t(s)+K\geq\min_{0\leq s\leq t}B^{x,x}_t(s).\]
Thus, by a union bound, followed by Brownian scaling and the fact that Brownian bridge maxima
have sub-Gaussian tails, we see that
\begin{align*}
&\mbf P\big[|\langle L_t(|B^{x,x}_t|),L_t(|\bar B^{y,y}_t|)\rangle_\ga|>0\big]^{1/{2\theta}}\\
&\leq\mbf P\left[\max_{0\leq s\leq 1}B^{0,0}_1(s)+\max_{0\leq s\leq 1}\bar B^{0,0}_1(s)\geq\frac{y-x-K}{t^{1/2}}\right]^{1/{2\theta}}\\
&\hspace{2in}+\mbf P\left[\max_{0\leq s\leq 1}B^{0,0}_1(s)+\max_{0\leq s\leq 1}\bar B^{0,0}_1(s)\geq\frac{x+y-K}{t^{1/2}}\right]^{1/{2\theta}}\\
&\leq C\left(\mr e^{-\frac{(|x-y|-K)^2}{2ct}}+\mr e^{-\frac{(|x+y|-K)^2}{2ct}}\right).
\end{align*}
The same bound holds for $y\leq x$, concluding the proof of Lemma \ref{Lemma: D Bound 1} in {\bf Case 2}.

\subsection{Vanishing Term:
Proof of Lemma \ref{Lemma: D Bound 2}}
\label{Section: Vanishing Term}

By combining the inequality $|\mr e^z-1|\leq\mr e^{|z|}-1\leq|z|\mr e^{|z|}$ $(z\in\mbb R)$
with $|\mc D_t(x,y)|\leq\tfrac12\big(\|L_t(Z^{x,x}_t)\|_\ga^2+\|L_t(\bar Z^{y,y}_t)\|_\ga^2\big),$
and applying the triangle inequality, we see that
\begin{align*}
 \Big(\mbf E\Big[\big|\mr e^{\mc D_t(x,y)}-1\big|^\theta\Big]\Big)^{1/\theta} &\leq C\Big(\mbf E\left[\|L_t(Z^{x,x}_t)\|_\ga^{2\theta}\mr e^{(\theta/2)(\|L_t(Z^{x,x}_t)\|_\ga^2+\|L_t(\bar Z^{y,y}_t)\|_\ga^2)}\right]^{1/\theta}\\
 &+\mbf E\left[\|L_t(\bar Z^{y,y}_t)\|_\ga^{2\theta}\mr e^{(\theta/2)(\|L_t(Z^{x,x}_t)\|_\ga^2+\|L_t(\bar Z^{y,y}_t)\|_\ga^2)}\right]^{1/\theta}\Big).
\end{align*}
By using independence of $Z$ and $\bar Z$ and applying
H\"older's inequality, the right-hand side of the above inequality is bounded by
\begin{multline*}
 C\bigg(\mbf E^{x,x}_t\Big[\|L_t(Z)\|_\ga^{4\theta}\Big]^{1/2\theta}\mbf E^{x,x}_t\left[\mr e^{\theta\|L_t(Z)\|_\ga^2}\right]^{1/2\theta}
\mbf E^{y,y}_t\left[\mr e^{(\theta/2)\|L_t(Z)\|_\ga^2}\right]^{1/\theta} \\
+\mbf E^{y,y}_t\Big[\|L_t(Z)\|_\ga^{4\theta}\Big]^{1/2\theta}\mbf E^{y,y}_t\left[\mr e^{\theta\|L_t( Z)\|_\ga^2}\right]^{1/2\theta}
\mbf E^{x,x}_t\left[\mr e^{(\theta/2)\|L_t(Z)\|_\ga^2}\right]^{1/\theta}\bigg).
\end{multline*}
At this point, thanks to \eqref{Equation: C Bound}, the proof of Lemma \ref{Lemma: D Bound 2}
will be complete if we show that
\begin{align}
\label{Equation: d Exponent Bridge}
\limsup_{t\to0}t^{-\mf d}\,\sup_{x\in I}\left(\mbf E^{x,x}_t\Big[\|L_t(Z)\|_\ga^{2\theta}\Big]\right)^{1/\theta}\leq C.
\end{align}

We claim that \eqref{Equation: d Exponent Bridge} is a consequence of \eqref{Equation: d Exponent}.
To see this, we once again condition on the midpoint of $Z^{x,x}_t$:
With $\mf s(Z)<\infty$ as in \eqref{Equation: Midpoint Transition Bound}, we obtain that for any $t\in(0,1]$,
\begin{align*}
&\mbf E^{x,x}_t\Big[\|L_t(Z)\|_\ga^{2\theta}\Big]\\
&=\int_I\mbf E^{x,x}_t\Big[\|L_t(Z)\|_\ga^{2\theta}\Big|Z^{x,x}_t(t/2)=z\Big]\frac{\Pi_Z(t/2;x,z)\Pi_Z(t/2;z,x)}{\Pi_Z(t;x,x)}\d z\\
&\leq\mf s(Z)\int_I\mbf E^{x,x}_t\Big[\|L_{t/2}(Z)+L_{[t/2,t]}(Z)\|_\ga^{2\theta}\Big|Z^{x,x}_t(t/2)=z\Big]\Pi_Z(t/2;x,z)\d z\\
&\leq C\mf s(Z)\int_I\mbf E^{x,z}_t\Big[\|L_{t/2}(Z)\|_\ga^{2\theta}\Big]\Pi_Z(t/2;x,z)\d z\\
&= C\,\mbf E^x\Big[\|L_{t/2}(Z)\|_\ga^{2\theta}\Big],
\end{align*}
where the equality in the second line follows from the Doob h-transform (see \eqref{Equation: Tower+Doob}), the inequality in the fourth line follows from first applying Minkowski's inequality to bound $\|L_{t/2}(Z)+L_{[t/2,t]}(Z)\|_\ga^{2\theta}$ by $C(\|L_{t/2}(Z)\|_{\ga}^{2\theta}+\|L_{[t/2,t]}(Z)\|_{\ga}^{2\theta})$, and then using the fact that, under the conditioning $Z^{x,x}_t(t/2)=z$,
the local time processes $L_{t/2}(Z^{x,x}_t)$ and $L_{[t/2,t]}(Z^{x,x}_t)$ are i.i.d. copies of $L_{t/2}(Z^{x,z}_{t/2})$.
(We refer back to the passage following \eqref{Equation: Tower+Doob} for details.)

\subsection{Final Estimates: Proof of Lemma \ref{Lemma: Final Estimates}}
\label{Section: Variance Bounds Final}

\subsubsection{Proof of \eqref{Equation: Final Estimate General}}

We begin by proving \eqref{Equation: Final Estimate General}
in \textbf{Case 1}.
By coupling $B^{x,x}_t:=x+B^{0,0}_t$
and $\bar B^{y,y}_t:=y+\bar B^{0,0}_t$,
it follows from \eqref{Equation: Polynomial Potential Condition} that
\begin{align}\label{Equation: A Ineq}
\mc A_t(x,y)\leq2\nu t-\ka^\mf a\int_0^t\Big(\big|x+B^{0,0}_t(s)\big|^\mf a+\big|y+\bar B^{0,0}_t(s)\big|^\mf a\Big)\d s.
\end{align}
By the change of variables $s\mapsto st$ and a Brownian scaling, we then obtain
\begin{align}
\nonumber\text{r.h.s. of \eqref{Equation: A Ineq}}
&= 2\nu t-\ka^\mf a\int_0^1\Big(\big|t^{\frac{1}{\mf a}}x+t^{\frac{1}{\mf a}}B^{0,0}_t(st)\big|^\mf a+\big|t^{\frac{1}{\mf a}}y+t^{\frac{1}{\mf a}}\bar B^{0,0}_t(st)\big|^\mf a\Big)\d s\vspace{-0.1cm}\\
\nonumber &\deq2\nu t-\ka^\mf a\int_0^1\Big(\big|t^{\frac{1}{\mf a}}x+t^{\frac{1}{2}+\frac{1}{\mf a}}B^{0,0}_1(s)\big|^\mf a+\big|t^{\frac{1}{\mf a}}y+t^{\frac{1}{2}+\frac{1}{\mf a}}\bar B^{0,0}_1(s)\big|^\mf a\Big) \d s.
\end{align}

Let us introduce the shorthands
\begin{align}
\label{Equation: Compactly Supported Shorthand 0}
\ms B_{t,x}(s):=\big|t^{\frac{1}{\mf a}}x+t^{\frac{1}{2}+\frac{1}{\mf a}}B^{0,0}_1(s)\big|^\mf a,
\qquad \bar{\ms B}_{t,y}(s):= \big|t^{\frac{1}{\mf a}}y+t^{\frac{1}{2}+\frac{1}{\mf a}}\bar B^{0,0}_1(s)\big|^\mf a
\end{align}
so that, by \eqref{Equation: A Ineq}, one has
\begin{align}
\nonumber
\int_{\mbb R^2}\mbf E\left[\mr e^{4\mc A_t(x,y)}\right]^{\frac14}\d x\dd y
&\leq C\mr e^{2\nu t}
\int_{\mbb R^2}\mbf E\Big[\mr e^{-4\ka^\mf a\int_0^1\big(\ms B_{t,x}(s)+\bar{\ms B}_{t,y}(s)\big)\d s}\Big]^{\frac14} \d x\dd y\\
\label{Equaton: Variance Formula Case 1 - 1}
&=C\mr e^{2\nu t}t^{-2/\mf a}
\int_{\mbb R^2}\mbf E\Big[\mr e^{-4\ka^\mf a\int_0^1\big(\ms B_{t,t^{-1/\mf a}x}(s)+\bar{\ms B}_{t,t^{-1/\mf a}y}(s)\big)\d s}\Big]^{\frac14} \d x \dd y,
\end{align}
where in the second line we applied the change of variables $(x,y)\mapsto t^{-1/\mf a}(x,y)$.
To alleviate notation, let us henceforth write
\begin{align}
\label{Equation: Compactly Supported Shorthand}
\mc F_t(x,y):=\mr e^{-\ka^\mf a\int_0^1\big(\ms B_{t,t^{-1/\mf a}x}(s)+\bar{\ms B}_{t,t^{-1/\mf a}y}(s)\big)\d s},
\end{align}
noting that the dependence of $\mf a$ and $\ka$ are implicit in this notation. For every fixed $x,y\in\mbb R$, 
\begin{align}
\label{Equation: Compactly Supported Shorthand Limit}
\lim_{t\to0}\mc F_t(x,y)=\mr e^{-|\ka x|^\mf a-|\ka y|^\mf a}
\end{align}
almost surely. Moreover, for every $z,\bar z\in\mbb R$,
\[|z+\bar z|^{\mf a}\geq|z+\bar z|^{\min\{\mf a,1\}}-1\geq|z|^{\min\{\mf a,1\}}-|\bar z|^{\min\{\mf a,1\}}-1,\]
and therefore
\begin{align}
\label{Equation: min(a,1) Triangle Inequality}
\sup_{t\in(0,1]}&\mc F_t(x,y)^4
\leq \mr \exp\Big(-4|\ka x|^{\min\{\mf a,1\}}-4|\ka y|^{\min\{\mf a,1\}}\Big)\\
\nonumber &\times\exp\Big(4\ka^{\min\{\mf a,1\}}\big(2+\sup_{s\in[0,1]}|B^{0,0}_1(s)|^{\min\{\mf a,1\}}+\sup_{s\in[0,1]}|\bar B^{0,0}_1(s)|^{\min\{\mf a,1\}}\big)\Big).
\end{align}
We recall that the process $s\mapsto|B^{0,0}_1(s)|$ is a Bessel bridge of dimension one (e.g., \cite[Chapter XI]{RevuzYor}).
Thanks to the tail asymptotic in \cite[Remark 3.1]{GruetShi} (the Bessel bridge is denoted by $\rho$ in that paper),
we know that Bessel bridge maxima have finite exponential moments of all orders. Therefore,
since the function $\exp(-|\ka x|^{\min\{\mf a,1\}}-|\ka y|^{\min\{\mf a,1\}})$ is integrable on $\mbb R^2$,
 it follows from the
dominated convergence theorem that
\begin{align}
\label{Equation: Step 4 Case 1 General Final}
\lim_{t\to0}\int_{\mbb R^2}\mbf E[\mc F_t(x,y)^4]^{1/4}\d x\dd y
=\int_{\mbb R^2}\mr e^{-|\ka x|^\mf a-|\ka y|^\mf a}\d x \dd y=\left(\frac{2\Ga(1+1/\mf a)}{\ka}\right)^2=\frac{C_\mf a}{\ka^2}.
\end{align}
Combining \eqref{Equaton: Variance Formula Case 1 - 1}--\eqref{Equation: Step 4 Case 1 General Final}
then yields \eqref{Equation: Final Estimate General} in \textbf{Case 1}.

We now conclude the proof of \eqref{Equation: Final Estimate General}
by showing that the inequality holds also in \textbf{Case 2}.
Since $V(x)\geq|\ka x|^\mf a-\nu$,
\[\mbf E\left[\mr e^{4\mc A_t(x,y)}\right]^{1/4}\leq\mr e^{2\nu t}
\mbf E\left[\mr e^{-4\ka^\mf a\int_0^t\big(\big|X^{x,x}_t(s)\big|^\mf a+\big|\bar X^{y,y}_t(s)\big|^\mf a\big)\d s}\right]^{1/4}.\]
An application of \eqref{Equation: X and B Coupling inequality} then yields
\[\mbf E\left[\mr e^{4\mc A_t(x,y)}\right]^{1/4}\leq 2\mr e^{2\nu t}
\mbf E\left[\mr e^{-4\ka^\mf a\int_0^t\big(\big|B^{x,x}_t(s)\big|^\mf a+\big|\bar B^{y,y}_t(s)\big|^\mf a\big)\d s}\right]^{1/4};\]
hence the proof of \eqref{Equation: Final Estimate General} in \textbf{Case 2}
follows from the same argument used in {\bf Case 1}.

\subsubsection{Proof of \eqref{Equation: Final Estimate Compact 1}}

We recall that \eqref{Equation: Final Estimate Compact 1}
is in the setting of {\bf Case 1}.
By controlling $\mc A_t$ in the same way as \eqref{Equation: A Ineq},
we obtain the bound
\begin{multline}
\label{Equaton: Variance Formula Case 1 - 3 PRE}
\int_{\mbb R^2}\mbf E\left[\mr e^{4\mc A_t(x,y)}\right]^{1/4}\,\mr e^{-\frac{(|x-y|-K)^2}{2ct}}\d x\dd y\\
\leq\mr e^{2\nu t}\int_{\mbb R^2}\mbf E\Big[\mr e^{-4\ka^\mf a\int_0^1(\ms B_{t,x}(s)+ \bar{\ms B}_{t,y}(s))\d s}\Big]^{1/4}\mr e^{-\frac{(|x-y|-K)^2}{2ct}}\d x\dd y,
\end{multline}
where we recall that $\ms B_{t,x}$ and $\bar{\ms B}_{t,y}$ are denoted as
\eqref{Equation: Compactly Supported Shorthand 0}.
By the change of variables $(x,y)\mapsto t^{-1/\mf a}(x,y)$, the integral
on the right-hand side of \eqref{Equaton: Variance Formula Case 1 - 3 PRE}
is bounded above by
\begin{multline}
\label{Equaton: Variance Formula Case 1 - 3}
t^{-2/\mf a}\int_{\mbb R^2}\mbf E[\mc F_t(x,y)^4]^{1/4}\mr e^{-(|x-y|-t^{1/\mf a}K)^2/2ct^{1+2/\mf a}}\d x\dd y\\
=\sqrt{2\pi c}\cdot t^{1/2-1/\mf a}
\int_{\mbb R^2}\mbf E[\mc F_t(x,y)^4]^{1/4}\frac{\mr e^{-(|x-y|-t^{1/\mf a}K)^2/2ct^{1+2/\mf a}}}{\sqrt{2\pi c t^{1+2/\mf a}}}\d x\dd y,
\end{multline}
where we recall that $\mc F_t$ is defined as in \eqref{Equation: Compactly Supported Shorthand}.
Owing to the inequality
\[(|x-y|-t^{1/\mf a}K)^2\geq \min\{(x-y-t^{1/\mf a}K)^2, (x-y+t^{1/\mf a}K)^2\},\]
we have 
\[\mr e^{-(|x-y|-t^{1/\mf a}K)^2/2ct^{1+2/\mf a}}\leq \mr e^{-(x-y-t^{1/\mf a}K)^2/2ct^{1+2/\mf a}} + \mr e^{-(x-y+t^{1/\mf a}K)^2/2ct^{1+2/\mf a}}\]
which yields 
\[\frac{\mr e^{-(|x-y|-t^{1/\mf a}K)^2/2ct^{1+2/\mf a}}}{\sqrt{2\pi c t^{1+2/\mf a}}}
\leq\ms G_{ct^{1+2/\mf a}}(x-y-t^{1/\mf a}K)
+\ms G_{ct^{1+2/\mf a}}(x-y+t^{1/\mf a}K),\]
where we recall that $\ms G_t$ denotes the Gaussian kernel \eqref{Equation: Gaussian Kernel}.
Combining this with \eqref{Equation: min(a,1) Triangle Inequality} and substituting into \eqref{Equaton: Variance Formula Case 1 - 3} then shows that
\begin{multline}
\label{Equation: Case 1 Improved Compactly Supported Bound}
\int_{\mbb R^2}\mbf E\Big[\mr e^{-4\ka^\mf a\int_0^1(\ms B_{t,x}(s)+ \bar{\ms B}_{t,y}(s))\d s}\Big]^{1/4}\mr e^{-\frac{(|x-y|-K)^2}{2ct}}\d x\dd y\\
\leq C_\mf at^{1/2-1/\mf a}
\bigg(\int_{\mbb R^2}\mr e^{-|\ka x|^{\min\{\mf a,1\}}-|\ka y|^{\min\{\mf a,1\}}}\ms G_{ct^{1+2/\mf a}}(x-y-t^{1/\mf a}K)\d x\dd y\\
+\int_{\mbb R^2}\mr e^{-|\ka x|^{\min\{\mf a,1\}}-|\ka y|^{\min\{\mf a,1\}}}\ms G_{ct^{1+2/\mf a}}(x-y+t^{1/\mf a}K)\d x\dd y\bigg).
\end{multline}
Owing to a change of variables and the fact that the Gaussian kernel is an approximate identity,
the integrals in the right-hand side of \eqref{Equation: Case 1 Improved Compactly Supported Bound}
have the following limits by dominated convergence:
\begin{multline*}
\lim_{t\to0}\int_\mbb R\mr e^{-|\ka(x\pm t^{1/\mf a}K)|^{\min\{\mf a,1\}}}\left(\int_\mbb R\mr e^{-|\ka y|^{\min\{\mf a,1\}}} \ms G_{ct^{1+2/\mf a}}(x-y)\d y\right)\d x\\
=\int_\mbb R\mr e^{-2|\ka x|^{\min\{\mf a,1\}}}\d x=\frac{2^{1-1/\min\{\mf a,1\}} \Gamma \left(1+1/\min\{\mf a,1\}\right)}{\ka}=\frac{C_\mf a}{\ka}.
\end{multline*}
Combining this last result with \eqref{Equaton: Variance Formula Case 1 - 3 PRE} and
\eqref{Equation: Case 1 Improved Compactly Supported Bound}
concludes the proof of \eqref{Equation: Final Estimate Compact 1}.

\subsubsection{Proof of \eqref{Equation: Final Estimate Compact 2}}

We now conclude the proof of Lemma \ref{Lemma: Final Estimates}
by establishing the estimate \eqref{Equation: Final Estimate Compact 2},
which we recall is in the setting of {\bf Case 2}.
To prove this, we simply note that for any function $F$ and $x>0$, we have that
\begin{align*}
\int_0^\infty F(x,y)\Big(\mr e^{-\frac{(|x-y|-K)^2}{2ct}}+\mr e^{-\frac{(|x+y|-K)^2}{2ct}}\Big)\d y
=\int_\mbb R F(x,|y|)\mr e^{-\frac{(|x-y|-K)^2}{2ct}}\d y,
\end{align*}
and thus \eqref{Equation: Final Estimate Compact 2} is an immediate
consequence of \eqref{Equation: Final Estimate Compact 1}.
With Lemma \ref{Lemma: Final Estimates} established,
along with Lemmas \ref{Lemma: Variance Formula}--\ref{Lemma: D Bound 2},
the proof of Theorem \ref{Theorem: Variance Bound} is now fully complete.

\section{Airy-$2$ Process Counterexample}
\label{Section: Optimality}

In this section, we prove Proposition \ref{Proposition: Optimality 2}.
For every $\be>0$, let $\xi_\be$ be a Gaussian white noise with variance $1/\be$,
and define the operator
\[\hat{\mc H}^{(\be)}_{(0,\infty)}:=-\tfrac12\De+\tfrac x2+\xi_\be,\]
with a Dirichlet boundary condition at zero.
The RSO $2\hat{\mc H}^{(\be)}_{(0,\infty)}$ is widely known in the literature as the
{\bf Stochastic Airy Operator} (e.g., \cite{EdelmanSutton,RamirezRiderVirag}), and
we recall that for every $\be>0$, the {\bf Airy-${\bs\be}$ point process}, which we denote by $\mf{Ai}_\be$,
is defined as the eigenvalue point process of $-2\hat{\mc H}^{(\be)}_{(0,\infty)}$.

When $\be=2$, the Airy-$\be$ process has an alternative integrable
interpretation, namely, $\mf{Ai}_2$ is the determinantal point process
induced by the {\bf Airy kernel}
\begin{align}
\label{Eq:Airy Kernel}
\mf K(x,y):=
\begin{cases}
\displaystyle
\frac{\mr{Ai}(x)\mr{Ai}'(y)-\mr{Ai}(y)\mr{Ai}'(x)}{x-y}&\text{if }x\neq y
\vspace{5pt}\\
\mr{Ai}'(x)^2-x\mr{Ai}(x)^2&\text{if }x=y,
\end{cases}
\end{align}
where $\mr{Ai}$ denotes the Airy function
\[\mr{Ai}(x):=\frac1\pi\int_0^\infty\cos\left(\frac{u^3}{3}+xu\right)\d u,\qquad x\in\mbb R.\]

Let us denote $f_t(x):=\mr e^{tx}$ for every $t>0$.
By standard formulas for the variance of linear statistics of Determinantal point processes (e.g., \cite[Equation (8)]{G15}),
we have that\footnote{We note that the variance formula in question is typically only stated for compactly supported functions.
The result can easily be improved to \eqref{Equation: Airy Exponential Variance Formula} by using dominated convergence
with standard asymptotics for the Airy function such as \cite[10.4.59--10.4.62]{AbramowitzStegun}.}
\begin{align}\label{Equation: Airy Exponential Variance Formula}
\mbf{Var}\big[\mr{Tr}[\mr e^{-2t\hat{\mc H}^{(2)}_{(0,\infty)}}]\big]=\mbf{Var}[\mf{Ai}_2(f_t)]=\frac12\int_{\mbb R^2}\big(\mr e^{tx}-\mr e^{ty}\big)^2\,\mf K(x,y)^2\d x\dd y.
\end{align}
By expanding the square and using the identity $\mf K(x,x)=\int_{\mbb R^2}\mf K(x,y)^2\d y$
(since $\mf K$ is a symmetric projection kernel \cite[Lemma 2]{TracyWidom}), we can reformulate
this to
\[\mbf{Var}[\mf{Ai}_2(f_t)]=\int_{\mbb R}\mr e^{2tx}\,\mf K(x,x)\d x-\int_{\mbb R^2}\mr e^{t(x+y)}\,\mf K(x,y)^2\d x.\]

The computation that follows is essentially taken from
\cite{Okounkov}. We provide the full details for the reader's convenience.
Rewrite the Airy kernel as 
\[\mf K(x,y)=\int_{0}^{\infty} \mr{Ai}(u+x)\mr{Ai}(u+y) \d u\]
Then, using Fubini's theorem, we can write \eqref{Equation: Airy Exponential Variance Formula} as the difference $E_1(t)-E_2(t)$,
where 
\[E_1(t):=\int_{\mathbb{R}}\mr e^{2tx}\left(\int_{0}^{\infty} \mathrm{Ai}(u+x)^2 \d u\right)\d x\\
= \int_{0}^{\infty}\left(\int_{\mathbb{R}} \mr e^{2tx} \mathrm{Ai}(u+x)^2 \d x\right) \d u,\]
and 
\begin{align*}
E_2(t)&:=\int_{\mathbb{R}^2}\mr e^{t(x+y)}\left(\int_{0}^{\infty}\int_{0}^{\infty} \mathrm{Ai}(u+x)\mathrm{Ai}(u+y)\mathrm{Ai}(v+x)\mathrm{Ai}(v+y) \d u\dd v\right)\d x\dd y\\
&= \int_{0}^{\infty}\int_{0}^{\infty}\left(\int_{\mathbb{R}} \mr e^{tx} \mathrm{Ai}(u+x)\mathrm{Ai}(v+x) \d x\right)^2 \d u\dd v.
\end{align*}

We note that the application of Fubini in $E_1(t)$ is justified since the integrand is nonnegative,
and in $E_2(t)$ it suffices to check 
\begin{align*}
\int_{0}^{\infty}\int_{0}^{\infty} \left(\int_{\mathbb{R}}\mr e^{tx}\left\vert\mathrm{Ai}(u+x)\mathrm{Ai}(v+x) \right\vert \d x\right)^2\d u\dd v<\infty.
\end{align*}
For this, we recall the formula
\begin{align}
\label{eq:Airy Laplace}
\int_{\mathbb{R}} \mr e^{tx} \mr{Ai}(x+u) \mr{Ai}(x+v) \d x= \frac{1}{2\sqrt{\pi t}} \exp\left(\frac{t^3}{12}-\frac{u+v}{2}t-\frac{(u-v)^2}{4t}\right)
\end{align}
from \cite[Lemma 2.6]{Okounkov}, and note that by Cauchy-Schwarz, we have
\begin{align*}
\int_{\mathbb{R}} \mr e^{tx}\left\vert \mathrm{Ai}(u+x)\mathrm{Ai}(v+x)\right\vert dx
&\leq \left(\int_{\mathbb{R}} \mr e^{tx} \mathrm{Ai}(u+x)^2 dx\right)^{1/2}\left(\int_{\mathbb{R}} \mr e^{tx}\mathrm{Ai}(v+x)^2 dx\right)^{1/2}\\
&=\frac{1}{2\sqrt{\pi t}}\exp\left(\frac{t^3}{12}-\frac{u+v}{2}t\right)
\end{align*}
as desired.

With $E_1(t)$ and $E_2(t)$ established,
an application of \eqref{eq:Airy Laplace} yields
\[E_1(t)
=\int_0^\infty \frac{\exp\left(\frac{2t^3}{3}-2tu\right)}{2\sqrt{2\pi t}}\d u
= \frac{\mr e^{\frac{2 t^3}{3}}}{4 \sqrt{2 \pi } t^{3/2}}\]
and
\begin{align*}
E_2(t)=\int_{0}^{\infty}\int_0^{\infty} \frac{\exp\left(\frac{t^3}{6}-(u+v)t-\frac{(u-v)^2}{2t}\right)}{4\pi t}\d u\dd v
=\frac{\mr e^{\frac{2 t^3}{3}}}{4 \sqrt{2 \pi } t^{3/2}}\left(1-\mr{erf}\left(\frac{t^{3/2}}{\sqrt{2}}\right)\right),
\end{align*}
where $\mr{erf}(z):=\frac2{\sqrt\pi}\int_0^z\mr e^{-w^2}\d w$ denotes the error function.
Thus 
\[\lim_{t\to 0}E_1(t)-E_2(t) =
\lim_{t\to 0} \frac{\mr e^{\frac{2 t^3}{3}}}{4 \sqrt{2 \pi } t^{3/2}}\mr{erf}\left(\frac{t^{3/2}}{\sqrt{2}}\right)\\
= \frac{1}{4\pi},\]
concluding the proof of Proposition \ref{Proposition: Optimality 2}.

\appendix

\section{Transition Density Bounds}\label{Appendix: Some Stoch Analysis}

\begin{proposition}
There exists constants $0<c<C$ such that for every $t\in(0,1]$,
\begin{align}\label{Equation: Transition Bounds}
ct^{-1/2}\leq\inf_{x\in I}\Pi_Z(t;x,x)
\qquad\text{and}\qquad
\sup_{(x,y)\in I^2}\Pi_Z(t;x,y)\leq Ct^{-1/2}.
\end{align}
\end{proposition}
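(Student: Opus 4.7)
The plan is to verify both inequalities directly from the explicit formulas for $\Pi_Z$ given in Definition~\ref{Definition: Local Time Etc}, handling each of the three cases separately but in a uniform manner. Throughout, I will use that $\mathscr{G}_t(0) = (2\pi t)^{-1/2}$ and that $\mathscr{G}_t(\cdot)$ is maximized at $0$.

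For the \textbf{lower bound}, the key observation is that in every case, $\Pi_Z(t;x,x)$ contains $\mathscr{G}_t(0)$ as one of its (nonnegative) summands: in \textbf{Case 1} this is obvious; in \textbf{Case 2} it is the $\mathscr{G}_t(x-y)$ term evaluated at $y = x$; and in \textbf{Case 3} it is the $k=0$ term of $\sum_{z \in 2b\mathbb{Z} + x}\mathscr{G}_t(x-z)$. Since all other summands are nonnegative, we obtain $\Pi_Z(t;x,x) \geq \mathscr{G}_t(0) = (2\pi t)^{-1/2}$ uniformly in $x \in I$, which gives the lower bound with $c = (2\pi)^{-1/2}$.

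For the \textbf{upper bound}, in \textbf{Cases 1 \& 2} we simply bound each Gaussian summand by $\mathscr{G}_t(0)$, obtaining $\sup \Pi_B(t;x,y) \leq (2\pi t)^{-1/2}$ and $\sup \Pi_X(t;x,y) \leq 2(2\pi t)^{-1/2}$. The only nontrivial case is \textbf{Case 3}, where I would split
\begin{equation*}
\Pi_Y(t;x,y) = \sum_{k\in\mathbb{Z}} \mathscr{G}_t(x - 2bk - y) + \sum_{k\in\mathbb{Z}} \mathscr{G}_t(x - 2bk + y),
\end{equation*}
separate out the two terms corresponding to $k=0$ (each bounded by $\mathscr{G}_t(0)$), and control the remainder. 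Since $x,y \in (0,b)$, for every $k \neq 0$ we have $|x - 2bk \pm y| \geq (2|k|-1)b$, so each remaining Gaussian is bounded by $(2\pi t)^{-1/2} \exp(-((2|k|-1)b)^2/2t)$. For $t \in (0,1]$ the exponent satisfies $-((2|k|-1)b)^2/(2t) \leq -((2|k|-1)b)^2/2$, so the tail sum is dominated by a convergent geometric-type series independent of $t$, yielding a uniform bound of the form $C t^{-1/2}$.

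Neither step should present a real obstacle, since everything reduces to elementary bounds on Gaussian kernels; the only mildly technical point is the uniform-in-$t$ control of the image-sum tail in \textbf{Case 3}, which is handled by the monotonicity of $t \mapsto e^{-a^2/2t}$ on $(0,1]$ for fixed $a \neq 0$.
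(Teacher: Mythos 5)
Your approach is essentially the same as the paper's: read off the lower bound from the $\mathscr{G}_t(0)$ summand, and handle the \textbf{Case 3} upper bound by using the monotonicity of $t\mapsto e^{-z/t}$ on $(0,1]$ to control the image-sum tail. However, your \textbf{Case 3} tail estimate contains an error that would cause the step to fail as written. The claimed inequality $|x-2bk+y|\geq(2|k|-1)b$ for all $k\neq 0$ is false: taking $k=1$, the quantity $|x-2b+y|=|2b-(x+y)|$ can be made arbitrarily small as $x,y\uparrow b$, while your bound asserts it is at least $b$. (More generally, for $k\geq 1$ the correct infimum of $|x-2bk+y|$ over $(x,y)\in(0,b)^2$ is $(2k-2)b$, not $(2k-1)b$.) The underlying issue is that among the image points $2b\mathbb{Z}\pm y$, \emph{three} can lie close to $x\in(0,b)$ --- namely $y$, $-y$, and $2b-y$ --- and isolating only the two $k=0$ terms leaves $\mathscr{G}_t(x-2b+y)$ inside your ``tail,'' where it is not uniformly small in $t$. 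Consequently the claim that the remainder is dominated by a $t$-independent convergent series does not hold for the decomposition you propose.

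The fix is minor: additionally isolate the $k=1$ term of the second sum, i.e.\ $\mathscr{G}_t(x-2b+y)$, bounding all three isolated terms by $\mathscr{G}_t(0)$; every remaining summand then has argument of modulus at least $b$ (and in fact growing linearly in $|k|$), at which point your monotonicity-in-$t$ argument gives a convergent, $t$-independent tail sum. The paper avoids this bookkeeping altogether by not attempting a term-by-term distance estimate: it observes directly that the full theta-type sum, with $t$ replaced by $1$, has a finite supremum over $(x,y)\in(0,b)^2$, and that the sum for general $t\in(0,1]$ is bounded above by this constant.
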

\begin{proof}
In \textbf{Case 1}, the result follows directly from the fact that
$\Pi_B(t;x,y)\leq1/\sqrt{2\pi t}$ and
$\Pi_B(t;x,x)=1/\sqrt{2\pi t}$ for all $x,y$ and $t$.
A similar argument holds for \textbf{Case 2}.
Consider now \textbf{Case 3}.
We recall that, by definition,
\[\Pi_Y(t;x,y):=\sum_{z\in2b\mbb Z\pm y}\ms G_t(x-z)=\frac{1}{\sqrt{2\pi t}}\left(\sum_{k\in\mbb Z}\mr e^{-(x-2bk+y)^2/2t}+\mr e^{-(x-2bk-y)^2/2t}\right).\]
On the one hand, note that $t\mapsto\mr e^{-z/t}$ is increasing in $t>0$ for every $z\geq0$; hence for every $t\in(0,1]$, one has
\begin{multline*}
\sup_{(x,y)\in(0,b)^2}\left(\sum_{k\in\mbb Z}\mr e^{-(x-2bk+y)^2/2t}+\mr e^{-(x-2bk-y)^2/2t}\right)\\
\leq\sup_{(x,y)\in(0,b)^2}\left(\sum_{k\in\mbb Z}\mr e^{-(x-2bk+y)^2/2}+\mr e^{-(x-2bk-y)^2/2}\right)<\infty.
\end{multline*}
On the other hand, by isolating the $k=0$ term in $\sum_{k\in\mbb Z}\mr e^{-(2bk)^2/2t}$,
\[\inf_{x\in(0,b)}\left(\sum_{k\in\mbb Z}\mr e^{-(2x-2bk)^2/2t}+\mr e^{-(2bk)^2/2t}\right)
\geq \left(\inf_{x\in(0,b)}\sum_{k\in\mbb Z}\mr e^{-(2x-2bk)^2/2t}\right)+1\geq1,\]
concluding the proof.
\end{proof}

\bibliographystyle{plain}
\bibliography{Bibliography}
\end{document}